\newtheorem{definition}{Definition}[section]
\newtheorem{theorem}{Theorem}[section]
\newtheorem*{conjecture}{Conjecture}
\newtheorem{proposition}{Proposition}[section]
\newtheorem{lemma}[theorem]{Lemma}
\def\be{\begin{equation}}
\def\ee{\end{equation}}
\def\Tr{{\rm Tr}}
\def\IC{\mathbb{C}}
\def\IN{\mathbb{N}}
\def\IQ{\mathbb{Q}}
\def\IR{{\mathbb{R}}}
\def\IZ{{\mathbb{Z}}}
\def\T{{\mathsf{T}}}
\def\CA{{\cal A}}
\def\CB{{\cal B}}
\def\CC {{\cal C}}
\def\CD {{\cal D}}
\def\CF {{\cal F}}
\def\CW {{\cal W}}
\def\CZ {{\cal Z}}
\def\CH {{\cal H}}
\def\CB {{\cal B}}
\def\CS {{\cal S}}
\def\CZ{{\cal Z}}
\def\la{\langle}
\def\ra{\rangle}
\def\one{{\hbox{ 1\kern-.8mm l}}}
\def\be{\bar{e}}
\def\be{ \begin{equation} }
\def\ee{ \end{equation}}
\def\sD{\mathscr{D}}
\def\l{\Lambda}
\title{When Does A Three-Dimensional Chern-Simons-Witten Theory Have A Time Reversal Symmetry? }
\author{Roman Geiko and Gregory W.~Moore}
                                           \affiliation{Department of Physics and Astronomy, Rutgers University ,\\
                                           126 Frelinghuysen Rd., Piscataway NJ 08855, USA}
\emailAdd{roman.geiko@physics.rutgers.edu} \emailAdd{gmoore@physics.rutgers.edu}
\abstract{In this paper, we completely characterize time-reversal invariant three-dimensional Chern-Simons gauge theories with torus gauge group. At the level of the Lagrangian, toral Chern-Simons theory is defined by an integral lattice, while at the quantum level, it is entirely determined by a quadratic function on a finite Abelian group and an integer mod $24$.  We find that quantum time-reversally symmetric theories can be defined by classical Lagrangians defined by integral lattices which have  self-perpendicular embeddings into a unimodular lattice. 
We find that the quantum toral Chern-Simons theory admits a time-reversal symmetry iff the higher Gauss sums of the associated modular tensor category are real. We conjecture that the reality of the higher Gauss sums is necessary and sufficient for a general non-Abelian Chern-Simons to admit quantum T-symmetry. 
\newline 
\keywords{TQFT, time-reversal}}
\begin{document}
\maketitle



\section{Introduction}

Three-dimensional Chern-Simons theory is one of the most interesting and well-studied topological quantum field theories. It has a wide range of applications, from the mathematics of invariants of three-manifolds and knots to the physics of the fractional quantum Hall effect and topological quantum computation. In a series of papers beginning with \cite{Seiberg2016} it was observed that some Chern-Simons theories admit nontrivial time-reversal symmetries\footnote{Topological liquids with a transparent time-reversal symmetry were discussed in \cite{Neupert2011, Neupert2014} and their Chern-Simons description in \cite{Neupert2011S, Huan2016}.}. 
See \cite{Lee:2018eqa, Delmastro:2019vnj} for summaries of known results. In the present paper we completely classify time-reversal invariant (spin) Chern-Simons theories with torus gauge group. 

The basic idea of the classification is very simple (it is reviewed with some more care in the next section). It was demonstrated in \cite{Belov:2005ze} that the 2-3 quantum theory is completely determined by a choice of finite Abelian group $\sD$ (which can be viewed as the group of anyons, or as the group of so-called ``one-form symmetries''), together with a quadratic function on it 
$\hat q: \sD \to \mathbb{Q}/\mathbb{Z}$. A quadratic function is one for which 
\be\label{eq:QuadraticFunction} 
b(x,y):= \hat q(x+y) - \hat q(x) - \hat q(y) + \hat q(0) 
\ee
is bilinear form on $\sD$. (We will assume that the bilinear form is always non-degenerate.) 
The pair $({\sD}, \hat{q} )$ is subject to a suitable equivalence relation recalled in the next section. 
The result of \cite{Belov:2005ze} was extended to the fully extended 0-1-2-3 theory in \cite{Freed:2009qp}. 
The time-reversal invariant theories are simply those such that $({\sD}, \hat q )$ is 
equivalent to $({\sD},-\hat{q})$. We classify such pairs of time-reversally-symmetric group-with-quadratic-function by building on the work of  Wall \cite{Wall1963QuadraticFO}, Miranda \cite{Miranda1984,Miranda2009},  and Kawauchi and Kojima \cite{Kawauchi1980}.
The main result is that a pair is time-reversal if and only if all the higher Gauss sums associated to the quadratic function $\hat{q}$ are real. In the language of (super-) modular tensor categories, we find all ``pointed'' (super-) MTCs that are equivalent to their time-reversal.

It is of course interesting to extend the result to Chern-Simons theories with non-Abelian gauge group. 
We propose that the proper criterion is that the associated modular tensor category is equivalent to 
its reverse, where the braiding transformation has been time-reversed. (This criterion has also been 
proposed in  \cite{Lee:2018eqa,Kong:2022cpy}.) Phrasing the problem this way suggests that a simple criterion is that the modular tensor category should have real higher Gauss sums defined in \cite{NG2019}. We conjecture that any MTC with all higher Gauss sums real correspond to a time-reversal Chern-Simons theory. These higher Gauss sums have played a role in some recent developments in TQFT  \cite{Freed2021,kaidi2021higher}.

This paper is organized as follows. In Section \ref{Conventions}, we set the notation, introduce key definitions and formulate the main problem. In Section \ref{FinForms}, we review the classification of finite forms $(\sD,b)$,
and we describe the forms isomorphic to their negation. We then review the Witt group of finite bilinear forms. In Section \ref{QuantumData}, we assemble time-reversal quantum data from finite forms and discuss the groups of anti-unitary symmetries they enjoy. In Section \ref{Lattices}, we describe Lagrangians whose quantum data is time-reversal symmetric and partially describe classically symmetric theories. In Section \ref{MTC:section}, we review modular tensor categories and formulate the conjecture about the time reversal in non-Abelian Chern-Simons. We summarize our main results in Section \ref{Discussion} and list some open problems. We review some auxiliary maps between finite forms in Appendix \ref{RhoMaps}, Seifert invariants in Appendix \ref{SeifertInv}, signature invariants in Appendix \ref{SignatureInv}, and review the categorical Witt group in Appendix \ref{WittAppendix}.

This paper is dedicated to the memory of Krzysztof Gawedzki. One of us (GM) had the privilege of enjoying many excellent scientific interactions with Krzysztof. He will be sorely missed. 

\section{Toral Chern-Simons-Witten Theory and Finite Forms}
\label{Conventions}
In this letter, we analyze how the toral Chern-Simons theories behave under the time-reversal transformation and study the condition for this family of topological quantum field theories to be time-reversal invariant. The definition of the $U(1)^{N}$ Chern-Simons gauge theory requires a choice of a rank $N$ integral symmetric bilinear form $K$ such that the CS functional informally reads
\begin{equation}
\label{3dCS}
CS_{\l,K}=\frac{1}{4\pi}\int_{X}K_{IJ}A_I\wedge  dA_J\;,
\end{equation}
where $K_{IJ}=K(e_I,e_J)$ is the matrix of $K$ in the given basis for a lattice $\Lambda$. The gauge fields are normalized so that $F/2\pi$ has integral periods.  This theory requires a choice of spin-structure for $X$ if $K$ is odd and does not require this extra choice if $K$ is even. In the former case, the theory is called spin, while in the latter, it is called bosonic. For a proper definition of the functional \eqref{3dCS}, and related matters, see \cite{Belov:2005ze,stirling2008abelian}. Ultimately, $K$ is an element of $H^4(BT, \IZ)$ where $T$ is the Abelian torus. Due to the isomorphism between $H^4(BT, \IZ)$ and the space of integral symmetric bilinear forms, will mean by $K$ the corresponding object in the space forms.    

 To begin, we set some notation. By a lattice $(\l,K)$, we mean a finitely generated free $\IZ$-module $\Lambda$ together with the bilinear form $K$ on it. Here and everywhere below, the form $K$ is assumed to be integral, symmetric, generically odd, and non-degenerate. If it does not lead to confusion, the lattice is denoted simply by $K$. As a gauge theory, $CS_{\l,K}$ does not depend on the choice of basis for $\l$, it rather depends on the isomorphism class of $(\l,K)$ in the space of integral lattices. Classification of integral lattices is a notoriously complicated task, however, as we shall review, the full quantum theory defined by $(\l,K)$ depends on a set of invariants that is much coarser than the isomorphism class. We proceed to a review of the classification of quantum toral Chern-Simons theories and associated lattice invariants first discussed in \cite{Belov:2005ze}.

 As a free module, $\l$ can be embedded in the dual module $\l^{*}=\mbox{Hom}(\Lambda,\IZ)$ by using the non-degenerate form $K$: in terms of the original lattice, the dual lattice reads $\l^*=\{x\in \l\otimes \IQ \,|\, K(x,\mu)\in \IZ\,\; \mbox{for all}\;\mu\in \l\}$. Further, the form $K$ can be extended to the whole $\l^{*}$ by $\IQ$-linearity, giving rise to a form on the dual module:
\begin{align}
    K^{*}:\;\l^{*}\times \l^{*}\to \IQ\;.
\end{align}
 Then, we define the discriminant group $\sD=\l^{*}/\l$, it is a finite Abelian group of order $|\sD|=|\det K|$ which can be equipped with the following torsion form:
\begin{align}
\label{bfinite}
b: \sD\times \sD&\to \IQ/\IZ\,,\\
b(x+\Lambda,y+\Lambda)&=K^{*}(x,y) \;\mbox{mod}\;1\,.
\end{align}
We call the pair consisting of a symmetric non-degenerate bilinear form $b$ and a finite Abelian group $\sD$ simply by a finite bilinear form: the discriminant form on $\sD$ is a finite bilinear form as long as $K$ is non-degenerate.

Given a lattice $(\l,K)$, a quadratic refinement for it is a function $\hat Q: \l \to \IZ$ satisfying 
\begin{align}
\hat Q(\mu +\nu)-\hat Q(\mu)-\hat Q(\nu)+\hat Q(0)=K(\mu,\nu)\quad \mbox{for all} \quad \mu,\nu \in \l\,.
\end{align}
A quadratic form is a quadratic refinement also satisfying $Q(n \mu)=n^2 Q(\mu)$ for all $n$.  Analogously, we define a quadratic refinement for finite bilinear form $(\sD,b)$: this is a function $\hat q: \sD\to \IQ/\IZ$ satisfying
\begin{align}
\label{QuRef}
\hat q(x+y)-\hat q(x)-\hat q(y)+\hat q(0)=b(x,y)\quad \mbox{for all}\quad  x,y\in \sD\,,
\end{align}
while a quadratic form $q$ for $b$ is a quadratic refinement that also satisfies $q(nx)=n^2q(x)$ (alternatively, it must satisfy $q(-x)=q(x)$ and $q(0)=0$). Note, the quadratic refinements that are not necessarily quadratic forms will be distinguished by a hat, like $\hat Q$ and $\hat q$, while quadratic forms are always unhatted, like $Q$ and $q$. The standard way to obtain a quadratic form for the even lattice $K$ is to define $Q(\mu)=K(\mu,\mu)/2$. If $K$ is odd, then $K(\mu,\mu)/2$ is not always an integer, and quadratic form refining $K$ does not exist. In order to deal with this problem, we define a quadratic refinement by picking a characteristic vector $W\in\l^{*}$, i.e., a vector satisfying 
\begin{align}
K(W,\mu)\equiv K(\mu,\mu)\;\mbox{mod}\;2\quad \mbox{for all}\quad \mu \in \l\;.
\end{align}
For any choice of $W$, $K(\mu,\mu-W)/2$ is a quadratic refinement of $K$. Moreover, we can shift this refinement by any constant. It is convenient to define the standard quadratic refinement for  $K$ odd as $\hat Q_W(\mu)=K(\mu,\mu\!-\!W)/2+K(W,W)/8$. This definition depends on the choice of $W$ in a very mild way. One checks that for any $W'\in \l^{*}$, there exists an element $\mu'\in \l$ such that $Q_{W+2W'}(\mu)=Q_{W}(\mu-\mu')$. Therefore, up to a shift of the argument, quadratic refinements defined above depend only on the class $[W]\in \l^{*}/2\l^{*}$. In the bosonic (even) case, this class can be chosen $0$, while in the spin (odd) case, a choice of non-trivial class $[W]$ is required. The refinement $\hat Q_{W}(\mu)$ descends to a refinement $\hat q_W(x)$ on the discriminant group in the full analogy with the discriminant bilinear form.

Another invariant of  the integral lattice $K$ is its signature $\sigma$ --  the difference between the numbers of positive and negative eigenvalues of $K$ diagonalized over $\IQ$. The Gauss-Milgram formula (see, e.g. \cite{Taylor, Hopkins2005}) states that the Gauss sum for the standard quadratic refinement $\hat q_W$, obtained from the lattice $(\l,K)$, satisfies 
\begin{equation}
\label{GaussMilgram}
\frac{1}{\sqrt{\sD}}\sum_{x\in \sD}e^{2\pi i \hat q_{W}(x)}=e^{2\pi i \sigma/8}\,.
\end{equation}

Let us define relations that will play a key role in our further discussion. Notice that a quadratic (bilinear) form precomposed with a group automorphism is another quadratic (bilinear) form. Since it is unreasonable to distinguish between isomorphic groups, we give the following definition of isomorphic quadratic (bilinear) forms on lattices and finite Abelian groups. 
\begin{definition}
Integral lattices $(\l,K)$ and $(\l',K')$ are isomorphic $(\l,K)\cong (\l',K')$ if there exists an isomorphism of free modules $F:\l\to \l'$ such that $K=K'\circ F$.
\end{definition}
\begin{definition}
\label{DefFinForms}
Finite bilinear forms $(\sD,b)$ and $(\sD',b')$ are isomorphic $(\sD,b)\cong(\sD',b')$ if there exists a group isomorphism $f:\sD\to \sD'$ such that $b=b'\circ f$.

Finite quadratic forms $(\sD,q)$ and $(\sD',q')$ are isomorphic $(\sD,q)\cong(\sD',q')$ if there exists a group isomorphism $f:\sD\to \sD'$ such that $q=q'\circ f$.
\end{definition}

There is an extra transformation which is allowed for quadratic refinements but not for quadratic forms.  Notice that the shift of the argument maps a quadratic form to a quadratic refinement. On the contrary, a shift of the argument of a quadratic refinement
 \begin{equation}
\label{shiftedQ}
    \hat q(x)\mapsto \hat q'(x)=\hat q(x-\delta)=\hat q(x)+\hat q(-\delta)+b(x,-\delta)-\hat q(0)
\end{equation}
gives another quadratic refinement refining the same bilinear form $b$ as $\hat q$ refines. Motivated by this observation, we introduce the following equivalence relation
 \begin{definition}
 \label{def:QuRefsEq}
Quadratic refinements $\hat q$ and $\hat q'$ defined on $\sD$  are equivalent $\hat q\;\sim \; \hat {q}'$ if there exists $\delta \in \sD$ such that $\hat q(x)=\hat q'(x+\delta)$ for all $x\in \sD$. The equivalence class of $\hat q$ is denoted by $\{\hat q\}$.
\end{definition}

With this new notation, we reformulate the previous result: given an integral lattice $(\l,K)$, the class of the quadratic refinement $\{q_{W}\}$ does not depend on the choice of the representative of the class $[W]\in \l^{*}/2\l^{*}$.

\subsection{Quantization of lattices}
In the previous section, we defined various invariants associated with integral lattices. Let us review the role those invariants play in the description of the quantum Abelian Chern-Simons-Witten theory. 

It was shown in \cite{Belov:2005ze} that the classical spin Chern-Simons theory, defined by the lattice $(\l,K)$, upon quantization becomes a theory which depends only on the quartet of lattice invariants $(\sD,\,b,\,\{\hat q_{W}\},\, \sigma\, \mbox{mod}\, 24)$ --- all topological invariants of manifolds assigned by this TQFT can be expressed through these four invariants. Note that \cite{Belov:2005ze} only classified torus CSW theory as a non-extended $\operatorname{2-3}$ spin TQFT, while torus CSW as a fully-extended $\operatorname{0-1-2-3}$ TQFT was analyzed in \cite{Freed:2009qp,Henriques}. 

Observe that keeping $b$ in the quartet is not necessary as $b$ can always be reconstructed from $\hat q$ according to \eqref{QuRef}. Nevertheless, we keep this redundant data in the quartet for a reason: many functions can refine the same finite form, and we will be finding refinements after fixing a finite form. In other words, we keep the redundancy to be aligned with the logic of this paper. 

The process of quantization of a toral CS theory can be seen as the following mapping:
\begin{align}
(\l,K)\mapsto ({\sD,\,b,\,\{\hat q_{W}\},\, \sigma\, \mbox{mod}\, 24})\,,
\end{align}
 called ``quantization of lattice'' in \cite{stirling2008abelian}. While the quantization is far from injective, it is surjective. It was shown by Wall in \cite{Wall1963QuadraticFO} that any finite form $(\sD,b)$ is a discriminant form for some integral lattice. Moreover, it was shown in Proposition 3.60 of \cite{stirling2008abelian} that there exists a unique class $[W] $ such that $\{q\}=\{q_{W}\}$. Therefore, any abstract quartet subject to \eqref{GaussMilgram} lifts to a (generically odd) lattice. If the quartet contains a refinement that is also a quadratic form, then such a quartet lifts to an even lattice. See \cite{Wang2020} for examples of such a reconstruction in the bosonic case. 

In summary, quantum toral CSW theory is defined by a quartet $ ({\sD,\,b,\,\{\hat q_{W}\},\, \sigma\, \mbox{mod}\, 24})$. The following isomorphism relation is natural.

\begin{definition}
Quartets $(\sD,\,b,\,\{\hat q\},\, \sigma\, \mbox{mod}\;24)$ and  $(\sD',\,b',\{\,\hat q'\},\, \sigma'\, \mbox{mod}\; 24)$ are isomorphic if there exists a group isomorphism $f:\sD\to \sD'$ such that $\{\hat q\}=\{\hat q'\circ f\}$,  $b=b'\circ f$, and $\sigma\equiv\sigma'$ mod $24$.  
\end{definition}

In the next section, we define the action of time-reversal on classical data and formulate the condition for quantum and classical time-reversal invariance.

\subsection{Time-reversal transformation} 
\label{Time-reversalDef}
Classically, a time-reversal transformation ($\mathsf{T}$-transformation) acts on the CS functional \eqref{3dCS} by changing the sign of the volume form. Equivalently, the $\mathsf{T}$-transformation acts as a negation of the bilinear form:
\begin{align}
    \mathsf{T}: (\l,K)\mapsto (\l,-K)\,.
\end{align}
The $\mathsf{T}$-transformation can be accompanied with a lattice automorphism, and, in some cases, the former can be undone with the latter.
\begin{definition}
The classical CS functional defined through the (isomorphism class) of lattices $(\l,K)$ is \underline{classically} time-reversal invariant ($\T$-invariant) if $(\l,K)\cong (\l,-K)$ as integral lattices.
\end{definition}
 If $(\sD,\,b,\,\{\hat q\},\, \sigma\, \mbox{mod}\, 24)$ is the quantization of $(\l,K)$, then $(\sD,-b,\{-\hat q\},  -\sigma\, \mbox{mod}\, 24)$ is the quantization of $(\l,-K)$. Along with the classical $\T$ invariance, we can introduce the quantum $\T$-invariance defined solely in terms of the lattice invariants. 
 
 \begin{definition}
 The quantum toral CS theory specified by the quartet $(\sD,\,b,\,\{\hat q\},\, \sigma\, \mbox{mod}\;24)$ is \underline{quantum} $\mathsf{T}$-invariant if this quartet is isomorphic to $(\sD,\,-b,\,\{-\hat q\},\, -\sigma\, \mbox{mod}\;24)$. Such a quartet is called $\T$-symmetric.
 \end{definition}
 A $\T$-symmetric quartet, in particular, consists of a finite form $(\sD,b)$ and a refinement $\hat q$ that are isomorphic to their negation. These terms deserve their own definition. 
 
  \begin{definition}
  \label{TformTref}
A finite bilinear form $(\sD,b)$ is a bilinear $\T$-form if $(\sD,b)\cong (\sD,-b)$. A quadratic refinement $\hat q$ on $\sD$ is a $\T$-refinement if there exists an automorphism $\gamma$ of $\sD$ such that $\{\hat q\circ \gamma\}=\{-\hat q\}$.
 \end{definition}
 Clearly, a quadratic $\T$-refinement refines a bilinear $\T$-form.

  \subsection{\texorpdfstring{$\sigma$}{Lg} mod 8 vs \texorpdfstring{$\sigma$}{Lg} mod 24}
  \label{RemarkOnSigma}
  Let us notice that the Gauss sum of any quadratic $\T$-refinement $\hat q$ is real: any $\T$-refinement satisfies the Gauss-Milgram constraint \eqref{GaussMilgram} with $2\sigma\equiv 0$ mod $8$. In the present paper, we find $\T$-refinements in two steps: we firstly find a finite form satisfying $(\sD,b)\cong (\sD,-b)$, and then we find all possible quadratic polynomials refining the $(\sD,b)$ found in the previous step, and whose Gauss sums are real. Assume $\hat q$ is a $\T$-refinement for $(\sD,b)$ satisfying $\eqref{GaussMilgram}$ with $\sigma\equiv 0$ mod $8$, then there exists only one $\T$-invariant quartet $(\sD,\,b,\,\{\hat q\},\, 0\, \mbox{mod}\, 24)$ containing that $\T$-form. Similarly, if $\hat q$ is a $\T$-refinement with $\sigma \equiv 4 $ mod $8$, then there is only one $\T$-invariant quartet $(\sD,\,b,\,\{\hat q\},\, 12\, \mbox{mod}\, 24)$ containing that refinement. Therefore, there is a $1-1$ correspondence between $\T$-refinements and $\T$-invariant quartets. 
 
The reconstruction theorem \cite{Belov:2005ze, stirling2008abelian} states that any quartet $(\sD,\,b,\,\{\hat q\},\, \sigma\, \mbox{mod}\, 24)$ is the image of the quantization map of some lattice. Clearly, the same theorem can proven about the data $(\sD,\,b,\,\{\hat q\},\, \sigma\, \mbox{mod}\, 8)$. Indeed, by lifting the $\sigma$ and adding $E_8$ or $-E_8$, we reduce the theorem to the one that is already proven. 

For example, we will find that $(\IZ/5,\; xy/5,\; \{2x^2/5\},\;12 \;\mbox{mod}\;24)$ is a $T$-invariant quartet. This quartet is a quantization of the root lattice $\mathbf{A}_4\oplus \mathbf{E}_8$. If we simply fix the refinement $2x^2/5$ on $\IZ/5$, it corresponds to $\sigma\equiv 4$ mod $8$ and it can be realized by the root lattice $\mathbf{A}_4$. Depending on the problem the reader is solving, we pick one of the two sets of data, and lift them to the corresponding lattices. In summary, there is no data lost when passing from $\sigma$ mod $8$ to $\sigma$ mod $24$, at least in the context of time-reversal theories.

Having in mind the preceding discussion, we will focus on finding $\T$-refinements assuming that $\T$-invariant quartets can always be recovered. 

\section{Finite Bilinear Forms}
\label{FinForms}
In this section, we review the classification of bilinear forms on finite Abelian groups and apply it to finding isomorphism classes of $\mathsf{T}$-invariant forms. As a general reference for this section, we recommend the book by Miranda
\cite{Miranda2009}. The main tool for the analysis of finite forms is localization to their prime and homogeneous subgroups. For a finite Abelian group $\sD$,  its $p$-subgroup $\sD_p$ (a.k.a. Sylow $p$-subgroup) is the maximal subgroup of $\sD$ consisting of elements whose order is a power of $p$. Any finite Abelian group is isomorphic to a direct sum of its $p$-subgroups: 
\begin{equation}
\label{Pdecomposition}
\sD \cong \bigoplus_{p} \sD_{p}\,.
\end{equation}
Further, each $\sD_p$ in this decomposition is isomorphic to a sum of homogeneous $p$-groups of exponent $k$ and  rank $d_p^{k}$:
\begin{equation}
\label{pgroup}
\sD_p \cong  \bigoplus_{k}(\IZ/p^{k})^{d_p^{k}}\,.
\end{equation}
Both splittings are orthogonal in the sense that any form $b$ on $\sD$ admits an orthogonal decomposition into the direct sum of forms on prime and homogeneous components:
\begin{align}
b\cong\bigoplus_{p}  b_{p}\,,\quad b_p\cong\bigoplus_{k} b_p^{k}.
\end{align}
Thus, any finite form admits a splitting $(\sD,b)\equiv \sum_p \sum_k ((\IZ/p^{k})^{d_p^{k}},b_p^k)$. 

\begin{definition}
 $\mathrm{bil}(\IZ)$ is the monoid of finite forms modulo the equivalence relation $\cong$ defined in \ref{DefFinForms} with orthogonal sum as the binary operation.
\end{definition} 
Since automorphisms of $\sD$ preserve its prime decomposition, $\mbox{bil}(\IZ)$ admits a decomposition into a direct sum of $p$-components $\mbox{bil}(\IZ)=\bigoplus_{p}\mbox{bil}(\IZ)_p$. A theorem by Wall \cite{Wall1963QuadraticFO} lists all generators of the semigroup $\mbox{bil}(\IZ)_p$ for any prime $p$. Let us review this theorem beginning with finite forms on cyclic groups.

  \paragraph{The finite forms on cyclic groups.} 
 The non-degeneracy condition implies that finite forms on $\IZ/p^r$ must has have the following form:
  \begin{align}
      b(x,y)= \alpha xy /p^r\,\quad \mbox{where} \quad x,y \in \IZ/p^r\,\quad \mbox{and}\quad \alpha\in (\IZ/p^r)^{\times}\,.
  \end{align}
Clearly, these forms are determined by the value $b(1,1)=\alpha /p^r$. Acting by the group automorphisms, we can get rid of the squares in $\alpha$. Thus, non-isomorphic forms are classified by $(\IZ/p^r)^{\times}/(\IZ/p^r)^{\times 2}$. Hensel's lemma tells that for odd prime $p$, 
quadratic residues mod $p^r$ are the same as quadratic residues mod $p$. Therefore, for $p$ odd, there are two isomorphism classes of forms: $b(1,1)=1/p^r$, denoted by $X_{p^r}$, and $b(1,1)=\theta/p^r$ where $\theta$ is any quadratic non-residue mod $p$, denoted by $Y_{p^r}$. The latter class does not depend on a specific choice of $\theta$ since all of them are related by a square mod $p^r$.

For $p=2$, the identity $m^2\equiv 1 \;\mbox{mod}\;8$ (and $m^2\equiv 1 \;\mbox{mod}\;4$), holding for any odd $m$, controls non-isomorphic forms on $\IZ/2^{r}$ for any $r$. As a consequence, the groups of squares $(\IZ/2)^{\times2}$, $(\IZ/4)^{\times2}$, and $(\IZ/8)^{\times2}$ are trivial, while  $(\IZ/2^r)^{\times2}\cong\{1,9,\dots \}$ for $r>3$. Therefore, $(\IZ/2^r)^{\times}/(\IZ/2^r)^{\times 2}$ contains not more than four elements for any $r$. This way, we listed all non-isomorphic forms on cyclic groups. It turns out that all finite forms are generated by forms on cyclic groups and forms on groups of the form $\IZ/2^r\oplus \IZ/2^r$. 

\begin{theorem}(Wall, 1963 \cite{Wall1963QuadraticFO})
1) The monoid $\mbox{bil}(\IZ)_p$ for odd $p$ is generated by forms on $\IZ/p^{r}$:
\begin{align}
X_{p^{r}}&\quad b(1,1)=1/p^{r}\,,\nonumber\\
Y_{p^{r}}&\quad b(1,1)=\theta/p^{r}\nonumber\,
\end{align}
where $\theta$ is any quadratic non-residue mod $p$.

2) The monoid $\mbox{bil}(\IZ)_2$ is generated by the following forms\footnote{The labels for the generators of $\mbox{bil}(\IZ)$ should not be confused with root lattices. The latter ones are denoted by bold letters like $\mathbf{A}_n$.}:
\begin{align}
&A_{2^{r}}\quad \mbox{on}\;\; \IZ/2^r\,,\;r\geqslant 1\;;\;b(1,1)=+1/2^{r}\,,\nonumber\\
&B_{2^{r}}\quad \mbox{on}\;\; \IZ/2^r\,,\;r\geqslant 2\;;\;b(1,1)=-1/2^{r}\,,\nonumber\\
&C_{2^{r}}\quad \mbox{on}\;\; \IZ/2^{r}\,,\;r\geqslant 3\;;\;b(1,1)=+5/2^{r}\,,\nonumber\\
&D_{2^{r}}\quad \mbox{on}\;\; \IZ/2^{r}\,,\;r\geqslant 3\;;\;b(1,1)=-5/2^{r}\,,\nonumber\\
&E_{2^{r}}\quad \mbox{on}\;\; \IZ/2^{r}\oplus \IZ/2^{r}\,,\;r\geqslant 1\;;\;b(e_i,e_j)=
\begin{cases}
0\quad &\mbox{if}\quad i=j\,,\nonumber\\
1/2^{r}\quad &\mbox{if} \quad i\neq j\,,\nonumber\\
\end{cases}\\
&F_{2^{r}}\quad \mbox{on}\;\; \IZ/2^r\oplus \IZ/2^r\,,\;r\geqslant 2\;;\;b(e_i,e_j)=
\begin{cases}
1/2^{r-1}\quad &\mbox{if}\quad i=j\,,\nonumber\\
1/2^{r}\quad &\mbox{if} \quad i\neq j\,,
\end{cases}
\end{align}
where $e_1$ and $e_2$ generate $\IZ/2^r\times \IZ/2^r$ in the last two cases.
\end{theorem}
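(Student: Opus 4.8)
The plan is to show these forms generate $\mathrm{bil}(\IZ)_p$ by inducting on the rank of a homogeneous form and peeling off one generator at each stage. By the orthogonal decomposition into prime and homogeneous components recalled above, it is enough to treat a single homogeneous group $\sD \cong (\IZ/p^r)^d$ with a non-degenerate symmetric form $b$; in the standard basis this is encoded by a symmetric matrix $M$ over $\IZ/p^r$ via $b(e_i,e_j)=M_{ij}/p^r$, with $\det M\in(\IZ/p^r)^{\times}$. The engine of the induction is a splitting lemma: if $U\subseteq\sD$ is a subgroup on which $b$ restricts non-degenerately, then $\sD=U\oplus U^{\perp}$ as an internal orthogonal direct sum and $b|_{U^{\perp}}$ is again non-degenerate. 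Indeed $U\cap U^{\perp}=0$ since $b|_{U}$ is non-degenerate, while the restriction map $\sD\to\Hom(U,\IQ/\IZ)$ is surjective with kernel $U^{\perp}$, forcing $|U^{\perp}|=|\sD|/|U|$ and hence the splitting. Because $U$ is then a direct summand isomorphic to $(\IZ/p^r)^{k}$ inside the homogeneous $(\IZ/p^r)^{d}$, its complement satisfies $U^{\perp}\cong\sD/U\cong(\IZ/p^r)^{d-k}$ and is therefore homogeneous of the same exponent. Thus every time I extract a non-degenerate summand isomorphic to one of the listed generators I may recurse on a strictly smaller homogeneous form, and the induction terminates.

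For odd $p$ the recursion peels off rank-one pieces. Since $2$ is invertible in $\IZ/p^r$, there is always a vector $v$ with $b(v,v)=u/p^r$ for a unit $u$: if every diagonal entry $M_{ii}$ were a non-unit, then from $b(e_i+e_j,e_i+e_j)=M_{ii}+2M_{ij}+M_{jj}$ some off-diagonal $M_{ij}$ must be a unit (otherwise $\det M\equiv 0\bmod p$), and then $e_i+e_j$ has unit self-pairing. A unit self-pairing forces $v$ to have full order $p^r$, so $\langle v\rangle\cong\IZ/p^r$ is a non-degenerate cyclic summand isomorphic to $X_{p^r}$ or $Y_{p^r}$ according to the class of $u$ in $(\IZ/p^r)^{\times}/(\IZ/p^r)^{\times 2}$. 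Splitting and inducting yields part (1).

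For $p=2$ the identical rank-one step applies whenever some self-pairing is a unit (the \emph{odd} case): one peels off a cyclic summand isomorphic to one of $A_{2^r},B_{2^r},C_{2^r},D_{2^r}$, matching the four classes of $(\IZ/2^r)^{\times}/(\IZ/2^r)^{\times 2}$. The new phenomenon is the \emph{even} case, where $b(v,v)$ is a non-unit for every $v$ and no cyclic generator can be extracted; here one must split off a rank-two summand. Choosing any $v\ne 0$, non-degeneracy gives $w$ with $b(v,w)$ a unit, and on $\langle v,w\rangle$ the Gram matrix is $\begin{pmatrix}2a&c\\ c&2d\end{pmatrix}/2^r$ with $c$ odd; its determinant $4ad-c^2$ is a unit and $b(2^{r-1}v,w)=c/2\ne 0$, so $\langle v,w\rangle\cong(\IZ/2^r)^2$ is a non-degenerate summand. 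I would then normalize this form by congruence over $\IZ/2^r$ to show it is isomorphic to $E_{2^r}$ when $ad$ is even and to $F_{2^r}$ when $ad$ is odd, the two cases being separated invariantly by $\det M$ modulo squares ($7\equiv-1$ versus $3$ mod $8$). Since the even condition passes to orthogonal complements, the recursion remains in this case and exhausts the form, giving part (2).

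The main obstacle is precisely this $p=2$ even analysis: one must confirm that the $2\times 2$ normalization is complete, so that no indecomposable even form beyond $E_{2^r}$ and $F_{2^r}$ ever appears. The remaining points are bookkeeping — verifying the exponent ranges of $r$ for which each generator is defined (these follow from the cyclic classification and from $F_{2}$ collapsing onto $E_2$), and checking, as above, that the splitting lemma preserves the common exponent so that the induction stays within homogeneous forms.
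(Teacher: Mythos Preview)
The paper does not prove this theorem; it is quoted as a classical result of Wall \cite{Wall1963QuadraticFO} and only the statement is reviewed. Your sketch is essentially Wall's own argument: reduce to homogeneous $p$-groups, use the orthogonal splitting lemma to peel off rank-one summands when some self-pairing is a unit, and rank-two summands in the even $p=2$ case. The outline is correct and the splitting lemma and its exponent-preservation are argued cleanly.

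The gap you already flag is the only real one: the claim that every even rank-two form over $\IZ/2^r$ is congruent to $E_{2^r}$ or $F_{2^r}$. Your determinant-mod-squares observation ($-1$ versus $3$ in $(\IZ/2^r)^{\times}/(\IZ/2^r)^{\times 2}$) shows the two are distinct, but you still need to show it is a \emph{complete} invariant here. One way: after scaling $e_2$ so that $c=1$, the substitution $e_1\mapsto e_1+te_2$ changes the $(1,1)$ entry to $2(a+t+dt^2)$; for $r\geq 2$ one solves $a+t+dt^2\equiv 0\bmod 2^{r-1}$ by Hensel-lifting the simple root $t\equiv -a\bmod 2$ of $f(t)=dt^2+t+a$ (since $f'(t)=2dt+1$ is a unit), killing the $(1,1)$ entry; the remaining form $\begin{psmallmatrix}0&1\\1&2d'\end{psmallmatrix}$ is then brought to $E_{2^r}$ if $d'$ is even (shear $e_2\mapsto e_2-d'e_1$) and to $F_{2^r}$ if $d'$ is odd (scale $e_2$ to make $d'=1$, then symmetrize). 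Filling this in, together with the bookkeeping on the exponent ranges that you mention, completes the proof.
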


Thus, for odd $p$, any form on the homogeneous group $(\IZ/p^{r})^{d}$ is isomorphic to 
\begin{align}
s_1X_{p^{r}}+ s_2Y_{p^r}\,
\end{align}
where $s_1+s_2=d$. However, this representation is not unique as there are relations among the generators. For odd $p$, the relations were found by Wall:
\begin{align}
2X_{p^r} = 2Y_{p^r}\,.
\end{align}
Due to this relation, the most general form on $(\IZ/p^r)^d$ is isomorphic to one of the two following types: 
\begin{align}
\label{OddType} 
\mbox{Type I}\;:\quad&(d-1)X_{p^r}+ Y_{p^r}\,,\\
\label{EvenType}
\mbox{Type II}:\quad&d X_{p^r}\,.
\end{align}
We call them normal forms of finite forms and choose them as the standard representatives of the given isomorphism class. Algebraically, Type I and Type II normal forms can be distinguished by Seifert's invariant $\chi_{p}^{r}(b)$ which we review in Appendix \ref{SeifertInv}. Type I forms correspond to $\chi_{p}^{r}(b)=-1$, while Type II forms correspond to $\chi_p^r(b)=+1$. 

\subsection{Normal forms for forms on 2-groups} 

We now turn our attention to 2-groups. 
\footnote{Here ``2-groups'' mean finite Abelian groups all of whose elements are torsion for some power of $2$, and should not be confused with the recently fashionable subject of two-step Postnikov extensions over $BG$ for some $G$.}
Relations among the generators of $\mbox{bil}(\IZ)_2$ were found by  Kawauchi and Kojima in \cite{Kawauchi1980}. These relations involve generators that are defined on groups with different exponents unlike the case of odd $p$. We do not provide the relations here, instead we jump to normal forms found by Miranda in \cite{Miranda1984}, and summarized in Table \ref{tab:table-name}.
\begin{table}[h!]
\begin{center}
    \begin{tabular}{ | c | c | l | c | c | c | }
    \hline
    Exponent $r$ & Rank $d$ & Normal form  & $\sigma_r$ & $\sigma_{r-1}$ & $\sigma_{r-2}$ \\ \hline
$1$ & $d\geqslant 1$ & $dA_{2}$ & $\infty$  & - & - \\
$1$ & $d$ even & $(d/2)E_{2}$ & $0$  & - & - \\
$2$ & $d\geqslant 1$ & $dA_{2^2}$ & $\infty$  & $d$ & - \\
$2$ & $d\geqslant 1$ & $(d-1)A_{2^2}+B_{2^2}$ & $\infty$  & $d+6$ & - \\
$2$ & $d\geqslant 2$ & $(d-2)A_{2^2}+2B_{2^2}$ & $\infty$  & $d+4$ & - \\
$2$ & $d\geqslant 3$ & $(d-3)A_{2^2}+3B_{2^2}$ & $\infty$  & $d+2$ & - \\
$2$ & $d$ even & $(d/2)E_{2^2}$ & $0$  & $0$ & - \\
$2$ & $d$ even & $((d/2)-1)E_{2^2}+F_{2^2}$ & $0$  & $4$ & - \\ 
$r\geqslant3$ & $1$ & $D_{2^r}$ & $\infty$  & $7$ & $3$ \\ 
$r\geqslant3$ & $2$ & $B_{2^r}+D_{2^r}$ & $\infty$  & $6$ & $2$ \\
$r\geqslant3$ & $3$ & $2B_{2^r}+D_{2^r}$ & $\infty$  & $5$ & $1$ \\
$r\geqslant3$ & $d\geqslant 1$ & $(d-1)A_{2^r}+C_{2^r}$ & $\infty$  & $d$ & $d+4$ \\
$r\geqslant3$ & $d\geqslant2$ & $(d-2)A_{2^r}+B_{2^r}+C_{2^r}$ & $\infty$  & $d+6$ & $d+2$ \\
$r\geqslant3$ & $d\geqslant3$ & $(d-3)A_{2^r}+2B_{2^r}+C_{2^r}$ & $\infty$  & $d+4$ & $d$ \\
$r\geqslant3$ & $d\geqslant4$ & $(d-4)A_{2^r}+3B_{2^r}+C_{2^r}$ & $\infty$  & $d+2$ & $d+6$ \\
$r\geqslant3$ & $d\geqslant1$ & $d A_{2^r}$ & $\infty$  & $d$ & $d$ \\
$r\geqslant3$ & $d\geqslant1$ & $(d-1)A_{2^r}+B_{2^r}$ & $\infty$  & $d+6$ & $d+6$ \\
$r\geqslant3$ & $d\geqslant2$ & $(d-2)A_{2^r}+2B_{2^r}$ & $\infty$  & $d+4$ & $d+4$ \\
$r\geqslant3$ & $d\geqslant3$ & $(d-3)A_{2^r}+3B_{2^r}$ & $\infty$  & $d+2$ & $d+2$ \\
$r\geqslant3$ & $d$ even & $(d/2)E_{2^r}$ & $0$  & $0$ & $0$ \\
$r\geqslant3$ & $d$ even & $((d/2)-1)E_{2^r}+F_{2^r}$ & $0$  & $4$ & $0$ \\
    \hline
    \end{tabular}
\end{center}
\caption{\label{tab:table-name}Normal forms of forms on $(\IZ/2^{r})^d$.}
\end{table}
The last three columns will be explained in a moment.
\begin{theorem}(Miranda, 1984)
\label{Theorem_Miranda}Let $\sD_2$ be a finite Abelian 2-group and $b$ be a form on $\sD_2$, then $b$ is isomorphic to a unique form expressed as a direct sum of the generators of $\mbox{bil}(\IZ)_2$ which is in normal form listed in Table \ref{tab:table-name}.
\end{theorem}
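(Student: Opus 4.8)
The plan is to prove existence and uniqueness of the normal form separately, taking as given Wall's generation theorem, the orthogonal homogeneous decomposition $b\cong\bigoplus_r b_r$ with $b_r$ a form on $(\IZ/2^r)^{d_r}$, the Kawauchi--Kojima relations among the generators of $\mbox{bil}(\IZ)_2$, and the signature invariants $\sigma_r$ of Appendix~\ref{SignatureInv}. For existence I would first note that the ranks $d_r$ are fixed group-theoretic data, so it suffices to normalize each homogeneous piece $b_r$. By Wall's theorem $b_r$ is an orthogonal sum of the generators $A_{2^r},\dots,F_{2^r}$, and I would reduce this sum in two stages. Within the fixed exponent $r$, the relations let me bring the odd (diagonal) part to one of the listed combinations of $A_{2^r},B_{2^r},C_{2^r},D_{2^r}$ --- cutting the number of $B_{2^r}$'s down to at most three and keeping at most one of $C_{2^r},D_{2^r}$ --- and collapse the even part into copies of $E_{2^r}$ with at most one $F_{2^r}$. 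The resulting dichotomy between \emph{odd type} (summands among $A,B,C,D$) and \emph{even type} (summands $E,F$) is detected by Seifert's invariant $\chi_2^r$ of Appendix~\ref{SeifertInv}. In the second stage I would apply the genuinely exponent-mixing Kawauchi--Kojima relations, sweeping from the highest exponent downward, to trade leftover generators at level $r$ against generators at levels $r-1$ and $r-2$, clearing all residual redundancy and depositing the form at a single entry of Table~\ref{tab:table-name}.

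For uniqueness I would argue with invariants. The ranks $d_r$, the Seifert invariants $\chi_2^r$ (equivalently, whether $\sigma_r$ is infinite or finite), and the signatures $\sigma_r\in\IZ/8$ are all manifestly preserved under isomorphism and additive over orthogonal sums; a level-$r$ summand contributes only to $\sigma_r,\sigma_{r-1},\sigma_{r-2}$, which is precisely the content of the last three columns of Table~\ref{tab:table-name}. Because this contribution pattern is triangular in the exponent, the global data $\{d_r\}_r$ and $\{\sigma_k(b)\}_k$ pin down the normal form of every homogeneous piece even though the orthogonal splitting itself is not unique. It then remains to check that, on a fixed group $(\IZ/2^r)^d$, the listed normal forms carry pairwise distinct invariant tuples: the even-type forms have $\sigma_r=0$ and are separated by $\sigma_{r-1}\in\{0,4\}$, while the odd-type forms have $\sigma_r=\infty$ and are separated by the pair $(\sigma_{r-1},\sigma_{r-2})$, which for $r\geqslant3$ ranges over eight distinct residues of the shape $(d+2a,d+2b)$ mod $8$ and which the table shows to be pairwise distinct for each fixed $(r,d)$, including the low-rank rows involving $C_{2^r}$ and $D_{2^r}$; the cases $r=1,2$ are handled identically with fewer signatures available. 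Since distinct normal forms then have distinct invariants, no $b$ can be isomorphic to two of them, giving uniqueness.

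The hard part will be the existence reduction, specifically the exponent-mixing step. Unlike the odd-$p$ case, where the single relation $2X_{p^r}=2Y_{p^r}$ suffices block by block, the $p=2$ relations genuinely couple adjacent exponents, so I must ensure both that the downward sweep terminates and that no residual relation secretly identifies two entries of the table. This rests on the completeness of the Kawauchi--Kojima relations, i.e.\ that they generate all relations in $\mbox{bil}(\IZ)_2$ --- equivalently, that the signatures $\sigma_r$ separate isomorphism classes --- which is the substantive input I would import rather than reprove. The remaining computational core is evaluating the signatures on the generators: verifying that the defining Gauss sums are nonvanishing for $E_{2^r},F_{2^r}$ (yielding finite $\sigma_r$) but vanish at the top level for $A_{2^r},\dots,D_{2^r}$ (yielding $\sigma_r=\infty$), and that their values at levels $r-1,r-2$ reproduce the last three columns of Table~\ref{tab:table-name}.
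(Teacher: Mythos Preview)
The paper does not prove this theorem: it is stated as a cited result of Miranda (1984), with the preceding sentence explicitly saying ``We do not provide the relations here, instead we jump to normal forms found by Miranda.'' So there is no proof in the paper to compare against; the paper simply imports the statement and Table~\ref{tab:table-name}, together with Proposition~\ref{Proposition_Miranda} (that the signature invariants separate classes), as black boxes.

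That said, your outline is essentially the standard argument and is sound in its architecture --- existence via the Kawauchi--Kojima relations, uniqueness via the signature invariants --- which is exactly what the paper's cited references do. Two small corrections are worth flagging. First, your appeal to ``Seifert's invariant $\chi_2^r$ of Appendix~\ref{SeifertInv}'' is a misnomer: Appendix~\ref{SeifertInv} defines $\chi_p^k$ only for odd $p$ via a Legendre symbol, which has no direct meaning at $p=2$. The dichotomy you want (odd type $A,B,C,D$ versus even type $E,F$) is instead the type~$A$/type~$E$ distinction of Appendix~\ref{SignatureInv}, encoded precisely by whether $\sigma_r=\infty$ or $\sigma_r\in\IZ/8$. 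Second, it is not true that a level-$r$ summand contributes only to $\sigma_r,\sigma_{r-1},\sigma_{r-2}$: Table~\ref{GeneratorsSignatures} shows it contributes to $\sigma_k$ for every $k\leqslant r$. What is true, and what you need, is that these lower contributions are determined by parity of $r-k$ and hence carry no information beyond what $\sigma_{r-1},\sigma_{r-2}$ already record; this is why the three columns of Table~\ref{tab:table-name} suffice and why your triangular-sweep argument works. With those adjustments your sketch is correct.
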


 Kawauchi and Kojima introduced the so-called signature invariants that distinguish isomorphism classes of forms on $2$-groups. We review the definition of these invariants in Appendix \ref{SignatureInv}. The signature invariants are valued in the semigroup $\overline{\IZ}_8=\IZ/8 \cup \infty$ -- the semigroup with $9$ elements $\{0,\,\dots,\,7,\,\infty\}$ where addition within $\IZ/8$ is the standard one and addition with $\infty$ is defined as
\begin{align}
x+\infty=\infty+x=\infty+\infty=\infty\,\quad \mbox{for all}\quad  x\in \IZ/8\,.
\end{align}
Any form $b$ on a $2$-group can be assigned a sequence of $\overline{\IZ}_8$-valued invariants $\{\sigma_{k}(b)\}_{k\in\IN}$. Elements of this formal sequence vanish for large enough $k$ for any finite Abelian $2$-group. The signature invariants are additive in the sense that $\sigma_k(b\oplus 
b')=\sigma_k(b)+\sigma_k(b')$, and define a semigroup homomorphism 
\begin{align}
\sigma_k:\mbox{bil}(\IZ)_2\to \overline{\IZ}_8\,.
\end{align} 
If $b$ is a form on a homogeneous 2-group with exponent $r$, then $\sigma_k(b)=0$ for $k>r$. The most interesting invariants $\sigma_r$, $\sigma_{r-1}$, and $\sigma_{r-2}$ for forms in normal form are given in Table \ref{tab:table-name} (all other invariants are not independent and can be deduced from those three). The general expression can be found in Appendix \ref{SignatureInv}.
\begin{proposition}(Miranda, 1984)
\label{Proposition_Miranda}  Any two forms $b$ and $b'$ on the $2$-group $\sD_2$ are isomorphic if and only if they have the same signature invariants $\{\sigma_{k}(b)\}_{k\in\IN}$ and $\{\sigma_{k}(b')\}_{k\in\IN}$.
\end{proposition}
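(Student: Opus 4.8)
\section*{Proof proposal}

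The statement has two directions, of which the forward implication is essentially built into the setup. Each $\sigma_k$ is by construction a semigroup homomorphism $\mbox{bil}(\IZ)_2\to\overline{\IZ}_8$ defined on the monoid of forms taken up to the isomorphism $\cong$, so it is constant on isomorphism classes; hence $b\cong b'$ gives $\sigma_k(b)=\sigma_k(b')$ for every $k$ at once. All the content is in the converse, and my plan for it is to reduce to normal forms and then show the signature sequence separates them.

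By Theorem \ref{Theorem_Miranda} each of $b$ and $b'$ is isomorphic to a unique normal form, say $n=\bigoplus_r n_r$ and $n'=\bigoplus_r n'_r$, the sum running over exponents $r$ with $n_r$ a normal form on the homogeneous component $(\IZ/2^r)^{d_r}$. Since $b$ and $b'$ are forms on the \emph{same} group $\sD_2$, the ranks $d_r$ coincide, so $n$ and $n'$ are assembled from rows of Table \ref{tab:table-name} with matching $(r,d_r)$. Using the isomorphism-invariance just noted, the hypothesis becomes $\sigma_k(n)=\sigma_k(n')$ for all $k$, so it suffices to prove that $n\mapsto(\sigma_k(n))_{k}$ is injective on normal forms supported on a fixed group; the uniqueness clause of Theorem \ref{Theorem_Miranda} then upgrades $n=n'$ to $b\cong b'$.

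The first ingredient is a purely local, finite check: for each fixed pair $(r,d)$ the rows of Table \ref{tab:table-name} are already pairwise distinguished by the triple $(\sigma_r,\sigma_{r-1},\sigma_{r-2})$. For instance, when $r\geqslant 3$ the even-type rows carry $\sigma_r=0$ and are told apart by $\sigma_{r-1}\in\{0,4\}$, while the odd-type rows carry $\sigma_r=\infty$ and are separated by the pair of residues $(\sigma_{r-1},\sigma_{r-2})$ modulo $8$; the exponents $r=1,2$ are handled identically. A useful first consequence, which needs no subtraction, is that the \emph{type} of every piece is read off directly: because the absorbing element satisfies $\text{(finite)}+\text{(finite)}\neq\infty$ while $\infty+x=\infty$, one has $\sigma_r(n_r)=\infty$ exactly when $\sigma_r(n)=\infty$, so $n_r$ is odd- or even-type according to whether $\sigma_r(n)$ is $\infty$ or finite.

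The real work, and the step I expect to be the main obstacle, is that the homogeneous pieces are \emph{not} cleanly separated once one wants the finer data $(\sigma_{r-1},\sigma_{r-2})$ of each $n_r$. By additivity $\sigma_k(n)=\sum_{s\geqslant k}\sigma_k(n_s)$, so the invariant at level $k$ entangles the diagonal slot of $n_k$ with the sub-diagonal slots of all higher pieces $n_s$, $s>k$. Worse, whenever the piece $n_k$ occupying a given level is odd-type its diagonal slot equals $\infty$ and \emph{masks} every finite contribution at that level, including the sub-diagonal slots $\sigma_k(n_s)$ of higher pieces; recovering those then forces a descent to lower levels. Untangling this requires the explicit values of the deduced lower invariants $\sigma_k(n_s)$ for all $k\leqslant s$ --- not merely the three columns tabulated in Table \ref{tab:table-name} --- together with careful bookkeeping of where the value $\infty$ does and does not appear. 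These formulas are supplied in Appendix \ref{SignatureInv}, and assembling them into a downward sweep over exponents that reconstructs each $n_r$ (invoking the local check of the previous paragraph at each level) is where the argument is genuinely delicate rather than routine.
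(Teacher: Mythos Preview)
The paper does not supply its own proof of this proposition; it is quoted as a result of Miranda (1984) and used as a black box. So there is no in-paper argument to compare against, and the question is only whether your strategy is sound and whether it matches the standard route.

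Your overall plan --- reduce to normal forms, then show that the signature sequence separates them --- is exactly the way Miranda proves the result. Two comments are worth making. First, a logical point: you should rely only on the \emph{existence} half of Theorem~\ref{Theorem_Miranda}. The \emph{uniqueness} clause is, in Miranda's development, proved precisely by computing that distinct normal forms have distinct signature sequences, i.e.\ by the very injectivity you are trying to establish; invoking it here would be circular. Fortunately your argument does not actually need uniqueness: once you have shown that the map $n\mapsto(\sigma_k(n))_k$ is injective on normal forms, the conclusion $b\cong b'$ follows from $b\cong n$, $b'\cong n'$ and $n=n'$ by plain transitivity, not from any uniqueness statement. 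Your sentence ``the uniqueness clause of Theorem~\ref{Theorem_Miranda} then upgrades $n=n'$ to $b\cong b'$'' is therefore both unnecessary and, strictly speaking, circular; drop it.

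Second, on the substantive step: you have correctly identified that the genuine difficulty is the cross-exponent mixing, where an odd-type piece at level $k$ produces $\sigma_k=\infty$ and absorbs the sub-diagonal contributions of higher pieces. Your proposed cure --- a downward sweep using the full table of $\sigma_k(n_s)$ for all $k\le s$ from Appendix~\ref{SignatureInv} --- is the right one and is how the argument is actually carried out. You have not executed this sweep, only outlined it, so as written this is a correct proof \emph{sketch} rather than a complete proof; the missing piece is a careful induction on the top exponent showing that, at each stage, enough unmasked levels survive to pin down the next homogeneous summand.
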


As a conclusion, we use Seifert's invariants to classify finite forms on $p$-groups for odd $p$ and the signature invariants in the case of $2$-groups, such that the isomorphism class of the finite form is fully determined by the sets of invariants $\{d_p^{k}\}_{k\in\IN}$, $\{\chi_{p'}^{k}(b)\}_{k\in\IN}$, and $\{\sigma_k(b)\}_{k\in\IN}$ where $p$ is any prime, and $p'$ is any odd prime. 

\subsection{Bilinear \texorpdfstring{$\T$}{Lg}-forms}
Our goal here is to determine the classes of $\T$-forms given by Definition \ref{TformTref}.
Let us notice that the semigroup $\overline{\IZ}_8$ admits an involutive automorphism: 
\begin{align}
*:\overline{\IZ}_8&\to \overline{\IZ}_8\,,\\
*:x&\mapsto *x=\begin{cases}
-x\,,\quad x\in \IZ/8\,,\\
\,\,\infty \,,\quad x=\infty\,.\\
\end{cases}
\end{align}
As follows from definition of the signature invariants, 
\begin{align}
    \sigma_k(-b)=*\sigma_k(b)\,.
\end{align}
This observation leads us to the following proposition.
\begin{proposition}
\label{Tlinking}
The finite form $(\sD,b)$ is a finite $\mathsf{T}$-form if and only if all of the following conditions are satisfied\footnote{Here, $\left(\frac{a}{p}\right)$ stands for the Legendre symbol: it is $0$ if $p$ divides $a$, $+1$ if $a$ is a quadratic residue mod $p$, and $-1$ if $a$ is a quadratic non-residue mod $p$.}:

1) $d_{p}^{k}\equiv \begin{cases} 0 \;\mbox{mod}\; 2, \quad  \mbox{for any odd prime }p\,,\\
1\;\mbox{mod}\; 2,\quad \mbox{for any odd prime }p\mbox{ satisfying } \left(\frac{-1}{p}\right)=1\,. 
\end{cases}$

2) $\sigma_k(b)\in \{0,4,\infty\}$,

for all integers $k\geqslant 1$.
\end{proposition}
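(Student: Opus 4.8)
The plan is to reduce the global isomorphism $(\sD,b)\cong(\sD,-b)$ to its primary and homogeneous blocks, where the classification results already assembled in Section~\ref{FinForms} apply verbatim. Since any automorphism of $\sD$ preserves the primary decomposition \eqref{Pdecomposition}, and negation acts blockwise as $-b\cong\bigoplus_p\bigoplus_k(-b_p^k)$, the form $b$ is a $\T$-form if and only if each homogeneous block satisfies $((\IZ/p^k)^{d_p^k},b_p^k)\cong((\IZ/p^k)^{d_p^k},-b_p^k)$. Because $-b$ lives on the same group $\sD$, the rank data $d_p^k$ is automatically preserved; by the combined invariant classification (Seifert invariants for odd $p$, signature invariants for $p=2$) it then remains only to check, block by block, that the surviving invariant is fixed under $b\mapsto -b$.

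For an odd prime $p$ I would argue on generators. Negating a cyclic generator produces a form with $b(1,1)=-1/p^r$, and whether this is isomorphic to $X_{p^r}$ or to $Y_{p^r}$ is governed precisely by the Legendre symbol $\left(\frac{-1}{p}\right)$, since by Hensel's lemma $-1$ is a square mod $p^r$ exactly when $\left(\frac{-1}{p}\right)=1$. Feeding this into the normal forms \eqref{OddType}--\eqref{EvenType} and using Wall's relation $2X_{p^r}=2Y_{p^r}$, the Seifert invariant transforms as $\chi_p^r(-b)=\left(\frac{-1}{p}\right)^{d_p^r}\chi_p^r(b)$. Demanding $\chi_p^r(-b)=\chi_p^r(b)$ is therefore equivalent to $\left(\frac{-1}{p}\right)^{d_p^r}=1$, which holds automatically when $\left(\frac{-1}{p}\right)=1$ and forces $d_p^r$ even when $\left(\frac{-1}{p}\right)=-1$. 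This is exactly condition~1).

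For $p=2$ I would invoke Proposition~\ref{Proposition_Miranda}: two forms on a $2$-group are isomorphic iff all their signature invariants agree. Combined with the identity $\sigma_k(-b)=*\sigma_k(b)$ noted above, the block $(\sD_2,b)$ is isomorphic to $(\sD_2,-b)$ iff $\sigma_k(b)=*\sigma_k(b)$ for every $k$, i.e.\ iff each $\sigma_k(b)$ is a fixed point of the involution $*$. Solving $*x=x$ in $\overline{\IZ}_8$ gives $-x=x$ in $\IZ/8$, whose solutions are $x\in\{0,4\}$, together with the fixed point $x=\infty$; hence $\sigma_k(b)\in\{0,4,\infty\}$, which is condition~2). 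Assembling the odd-$p$ and $p=2$ conditions over all blocks yields the proposition.

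The step I expect to require the most care is the odd-prime computation of $\chi_p^r(-b)$: one must track how $-1$ interacts with the square classes mod $p^r$ and verify that Wall's relation $2X_{p^r}=2Y_{p^r}$ makes the Type~I / Type~II dichotomy depend only on $\left(\frac{-1}{p}\right)^{d_p^r}$. The $p=2$ half, by contrast, is essentially immediate once Miranda's classification and the action of $*$ are in hand.
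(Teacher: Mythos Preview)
Your proposal is correct and follows the same route as the paper's own (very terse) argument: reduce to $p$-components, then for odd $p$ compare normal forms under negation via the Seifert invariant, and for $p=2$ use Miranda's classification together with $\sigma_k(-b)=*\sigma_k(b)$ to identify the fixed points $\{0,4,\infty\}$. Your formula $\chi_p^r(-b)=\left(\frac{-1}{p}\right)^{d_p^r}\chi_p^r(b)$ is exactly the computation the paper alludes to with ``applying the negation transformation to a form in normal form and bringing it back to the normal form.''

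One small framing point: your opening sentence claims the problem reduces to \emph{homogeneous} blocks because automorphisms preserve the primary decomposition. Automorphisms of $\sD$ do preserve the Sylow decomposition, but automorphisms of a $p$-group need not preserve a chosen homogeneous splitting. This is harmless in your actual argument---for odd $p$ the Seifert invariant $\chi_p^k$ factors through $\rho_{p,k}$ and hence depends only on the $k$-th homogeneous block regardless of the splitting, and for $p=2$ you (correctly) work with the signature invariants of the full $2$-group $\sD_2$ rather than of individual blocks---but the reduction-to-blocks sentence overstates what the primary decomposition gives you. It would be cleaner to say: reduce to $p$-groups by the Sylow decomposition, then compare the complete invariants ($\chi_p^k$ for odd $p$, $\sigma_k$ for $p=2$) directly.
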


The statement about odd $p$ can be obtained by applying the negation transformation to a form in normal form and bringing it back to the normal form. For $p=2$, the forms which are negation fixed points are precisely the forms having signature invariants equal to $0$, $4$ or $\infty$. As a result, all isomorphism classes of finite $\mathsf{T}$-forms are specified by the invariants from Proposition \ref{Tlinking} and can be presented by forms in the normal form.

\paragraph{Example 1.} 
\label{Example1}
For odd prime $p$ such that $\left(\frac{-1}{p}\right)=1$, or equivalently, $p \equiv 1$ mod $4$, any finite form on $(\IZ/p^r)^d$ is a $\mathsf{T}$-form.

\paragraph{Example 2.} 
\label{Example2}
It turns out that $((\IZ/2^r)^{4},4A_{2^r})$ is always a $\mathsf{T}$-form for any $r\geqslant 1$. 

\paragraph{Example 3.} 
\label{Example3}
Trivial examples are $(\IZ/2,A_{2})$, $(\IZ/2\times \IZ/2,E_{2})$, and $(\IZ/4\times \IZ/4,F_{4})$. 
\begin{definition}
  The monoid of isomorphism classes of $\T$-forms is denoted by $\mathsf{T}\mathrm{bil}(\IZ)$. We also denote by $\mathsf{T}\mathrm{bil}(\IZ)_p$  the monoid of $\T$-forms on $p$-groups.  
\end{definition}
 Clearly, $\mathsf{T}\mbox{bil}(\IZ)\subseteq \mbox{bil}(\IZ)$ and $\mathsf{T}\mbox{bil}(\IZ)=\bigoplus_{p} \mathsf{T}\mbox{bil}(\IZ)_p$. By direct inspection of Table \ref{tab:table-name}, we obtain the following proposition. 
\begin{proposition}
\label{T-generators}
1) For a prime $p$ and for all $r \geqslant 1$, the monoid $\mathsf{T}\mathrm{bil}(\IZ)_p$ is generated by $X_{p^r}$, $Y_{p^r}$ if $p \equiv 1$ mod $4$ and by $2X_{p^r}$, $X_{p^r}+Y_{p^r}$ if $p \equiv 3$ mod $4$.

2) The monoid of $\T$-forms on homogeneous $2$-groups is generated by the following forms in normal form:
\begin{align}
A_{2}\,;\quad E_{2^r}\,;\quad F_{2^{m}}\,;\quad 4A_{2^{m}}\,;\quad A_{2^{n}}+B_{2^{n}}\,;\quad  B_{2^{n}}+C_{2^{n}}\,;\quad 3A_{2^{n}}+C_{2^{n}}
\end{align}
where $r\geqslant 1$, $m\geqslant 2$, and $n\geqslant 3$.
\end{proposition}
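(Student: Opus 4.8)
The plan is to determine the generators of $\mathsf{T}\mathrm{bil}(\IZ)_p$ by intersecting the list of all generators of $\mbox{bil}(\IZ)_p$ (from Wall's theorem and Table \ref{tab:table-name}) with the $\T$-form condition of Proposition \ref{Tlinking}, and then verifying that the resulting sublist actually generates the whole monoid of $\T$-forms (i.e.\ that every $\T$-form decomposes as a sum of the proposed generators, not merely that the proposed generators are $\T$-forms). The two directions — that the listed forms lie in $\mathsf{T}\mathrm{bil}(\IZ)_p$, and that they exhaust it — must both be checked, and it is the exhaustion direction that requires care.

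For odd $p$, I would first recall that every form on $(\IZ/p^r)^d$ is, by Wall's relation $2X_{p^r}=2Y_{p^r}$, equivalent to a Type I or Type II normal form \eqref{OddType}--\eqref{EvenType}, distinguished by the Seifert invariant $\chi_p^r(b)=\pm1$. Condition 1) of Proposition \ref{Tlinking} controls exactly which ranks $d_p^k$ and which Seifert types are allowed. When $p\equiv 1$ mod $4$ we have $\left(\tfrac{-1}{p}\right)=1$, so \textbf{every} form is a $\T$-form (Example 1); hence both $X_{p^r}$ and $Y_{p^r}$ individually are $\T$-forms and serve as generators. When $p\equiv 3$ mod $4$, condition 1) forces $d_p^k$ even, so neither $X_{p^r}$ nor $Y_{p^r}$ alone is a $\T$-form, but their rank-$2$ combinations are: I would check that $2X_{p^r}$ and $X_{p^r}+Y_{p^r}$ are $\T$-forms and that, using $2X_{p^r}=2Y_{p^r}$, every even-rank form (Type I or II) on $(\IZ/p^r)^d$ is an orthogonal sum of copies of these two. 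This is a short finite verification at each exponent $r$.

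For $p=2$, I would run down Table \ref{tab:table-name} and, for each normal form, read off $\sigma_r,\sigma_{r-1},\sigma_{r-2}$, keeping only those rows whose three listed signature invariants all lie in $\{0,4,\infty\}$ (condition 2), since $\sigma_k=0$ for $k>r$ and the remaining invariants are dependent). This singles out the candidate indecomposable $\T$-forms $A_2$, $E_{2^r}$, $F_{2^m}$, $4A_{2^m}$, $A_{2^n}+B_{2^n}$, $B_{2^n}+C_{2^n}$, and $3A_{2^n}+C_{2^n}$. To show these generate, I would take an arbitrary $\T$-form, put it in the Table \ref{tab:table-name} normal form, and argue that its signature invariants being in $\{0,4,\infty\}$ force its normal-form decomposition to be expressible through the listed generators — using additivity of $\sigma_k$ over orthogonal sums and, where needed, the Kawauchi--Kojima relations among the $2$-group generators to rewrite combinations (for instance decomposing a rank-$4$ piece with trivial signature as $4A_{2^m}$).

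The main obstacle I expect is the $p=2$ exhaustion step: because the Kawauchi--Kojima relations mix generators of different exponents, it is not immediate that a normal form with $\{0,4,\infty\}$-valued signature invariants is an orthogonal sum of the seven listed forms rather than of some larger set. Concretely, one must confirm that the composite quantities appearing in Table \ref{tab:table-name} (e.g.\ the rows giving $\sigma_{r-1}=d$ versus $d+4$) can only be assembled, subject to the $\T$-constraint, from the allowed generators, which amounts to a careful bookkeeping of ranks and signatures modulo the relations rather than any deep argument.
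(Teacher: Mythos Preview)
Your approach --- filtering Wall's normal forms (odd $p$) and Miranda's Table~\ref{tab:table-name} ($p=2$) through the criterion of Proposition~\ref{Tlinking}, then checking that the survivors generate --- is exactly what the paper does. The paper's proof consists of the single sentence ``By direct inspection of Table~\ref{tab:table-name}, we obtain the following proposition,'' so there is no alternative argument to compare against; your plan is the intended one, just spelled out with appropriate care for the exhaustion direction.

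Your instinct that the $p=2$ exhaustion step is the delicate part is well placed, and in fact a careful pass reveals a small gap in the stated generator list. At exponent $r=2$, the row $(d-1)A_{2^2}+B_{2^2}$ with $d=2$, i.e.\ $A_4+B_4$, has $\sigma_2=\infty$ and $\sigma_1=d+6\equiv 0$, so it is a rank-$2$ $\mathsf{T}$-form. By Proposition~\ref{Proposition_Miranda} it is not isomorphic to $E_4$ ($\sigma_2=0$, $\sigma_1=0$) or to $F_4$ ($\sigma_2=0$, $\sigma_1=4$), and having rank $2$ it cannot be an orthogonal sum of the listed exponent-$2$ generators $E_4$, $F_4$, $4A_4$. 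Thus the restriction $n\geqslant 3$ on $A_{2^n}+B_{2^n}$ appears to be an oversight for $m\geqslant 2$; your bookkeeping, done honestly, would surface this rather than confirm the list as written. Apart from this, the verification for odd $p$ and for the remaining $2$-group rows goes through exactly as you outline.
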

In the next section, we will define and separate a class of $\mathsf{T}$-forms that are somewhat trivially $\mathsf{T}$-invariant, e.g. the forms from Example 3.

\subsection{The Witt group of bilinear forms}
\label{WittBilgroup}
 We say that the finite form $(\sD,b)$ is split if there exists a direct summand $\sD'$ of $\sD$ such that the orthogonal complement to $\sD'$ is exactly $\sD'$ (we assume the form on $\sD$ is non-degenerate). The semigroup $\mbox{bil}(\IZ)$ modulo split finite forms is an Abelian group called the Witt group and denoted by $W$. Correspondingly, we have Witt groups $W_p$ and $W_p^k$ of forms on $p$-groups and homogeneous $p$-groups of exponent $k$. The following propositions by Kawauchi and Kojima describe the split forms in terms of their invariants and calculate the corresponding Witt groups.
\begin{proposition}(Kawauchi and Kojima, 1980) \label{KandK}
1) A finite form $(\sD,b)$ is split if and only if 
\begin{align}
d_p^k= 0\; \mbox{mod}\;2\quad \&\quad \chi_{p'}^k(b)=(-1)^{d_{p'}^{k}(p'-1)/4}\quad \&\quad \sigma_k(b)=0\; \mbox{or}\; \infty
\end{align}
for all prime $p$, all odd prime $p'$, and $k\geqslant 1$.

2) $W\cong \bigoplus_p W_p$. When $p$ is odd, $W_p\cong \bigoplus W_p^k$ and, for each $k\geqslant1$, 
\begin{align}
W_p^{k}\cong \begin{cases}
\IZ/2\oplus \IZ/2\quad \mbox{if}\quad \left(\frac{-1}{p}\right)=+1\,,\\
\IZ/4 \quad \quad \quad \;\; \,\,\mbox{if}\quad \left(\frac{-1}{p}\right)=-1\,,
\end{cases}
\end{align}
$W_2$ is isomorphic to a direct sum of infinite copies of $\IZ/2$, while $W_2^1\cong \IZ/2$, $W_2^2\cong \IZ/8$, and, for each $k\geqslant 3$, $W_2^{k}\cong \IZ/8\oplus \IZ/2$. 
\end{proposition}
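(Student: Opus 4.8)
The plan is to reduce everything to the homogeneous components $(\IZ/p^k)^{d_p^k}$ and to the classical Witt theory of forms over a field, and then reassemble. Since forms on finite groups decompose orthogonally over primes $p$ and over exponents $k$ (as recalled around \eqref{Pdecomposition}--\eqref{pgroup}), and since the invariants $d_p^k$, $\chi_{p'}^k$, $\sigma_k$ are additive, the monoid splits as $\mathrm{bil}(\IZ)=\bigoplus_{p,k}\mathrm{bil}(\IZ)_p^k$, and $W\cong\bigoplus_p W_p$ follows at once because automorphisms preserve the prime decomposition. The first genuine task is to show that being split is likewise detected component-wise; once this is in hand, Witt-equivalence is component-wise, the split sub-monoid is $\bigoplus_{p,k}S_p^k$, and the Witt group splits accordingly.

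For part 1, \emph{only if}: suppose $\sD'=(\sD')^{\perp}$ is a metabolizer. Passing to the graded subquotients via the maps $\rho_k$ of Appendix \ref{RhoMaps}, a metabolizer induces a metabolizer on each associated nondegenerate form over $\IF_p$ (odd $p$) or over $\IZ/2$ (for $p=2$) supported in homogeneous degree $k$. A nondegenerate metabolic form over a field has even dimension, forcing $d_p^k\equiv 0\bmod 2$; and its Witt class is trivial, which translated back through $\rho_k$ yields exactly $\chi_{p'}^k(b)=(-1)^{d_{p'}^k(p'-1)/4}$ for odd $p'$ and $\sigma_k(b)\in\{0,\infty\}$ for $p=2$. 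For the \emph{if} direction I would start from the normal forms of Table \ref{tab:table-name} and show the invariant conditions force the normal form to be an orthogonal sum of standard hyperbolic pieces that visibly carry a Lagrangian: for odd $p$ the rank-$2$ hyperbolic $H_{p^k}$ with matrix $\tfrac{1}{p^k}\left(\begin{smallmatrix}0&1\\1&0\end{smallmatrix}\right)$, whose isotropic coordinate line $\langle e_1\rangle$ is a metabolizer; and for $p=2$ the forms $E_{2^k}$ and $A_{2^k}+B_{2^k}$ (matrix $\mathrm{diag}(1/2^k,-1/2^k)$, metabolizer generated by $(1,1)$), together with the higher-rank split pieces read off the table. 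The relation $2X_{p^r}=2Y_{p^r}$ and the identity $H_{p^k}\cong 2X_{p^k}$ or $X_{p^k}+Y_{p^k}$ according to $\left(\frac{-1}{p}\right)=\pm1$ handle the bookkeeping.

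Part 2 is then the quotient computation. For odd $p$ the homogeneous monoid has the presentation $\langle X_{p^k},Y_{p^k}\mid 2X_{p^k}=2Y_{p^k}\rangle$, and killing the split generator gives the result: if $\left(\frac{-1}{p}\right)=+1$ the split generator is $2X_{p^k}$, so $X_{p^k},Y_{p^k}$ become independent involutions and $W_p^k\cong\IZ/2\oplus\IZ/2$; if $\left(\frac{-1}{p}\right)=-1$ the split generator is $X_{p^k}+Y_{p^k}$, so $Y_{p^k}=-X_{p^k}$ and $4X_{p^k}=2(X_{p^k}+Y_{p^k})=0$ while $2X_{p^k}\neq 0$, giving $W_p^k\cong\IZ/4$. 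For $p=2$ the signature invariants define homomorphisms $\sigma_k:\mathrm{bil}(\IZ)_2\to\overline{\IZ}_8$; by part 1 they descend to $W_2$, and by Miranda's completeness (Proposition \ref{Proposition_Miranda}) they separate Witt classes. Reading $\sigma_r,\sigma_{r-1},\sigma_{r-2}$ off Table \ref{tab:table-name} then gives $W_2^1\cong\IZ/2$, $W_2^2\cong\IZ/8$ (generated by $A_{2^2}$, whose invariant $\sigma_1=1$ generates $\IZ/8$), and $W_2^k\cong\IZ/8\oplus\IZ/2$ for $k\geq 3$ (the two summands carried by $\sigma_{k-1}$ and $\sigma_{k-2}$); assembling over $k$ yields $W_2$.

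I expect the real difficulty to sit entirely on the $p=2$ side, in two places. First, establishing that split is detected component-wise even though a Lagrangian need not respect the exponent grading: this is precisely where the $\rho_k$-filtration argument must be run with care, controlling how a metabolizer meets the subquotients $p^{k-1}\sD/p^k\sD$. Second, converting the \emph{semigroup}-valued data $\sigma_k\in\overline{\IZ}_8$ into the actual abelian \emph{group} structure of $W_2$ --- tracking the $\infty$ values produced by the odd-diagonal generators $A,B,C,D$ and verifying that no further relations survive --- is the delicate endpoint. By contrast the odd-$p$ computation is essentially the Witt group of $\IF_p$ and is mechanical.
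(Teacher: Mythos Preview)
The paper does not prove this proposition; it is quoted from Kawauchi--Kojima \cite{Kawauchi1980} and used as input. So there is no ``paper's own proof'' to compare against, and your task is really to reproduce a result from the literature.

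That said, your outline has one structural gap. You assert early on that once split is detected component-wise ``the Witt group splits accordingly'' as $\bigoplus_{p,k} W_p^k$, and later you write ``assembling over $k$ yields $W_2$.'' But the proposition you are trying to prove says precisely that this fails at $p=2$: the statement is that $W_2$ is an infinite direct sum of copies of $\IZ/2$, whereas $\bigoplus_{k\ge 1} W_2^k = \IZ/2 \oplus \IZ/8 \oplus (\IZ/8\oplus\IZ/2) \oplus \cdots$ has elements of order $8$. The paper flags this explicitly just after the proposition: ``for $p=2$, there is mixing among forms on groups with different exponents, and $W_2$ is not a direct sum of $W_2^k$ with different $k$'s.'' Concretely, a form like $A_{2^k}+B_{2^{k+1}}$ can be split in $W_2$ while neither homogeneous piece is split in its own $W_2^k$; the Lagrangian genuinely mixes exponents. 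So the natural surjection $\bigoplus_k W_2^k \to W_2$ has a large kernel, and your ``assemble over $k$'' step computes the wrong group.

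Your odd-$p$ argument is fine and essentially the classical Witt theory over $\IF_p$. For $p=2$ you correctly identify that the $\rho_k$-filtration and the $\overline{\IZ}_8$-valued invariants are the right tools, and you correctly flag both as delicate. But the fix is not just ``run the argument with more care'': you must abandon the exponent-by-exponent decomposition of the Witt group and instead work with the full sequence $\{\sigma_k\}_{k\ge 1}$ as an invariant of the whole $2$-local form, then determine directly which sequences in $\prod_k \overline{\IZ}_8$ are realized and which correspond to split forms. The answer that $W_2$ is all $2$-torsion comes from relations \emph{between} adjacent exponents (visible in the Kawauchi--Kojima relations, which the paper alludes to but does not list), not from anything you can see one $k$ at a time.
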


Notice that for $p=2$, there is mixing among forms on groups with different exponents, and $W_2$ is not a direct sum of $W_2^k$ with different $k$'s. Matching Proposition \ref{Tlinking} and \ref{KandK}, we observe that the monoid of split forms is a sub--monoid of finite $\mathsf{T}$-forms. Analogously to the standard Witt group, we introduce the Abelian monoid of $\mathsf{T}$-forms modulo split forms and call it $\mathsf{T}$-Witt group, denoted by $\mathsf{T}W$. The subgroups $\mathsf{T}W_p$ and $\mathsf{T}W_p^{k}$ are defined accordingly. 

\begin{proposition}
\label{TWitt}
$\mathsf{T}W\cong \bigoplus_p \mathsf{T}W_p$. When $p$ is odd, $\mathsf{T}W_{p}\cong \bigoplus_k \mathsf{T}W_p^{k}$, and for each $k\geqslant 1$: 
\begin{align}
\mathsf{T}W_p^{k}\cong \begin{cases}
\IZ/2\oplus \IZ/2\quad \mbox{if}\quad \left(\frac{-1}{p}\right)=+1\,,\\
\IZ/2 \quad \quad \quad\;\; \,\,\mbox{if}\quad \left(\frac{-1}{p}\right)=-1\,.
\end{cases}\,
\end{align}
$\mathsf{T}W_2^1\cong \IZ/2$, $\mathsf{T}W_2^2\cong \IZ/2$ and, for each $k\geqslant 3$, $\mathsf{T}W_2^{k}\cong \IZ/2\oplus \IZ/2$.
\end{proposition}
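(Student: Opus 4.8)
The plan is to combine two facts: that $\mathsf{T}W$ is annihilated by $2$, and that the ambient Witt groups $W_p^k$ are already computed in Proposition~\ref{KandK}. The starting observation is that every class in $\mathsf{T}W$ is its own inverse: if $b$ is a $\T$-form then $b\cong -b$, so in the Witt group $[b]=[-b]=-[b]$, whence $2[b]=0$. Thus $\mathsf{T}W$ is an $\IF_2$-vector space, which already explains why each summand in the statement is a product of copies of $\IZ/2$. Since $\T$-forms are mapped injectively into $W_p^k$ (the quotient is by the same split forms in both cases), $\mathsf{T}W_p^k$ is a subgroup of $W_p^k$ contained in its $2$-torsion subgroup. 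By Proposition~\ref{KandK} that $2$-torsion subgroup is $\IZ/2\oplus\IZ/2$ for $p\equiv1$ mod $4$, $\IZ/2$ for $p\equiv3$ mod $4$, $\IZ/2$ for $W_2^1$ and $W_2^2$, and $\IZ/2\oplus\IZ/2$ for $W_2^k$ with $k\geqslant3$. These are exactly the claimed groups, so the whole problem reduces to showing each of these $2$-torsion subgroups is in fact realized by $\T$-forms.

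Next I would justify the direct-sum decompositions $\mathsf{T}W\cong\bigoplus_p\mathsf{T}W_p$ and, for odd $p$, $\mathsf{T}W_p\cong\bigoplus_k\mathsf{T}W_p^k$. Both the $\T$-form criterion (Proposition~\ref{Tlinking}) and the split criterion (Proposition~\ref{KandK}) are local: they are phrased purely through the invariants $d_p^k$, $\chi_{p'}^k$, $\sigma_k$ of the individual homogeneous components, and automorphisms of $\sD$ preserve the prime and (for odd $p$) homogeneous decompositions. Hence the decomposition of the full monoid restricts to the submonoid of $\T$-forms and descends to the quotient. For $p=2$ there is genuine mixing between exponents, so $\mathsf{T}W_2^k$ is kept defined on homogeneous exponent-$k$ groups only, exactly as $W_2^k$ is.

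Then I would realize each upper bound by exhibiting explicit non-split $\T$-generators from Proposition~\ref{T-generators}. For $p\equiv1$ mod $4$ every form is a $\T$-form (Example~1), so $\mathsf{T}W_p^k=W_p^k=\IZ/2\oplus\IZ/2$ with nothing further to prove. For $p\equiv3$ mod $4$ the $\T$-form $2X_{p^r}$ fails the split criterion, so $[2X_{p^r}]$ is the nonzero $2$-torsion element of $W_p^k\cong\IZ/4$ and $\mathsf{T}W_p^k=\IZ/2$. For $p=2$, $k=1$ all forms are $\T$-forms and $\mathsf{T}W_2^1=W_2^1=\IZ/2$; for $k=2$ the $\T$-form $4A_{2^2}$ has $\sigma_1=4$, hence is non-split and represents the order-two element of $W_2^2\cong\IZ/8$, giving $\mathsf{T}W_2^2=\IZ/2$. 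The decisive case is $k\geqslant3$: here I would take the two $\T$-forms $4A_{2^k}$ and $F_{2^k}$, read off their signature invariants from Table~\ref{tab:table-name} ($\sigma_{k-1}=\sigma_{k-2}=4$ for the first, $\sigma_{k-1}=4$, $\sigma_{k-2}=0$ for the second), and use additivity of the signature invariants to check that $[4A_{2^k}]$, $[F_{2^k}]$ and their sum each fail the split criterion (the sum has $\sigma_{k-1}=0$ but $\sigma_{k-2}=4\neq0,\infty$). This makes the two classes independent and shows they span the full $2$-torsion subgroup $\IZ/2\oplus\IZ/2$ of $W_2^k$.

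The step I expect to be the main obstacle is precisely this case $p=2$, $k\geqslant3$. Unlike the odd-$p$ and small-$k$ situations, $W_2^k\cong\IZ/8\oplus\IZ/2$ is non-cyclic with a two-dimensional $2$-torsion subgroup, so producing a single non-split $\T$-form does not suffice: one must track the two signature invariants $\sigma_{k-1}$ and $\sigma_{k-2}$ simultaneously and confirm linear independence modulo split forms. Extra care is needed to verify that the candidate generators really are $\T$-forms and that their pairwise sums stay non-split, which is exactly where the additivity of the Kawauchi--Kojima signature invariants and the explicit normal-form table carry the computation.
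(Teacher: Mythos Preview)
Your proposal is correct and follows essentially the same route as the paper's sketch: both arguments identify the $\T$-forms inside the already-computed Witt groups $W_p^k$ using the Seifert and Kawauchi--Kojima invariants, and both exhibit explicit non-split $\T$-form generators to fill out the $2$-torsion subgroup. The only cosmetic differences are that you make the ``$\mathsf{T}W$ is $2$-torsion, hence bounded above by the $2$-torsion of $W$'' step explicit, and for $k\geqslant3$ you choose $F_{2^k}$ as the second generator rather than the paper's $B_{2^k}+C_{2^k}$ (equivalently $A_{2^k}+D_{2^k}$); either choice works and your independence check via $\sigma_{k-1},\sigma_{k-2}$ is valid.
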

\begin{proof}
A sketch of the proof is the following. The $W_p^k$ for $p\equiv 1$ mod $4$ can be presented as $\IZ/2\oplus \IZ/2\cong \la  X_{p^k}, Y_{p^k}\,|\,2X_{p^k}=2Y_{p^k}=0\ra$: all of the generators are $\mathsf{T}$-forms. The $W_p^k$ for $p\equiv 3$ mod $4$ can be presented as $\IZ/4\cong \la  X_{p^k}, Y_{p^k}\,|\, X_{p^k}+Y_{p^k}=0\ra$: the only $\mathsf{T}$-form among the generators is $2X_{p^r}$. For $p=1,k=1$, the only generator of the Wit group is $A_1$, for $p=2,k=2$, the generator is $A_4$, and for $p=2,k\geqslant 3$, the generators are $A_{2^k}$ and $B_{2^k}\oplus C_{2^k}$. Among them, the $\mathsf{T}$-forms are $A_2$, $4A_4$, and $B_{2^k}\oplus C_{2^k}$.
\end{proof}
For reference, let us collect presentations for $W$ and $\T W$. 
 \begin{align}
&W_{p}^{k}\cong \IZ/2\oplus \IZ/2\cong \la  [X_{p^k}], [Y_{p^k}]\,|\,[2X_{p^k}]=[2Y_{p^k}]=0\ra\,,\quad\quad\quad\quad\quad\quad\quad\quad\;\;\; p\equiv 1\;\mbox{mod}\;4\,,\;k\geqslant 1 \nonumber\\
&W_{p}^{k}\cong \IZ/4\cong  \la  [X_{p^k}], [Y_{p^k}]\,|\, [X_{p^k}+Y_{p^k}]=0\,,\;2[X_{p^k}]=2[Y_{p^k}] \ra\,,\quad\quad\quad\quad\quad \quad\, p\equiv 3\;\mbox{mod}\;4\,,\;k\geqslant 1  \nonumber\\
&W_{2}^{1}\cong \IZ/2 \cong  \la  [A_2]\,|\,2 [A_2]=0\ra \nonumber\\
&W_{2}^{2}\cong \IZ/8 \cong  \la  [A_4]\,|\,8 [A_4]=0\ra \nonumber\\
&W_{2}^{n}\cong \IZ/8\oplus \IZ/2 \cong  \la  [A_{2^n}],\;[A_{2^n}+D_{2^n}]\,|\,8 [A_{2^n}]=0,\;2[A_{2^n}+D_{2^n}]=0\ra\,,\quad n\geqslant 3 \nonumber\\
&\T W_{p}^{k}\cong \IZ/2\oplus \IZ/2\cong \la  [X_{p^k}], [Y_{p^k}]\,|\,2[X_{p^k}]=2[Y_{p^k}]=0\ra\,,\quad\quad\quad\quad\quad\quad\quad\quad\; p\equiv 1\;\mbox{mod}\;4\,,\;k\geqslant 1 \nonumber\\
&\T W_{p}^{k}\cong \IZ/2\cong   \la  2[X_{p^k}]\,|\,4 [X_{p^k}]=0 \ra\,,\quad\quad\quad\quad\quad \quad\quad\quad\quad\quad\quad\quad\quad\quad\quad\quad\quad p\equiv 3\;\mbox{mod}\;4\,,\;k\geqslant 1  \nonumber\\
&\T W_{2}^{1}\cong \IZ/2 \cong  \la  [A_2]\,|\,2[ A_2]=0\ra \nonumber\\
&\T W_{2}^{2}\cong \IZ/8 \cong  \la  [A_4]\,|\,8 [A_4]=0\ra \nonumber\\
&\T W_{2}^{n}\cong \IZ/2\oplus \IZ/2 \cong  \la  4[A_{2^n}],\;[A_{2^n}+D_{2^n}]\,|\,8 [A_{2^n}]=2[A_{2^n}+D_{2^n}]=0\ra\,,\;\;\;\; n\geqslant 3 \nonumber\\
\end{align}

Notice that the $\T$-Witt group is a subgroup of $W$ consisting of elements of order $2$. This is not a coincidence, as the following theorem demonstrates.
\begin{theorem}
\label{WittTforms}
If $(\sD,b)$ is a $\T$-form, then its class in the Witt group of bilinear forms $W$ is an element of order $2$. Any element of order $2$ in $W$ lifts to a $\T$-form.
\begin{proof}
 We consider $p=2$ and $p\neq2$ cases separately. 

1) $p=2$. Consider a $\T$-form $(\sD_2,b_2)$ on a $2$-group having invariants $\{d_2^{k}\}_{k=1}$ and $\{\sigma_k(b_2)\}_{k=1}$. According to Proposition \ref{Tlinking}, the signature invariants take values $0$, $4$, or $\infty$. Therefore, $(\sD_2\oplus \sD_2,b_2\oplus b_2)$ has signature invariants $0$ or $\infty$ and is necessarily split. For the second statement, let $(\sD_2,b_2)$ represent an order $2$-element in $W$: the signature invariants must satisfy $2\sigma =0$ or $\infty$. The multiplication table of $\bar \IZ_8$ shows that $(\sD_2,b_2)$ is necessarily a $\T$-form. 

2) $p\neq 2$. Let us fix a $\T$-form $((\IZ/p^k)^{d_{p}^{k}},b_p)$ on the homogeneous group of exponent $k\geqslant 1$. There are two subcases: $p\equiv 1$ mod $4$ and $p\equiv 3$ mod $4$.

$p\equiv 1$ mod $4$: $d_p^k$ and $\chi_{p}^k(b_p)$ are arbitrary. Multiplication by two makes $d_p^{k}$ even and $\chi_{p}^k(b_p)=1$. This is necessarily a split form according to Proposition \ref{KandK}.

$p\equiv 3$ mod $4$: $d_p^{k}$ is even, $\chi_p^k(b_p)$ is arbitrary. After multiplication by two, $d_p^{k}\equiv $ mod $4$, and $\chi_p^{k}(2b_p)=1$ -- this form must be split. 

 For the second statement of this theorem, we consider a finite form whose invariant $d_{p}^{k}$ satisfies $1=(-1)^{2d_{p}^{k}(p-1)/4}$. This holds for any $d_{p}^{k}$ if $p\equiv 3$ mod $4$, and for even $d$ if $p\equiv 3$ mod $4$. This way, we obtain all $\T$-forms on homogeneous $p$-groups with exponent $k$.
\end{proof}
\end{theorem}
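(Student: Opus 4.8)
The plan is to prove the two implications separately, using a structural fact for the forward direction and the explicit classification for the converse. For the first statement I would not invoke the signature invariants at all, but rather the standard fact that negation is inversion in the Witt group. For any non-degenerate finite form $(\sD,b)$, the diagonal $\Delta=\{(x,x):x\in\sD\}$ is a direct summand of $\sD\oplus\sD$ and is its own orthogonal complement for the form $b\oplus(-b)$: isotropy is immediate since $b(x,y)+(-b)(x,y)=0$, the complementary summand $\{0\}\oplus\sD$ exhibits $\Delta$ as a direct summand, and $\Delta^\perp\subseteq\Delta$ follows from non-degeneracy of $b$. Hence $(\sD\oplus\sD,\,b\oplus(-b))$ is split and $[-b]=-[b]$ in $W$. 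If $(\sD,b)$ is a $\T$-form then $b\cong -b$, so $[b]=[-b]=-[b]$ and $2[b]=0$; the class may well be trivial (as for split $\T$-forms), so ``order $2$'' is here understood as $2$-torsion.

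For the converse I would work componentwise through $W\cong\bigoplus_p W_p$ and exhibit, for each element of the $2$-torsion subgroup, a $\T$-form representing it, reading the group structure from the presentations following Proposition \ref{KandK} and the $\T$-generators from Proposition \ref{T-generators}. For odd $p\equiv 1\bmod 4$ the group $W_p^k\cong\IZ/2\oplus\IZ/2$ is entirely $2$-torsion and is generated by $[X_{p^k}]$ and $[Y_{p^k}]$, both $\T$-forms. For $p\equiv 3\bmod 4$, $W_p^k\cong\IZ/4$ and its unique order-$2$ element $2[X_{p^k}]$ is represented by the $\T$-form $2X_{p^k}$.

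The case $p=2$ is the one I expect to be the main obstacle, because $W_2$ does not decompose as $\bigoplus_k W_2^k$ and the Kawauchi--Kojima relations mix generators of different exponents. Here I would read off the $2$-torsion generators of $W_2^1\cong\IZ/2$, $W_2^2\cong\IZ/8$ and $W_2^n\cong\IZ/8\oplus\IZ/2$ (for $n\geq 3$) and match each with a $\T$-form: the order-$2$ element $4[A_4]$ of $W_2^2$ with $4A_4$, the element $4[A_{2^n}]$ with $4A_{2^n}$, and $[A_{2^n}+D_{2^n}]$ with $B_{2^n}+C_{2^n}$, the last requiring the relations to confirm that the two lie in the same Witt class. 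Organized this way, the converse is exactly the assertion that each $\T W_p^k$ computed in Proposition \ref{TWitt} coincides with the $2$-torsion subgroup $W_p^k[2]$ summand by summand; the delicate point is verifying this equality at $p=2$ rather than merely an inclusion.
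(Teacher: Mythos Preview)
Your forward direction is correct and is genuinely cleaner than the paper's. You prove the prime-independent fact that $[-b]=-[b]$ in $W$ via the diagonal, and then $b\cong -b$ gives $2[b]=0$ immediately. The paper instead argues prime by prime using the invariants: for $p=2$ it doubles the signature invariants $\sigma_k\in\{0,4,\infty\}$ to land in $\{0,\infty\}$ and invokes Proposition~\ref{KandK}; for odd $p$ it tracks $d_p^k$ and $\chi_p^k$ under doubling. Your route avoids all of that bookkeeping.

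For the converse your plan agrees with the paper at odd primes, but at $p=2$ the paper does \emph{not} match generators through the Kawauchi--Kojima relations. It argues directly with signature invariants: if $[b_2]$ has order two then $b_2\oplus b_2$ is split, so by Proposition~\ref{KandK} each $\sigma_k(b_2\oplus b_2)=2\sigma_k(b_2)$ lies in $\{0,\infty\}$, and in $\overline{\IZ}_8$ this forces $\sigma_k(b_2)\in\{0,4,\infty\}$, whence $b_2$ itself is a $\T$-form by Proposition~\ref{Tlinking}. This is both shorter and stronger than what you propose: it shows \emph{every} representative of a $2$-torsion class in $W_2$ is already a $\T$-form, so the mixing between different exponents that you flag as the main obstacle simply does not arise. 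Your generator-matching strategy would work too, but it is the harder road; the identity $[A_{2^n}+D_{2^n}]=[B_{2^n}+C_{2^n}]$ you would need is a consequence of the relations rather than something you get for free.
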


\section{Quantum Data}
\label{QuantumData}
 In this section, we assemble the data defining quantum $\mathsf{T}$-invariant toral CS theories from the finite $\mathsf{T}$-forms. We fix the finite $\mathsf{T}$-form $(\sD,b)$ and try to find all possible $\T$-refinements refining this form.
 
 Let us begin with a review of the relations between finite bilinear and quadratic forms. A quadratic form $q$ refines the bilinear form $b$ on $\sD$ if $\eqref{QuRef}$ holds and $q(nx)=n^2q(x)$ for all $x\in \sD$. In particular, $2q(x)=b(x,x)$ for all $x\in \sD$. In general, this equation admits multiple solutions since the forms are valued in the torsion group $\IQ/\IZ$. However, multiple quadratic forms refining the same bilinear form are often isomorphic. It is a classical result  \cite{Wall1963QuadraticFO,DURFEE1977} that generators of $\mbox{bil}(\IZ)$ can be refined by not more than two quadratic forms up to isomorphism. The correspondence between the isomorphism classes of finite bilinear and finite quadratic forms is one-to-one \cite{Wall1963QuadraticFO,DURFEE1977} for forms on $p$-groups with odd $p$ and for forms on $2$-groups of exponent greater than $2$; otherwise the correspondence is one-to-two. 
 
 \begin{definition}
 $\mathrm{quad}(\IZ)$ is the monoid of finite quadratic forms modulo the equivalence relation $\cong$ defined in \ref{DefFinForms} with orthogonal sum as the binary operation.
\end{definition} 
Let us introduce a notation for the standard representatives of the generators of $\mathrm{quad}(\IZ)$:
 \begin{align}
 \label{quadformgenerators}
&\lambda_{p^{r}}^{+ 1}\quad \mbox{on}\;\; \IZ/p^r\,,\;r\geqslant 1\;;\;q(x)= ux^2/p^{r+1}\,,\nonumber\\
&\lambda_{p^{r}}^{-1}\quad \mbox{on}\;\; \IZ/p^r\,,\;r\geqslant 1\;;\;q(x)=vx^2/p^{r+1}\,,\nonumber\\
&\omega_{2^{r}}^{\pm 1}\quad \mbox{on}\;\; \IZ/2^r\,,\;r\geqslant 1\;;\;q(x)=\pm x^2/2^{r+1}\,,\nonumber\\
&\omega_{2^{r}}^{\pm 5}\quad \mbox{on}\;\; \IZ/2^{r}\,,\;r\geqslant 2\;;\;q(x)=\pm 5x^2/2^{r+1}\,,\nonumber\\
&\phi_{2^{r}}\quad \mbox{on}\;\; \IZ/2^{r}\oplus \IZ/2^{r}\,,\;r\geqslant 1\;;\;q(x,y)=xy/2^r\,,\nonumber\\
&\psi_{2^{r}}\quad \mbox{on}\;\; \IZ/2^r\oplus \IZ/2^r\,,\;r\geqslant 1\;;\;q(x,y)=(x^2+xy+y^2)/2^r
\end{align}
where $p$ is an odd prime, $\left( \frac{2u}{p}\right)=1$, and
$\textstyle{\left( \frac{2v}{p}\right)}=-1$.
 
 The correspondence between the generators of $\mathrm{bil}(\IZ)$ and $\mathrm{quad}(\IZ)$ is the following: $X_{p^r}\leftrightarrow \lambda_{p^{r}}^{+1}$, $Y_{p^r}\leftrightarrow \lambda_{p^{r}}^{-1}$, $E_{2^n}\leftrightarrow \phi_{2^n}$, $F_{2^n}\leftrightarrow \psi_{2^n}$, $A_{2^m}\leftrightarrow \omega_{2^m}^{+1}$, $B_{2^m}\leftrightarrow \omega_{2^m}^{-1}$, $C_{2^m}\leftrightarrow \omega_{2^m}^{+5}$, $D_{2^m}\leftrightarrow \omega_{2^m}^{-5}$ for all $r\geqslant 1$, $n\geqslant 2$, and $m\geqslant 3$. For small ranks, $A_2\leftrightarrow \{\omega_{2}^{+1},\omega_{2}^{-1}\}$, $A_4\leftrightarrow\{\omega_{4}^{+1},\omega_{4}^{+5}\}$, $B_4\leftrightarrow\{\omega_{4}^{-1},\omega_{4}^{-5}\}$, $E_2\leftrightarrow\{\phi_2,\psi_2\}$.

 \subsection{Quadratic \texorpdfstring{$\T$}{Lg}-forms}
 Let us now discuss quadratic $\T$-forms. First of all, we notice that a quadratic form $(\sD,q)$ refining the bilinear $\T$-form $(\sD,b)$ need not not be $\T$-symmetric. For example, both $\omega_2^{+1}$ and $\omega_{2}^{-1}$ refining the $\T$-form $A_2$ are not $\T$-symmetric. This happens because the negation operation maps one isomorphism class to another $\T:\omega_2^{\pm 1}\mapsto \omega_2^{\mp 1}$. On the contrary, if $(\sD,b)$ is a bilinear form that can be refined by exactly one isomorphism class of a quadratic form, then any quadratic form refining $(\sD,b)$ is automatically $\T$-symmetric. 
 
The mismatch between the bilinear and quadratic $\T$-forms happens only for $2$-groups of exponent not greater than $2$. Therefore, our analysis will be complete if we work out groups of the small rank separately. 
 \begin{lemma}
 \label{lemma}
 A quadratic form 
 \begin{align}
q=a_2^{+1}\omega_2^{+1}+a_2^{-1}\omega_2^{-1}+a_4^{+1}\omega_4^{+1}+a_4^{+5}\omega_4^{+5}+b_4^{-1}\omega_4^{-1}+b_4^{-5}\omega_4^{-5}
\end{align}
where $a_2^{+1}, a_2^{-1}, a_4^{+1}, a_4^{+5}, b_4^{-1}, b_4^{-5}$ are non-negative integers is $\T$-symmetric if and only if the following condition holds: 
\begin{align}
\label{Reality}
a_2^{+1}-a_2^{-1}+a_4^{+1}+a_4^{+5}-b_4^{-1}-b_4^{-5}\equiv 0 \;\mbox{mod}\;4\,.
\end{align}
Note, $a_2^{+1} \omega_2^{+1}$ is a short notation for $\bigoplus_{i=1}^{a_2^{+1}}\omega_{2}^{+1}$.
 \end{lemma}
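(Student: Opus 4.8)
The plan is to prove the equivalence by showing that \eqref{Reality} holds exactly when $(\sD,q)\cong(\sD,-q)$, with $\sD$ the common underlying $2$-group. First I would record the Gauss--Milgram signatures of the six generators, $\sigma(\omega_2^{+1})=1$, $\sigma(\omega_2^{-1})=-1$, $\sigma(\omega_4^{+1})=1$, $\sigma(\omega_4^{+5})=5$, $\sigma(\omega_4^{-1})=-1$, $\sigma(\omega_4^{-5})=-5$ modulo $8$, each a one-line rank-one Gauss sum read off from \eqref{GaussMilgram}. Additivity under orthogonal sum then gives $\sigma(q)\equiv a_2^{+1}-a_2^{-1}+a_4^{+1}+a_4^{+5}-b_4^{-1}-b_4^{-5}\pmod 4$, since $5\equiv 1$ and $-5\equiv -1\pmod 4$. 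Necessity is now immediate: if $q$ is $\T$-symmetric then $q\cong-q$, so their Gauss sums coincide; as $\sigma(-q)=-\sigma(q)$, the sum $e^{2\pi i\sigma(q)/8}$ must be real, i.e. $\sigma(q)\equiv0\pmod4$, which is exactly \eqref{Reality}.

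For sufficiency I would isolate the one classical input that does the work: on a finite $2$-group, two quadratic refinements of the \emph{same} nondegenerate bilinear form are isomorphic iff they have the same signature mod $8$ (the Brown invariant). This is consistent with the refinement count already recalled, namely at most two isomorphism classes per generator, separated by their Gauss sum. Granting it, $q\cong-q$ holds iff two conditions are met: (A) the underlying bilinear forms agree, $b\cong-b$, which is forced because any isomorphism of quadratic forms descends to the bilinear level; and (B) $\sigma(q)\equiv0\pmod4$. Indeed, given $\gamma\in\Aut(\sD)$ realizing (A), the forms $q$ and $(-q)\circ\gamma$ refine one and the same bilinear form and carry signatures $\sigma(q)$ and $-\sigma(q)$, so (B) makes them isomorphic by the Brown-invariant classification, whence $q\cong -q$.

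It then remains to show that \eqref{Reality} is equivalent to (A) and (B) together. By the first paragraph, (B) \emph{is} \eqref{Reality}, so I only need that \eqref{Reality} already forces (A). Here I would split on the number $m=a_2^{+1}+a_2^{-1}$ of $\IZ/2$ summands. The bilinear form of $q$ is $m\,A_2\oplus\alpha A_4\oplus\beta B_4$ with $\alpha=a_4^{+1}+a_4^{+5}$ and $\beta=b_4^{-1}+b_4^{-5}$; its Kawauchi--Kojima invariants are $\sigma_2=\infty$ whenever $\alpha+\beta\ge1$, while $\sigma_1=\infty$ as soon as $m\ge1$ and $\sigma_1\equiv\alpha-\beta\pmod8$ when $m=0$. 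By Proposition~\ref{Tlinking} this form is a $\T$-form iff every $\sigma_k\in\{0,4,\infty\}$: for $m\ge1$ this is automatic, so (A) holds with no constraint; for $m=0$ it requires $\alpha-\beta\equiv0\pmod4$, which is precisely \eqref{Reality} with $a_2^{\pm1}=0$. In both cases \eqref{Reality} implies (A), closing the equivalence.

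The main obstacle is the completeness statement of the second paragraph, compounded by a genuinely deceptive feature: the signature alone is \emph{not} a complete invariant of quadratic forms on a fixed group, since the bilinear type can differ (for instance $2\omega_4^{+1}$ and $\omega_4^{-1}\oplus\omega_4^{-5}$ share a signature yet refine the non-isomorphic forms $2A_2^2$-block $2A_4$ and $2B_4$). The case $m\ge1$ is where the subtlety bites: a single $\IZ/2$ factor forces $\sigma_1=\infty$ and thereby makes $A_2\oplus A_4\cong A_2\oplus B_4$, so the bilinear form is automatically a $\T$-form and the naive exponent-by-exponent constraint on the $\IZ/4$ block dissolves. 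Concretely this is the cross-exponent isomorphism $\omega_2^{+1}\oplus\omega_4^{-1}\cong\omega_2^{-1}\oplus\omega_4^{+1}$, exhibited by an explicit change of generators, which lets a sign imbalance in the $\IZ/2$ block compensate one in the $\IZ/4$ block; this mixing is exactly why a single congruence \eqref{Reality}, rather than two separate ones, is the correct criterion.
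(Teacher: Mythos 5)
Your proposal is correct, and it is genuinely fuller than what the paper provides: the paper's entire proof of this lemma is the one line ``Direct check or use Proposition 4.8 of \cite{Miranda2009}'', so your argument is a filling-in rather than a retracing. Your necessity step (additivity of the Gauss--Milgram signature, $\sigma(\omega_2^{\pm1})=\pm1$, $\sigma(\omega_4^{\pm1})=\pm1$, $\sigma(\omega_4^{\pm5})=\pm5$, and $\sigma(-q)=-\sigma(q)$, so $\T$-symmetry forces $\sigma(q)\equiv 0$ mod $4$) is exactly right, and so is the reduction of sufficiency to (A) $b\cong-b$ plus (B) $\sigma(q)\equiv 0$ mod $4$: given $\gamma$ with $b\circ\gamma=-b$, the forms $q$ and $(-q)\circ\gamma$ refine literally the same bilinear form with signatures $\pm\sigma(q)$. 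Your Kawauchi--Kojima case analysis ($\sigma_1=\infty$ as soon as a $\IZ/2$ summand is present; $\sigma_1\equiv\alpha-\beta$ when $m=0$) correctly shows that \eqref{Reality} forces (A), and the cross-exponent relation $\omega_2^{+1}\oplus\omega_4^{-1}\cong\omega_2^{-1}\oplus\omega_4^{+1}$ you invoke is a true isomorphism (realized, e.g., by $e_1\mapsto e_1+2e_2$, $e_2\mapsto e_1+e_2$), so the structural explanation in your last paragraph of why a single mod-$4$ congruence suffices is also correct.

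The one caveat concerns the ``one classical input'' you grant: that on a finite $2$-group two quadratic forms refining the same nondegenerate bilinear form are isomorphic iff their signatures agree mod $8$. This statement is true, but attributing it to ``the Brown invariant'' is loose --- Brown's theorem covers exponent-one groups ($\IF_2$-vector spaces) only, whereas the case you actually need mixes $\IZ/2$ and $\IZ/4$ summands, and there the completeness statement is precisely the content of the classification of finite quadratic forms on $2$-groups (Kawauchi--Kojima \cite{Kawauchi1980}, Nikulin \cite{Nikulin1980}, or Miranda--Morrison \cite{Miranda2009} --- essentially the very Proposition 4.8 the paper cites). So your proof does not avoid the paper's classical input; it repackages it. What the repackaging buys is real, though: it routes the quadratic question through the bilinear $\T$-form criterion of Proposition \ref{Tlinking}, which the paper has already established independently, and it makes visible why the criterion is one congruence rather than separate conditions on the $\IZ/2$ and $\IZ/4$ blocks.
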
 
\begin{proof}
 Direct check or use Proposition 4.8 of \cite{Miranda2009}.
\end{proof}

Quadratic forms of type $\phi_2, \phi_4, \psi_2$, and $\psi_4$ are always $\T$-symmetric, so we did not include them in the lemma. In the previous section, we have used the invariants of Seifert and of Kawauchi and Kojima for the classification of finite bilinear forms. The latter invariants are certain Gauss sums of the associated quadratic form. 
\begin{definition}
Let $(\sD,q)$ be a non-degenerate quadratic form. We define its $n$-th Gauss sum $\tau_n(q)$ as 
\begin{align}
\tau_n(q)=\sum_{x\in \sD}e^{2\pi i n q(x)}\,.
\end{align}
\end{definition}
We can use the higher Gauss sums to characterize quadratic $\T$-forms directly, as the following theorem shows.
 
\begin{theorem}
\label{Th:quadTforms}
The finite quadratic form $(\sD,q)$ is $\T$-invariant if and only if the Gauss sums $\tau_n(q)$ are real for all $n=1,2,3,\dots\;$ .
\end{theorem}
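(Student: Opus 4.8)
The plan is to prove both directions by reducing to the generators of $\mathrm{quad}(\IZ)$ and computing their higher Gauss sums explicitly, then using the additivity of Gauss sums under orthogonal direct sum. First I would establish the multiplicativity $\tau_n(q\oplus q')=\tau_n(q)\,\tau_n(q')$, which follows immediately from the definition since $q\oplus q'$ evaluates as a sum over the product group. This reduces the reality condition for all $n$ to a condition on each homogeneous component, dovetailing with the localization $(\sD,q)\equiv\bigoplus_p\bigoplus_k$ used throughout Section \ref{FinForms}.

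For the forward direction, I would show that a $\T$-invariant quadratic form has real higher Gauss sums. If $(\sD,q)\cong(\sD,-q)$ via an automorphism $\gamma$, then $\tau_n(q)=\sum_x e^{2\pi i n q(\gamma x)}=\sum_x e^{-2\pi i n q(x)}=\overline{\tau_n(q)}$, where the middle equality uses that $\gamma$ is a bijection and $q\circ\gamma=-q$. This half is essentially formal and should occupy only a few lines. The subtlety is that the theorem statement refers to a quadratic $\T$-form in the sense of genuine forms (not refinements), so $\delta$-shifts do not intervene and the clean automorphism argument applies directly.

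The converse is the substantive direction and the main obstacle. Here I would argue contrapositively using the classification: assume $q$ is not $\T$-invariant and produce some $n$ with $\tau_n(q)$ non-real. The natural strategy is to note that $\tau_n(q)$ is, up to normalization, a Gauss-Milgram phase for the rescaled form $nq$ (when $\gcd(n,|\sD|)$ is controlled) or more generally a product of local Gaussian sums $\sum_{x\in\IZ/p^k}e^{2\pi i n a x^2/p^{k+1}}$. For each generator $\lambda_{p^r}^{\pm1}$, $\omega_{2^r}^{\pm1}$, $\omega_{2^r}^{\pm5}$, these are classical quadratic Gauss sums whose arguments I can read off; their reality as $n$ varies detects exactly the signature invariants mod $8$ and the Legendre-symbol data that Proposition \ref{Tlinking} uses to characterize $\T$-forms. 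The key point is that varying $n$ over residues coprime to $p$ lets one distinguish $\lambda_{p^r}^{+1}$ from $\lambda_{p^r}^{-1}$ and similarly separate the four $2$-adic types, so that reality of \emph{all} $\tau_n$ forces precisely the parity and signature constraints equivalent to $\{q\}\cong\{-q\}$.

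The hard part will be handling the $2$-group generators of exponent $1$ and $2$, exactly the cases singled out in Lemma \ref{lemma}, where the bilinear-to-quadratic correspondence is one-to-two and reality of the Gauss sums must be matched against the mod $4$ condition \eqref{Reality}. I would organize the converse so that, after multiplicativity, the reality of $\tau_1$ and $\tau_5$ (together with a few other small $n$) recovers the linear combination $a_2^{+1}-a_2^{-1}+a_4^{+1}+a_4^{+5}-b_4^{-1}-b_4^{-5}\bmod 4$, thereby connecting the Gauss-sum criterion to Lemma \ref{lemma} and completing the identification of $\T$-invariance with reality of all higher Gauss sums. The cleanest route may be to invoke Lemma \ref{lemma} directly for the anomalous small-exponent part and the one-to-one bilinear-quadratic correspondence (hence automatic $\T$-symmetry once the bilinear part is a $\T$-form) for everything else, checking in each case that the signature invariants lying in $\{0,4,\infty\}$ is equivalent to reality of the corresponding local Gauss sums.
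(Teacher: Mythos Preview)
Your plan matches the paper's proof almost exactly: the forward direction is the one-line bijection argument you give, and the converse proceeds by localizing to $p$-components via multiplicativity of $\tau_n$, then for odd $p$ comparing the explicit Gauss sums of $\lambda_{p^k}^{\pm1}$ against Condition~1) of Proposition~\ref{Tlinking}, and for $p=2$ splitting off the exponent-$\leq 2$ piece (handled by Lemma~\ref{lemma}) from the exponent-$\geq 3$ piece (where the bilinear--quadratic correspondence is one-to-one and the Kawauchi--Kojima invariants $\sigma_l$ are literally the arguments of $\tau_{2^l}$).

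One technical imprecision to fix: restricting to $n$ coprime to $p$ is not enough. For odd $p\equiv 3\pmod 4$ and \emph{even} exponent $k$, the sums $\tau_n(\lambda_{p^k}^{\pm1})$ with $\gcd(n,p)=1$ are all real; the non-reality that witnesses the failure of $\T$-invariance appears only at $n=p^{k-1}$ (the paper takes $n=p^{r-1}$ for $r$ the largest exponent present in $\sD_p$, and inducts downward). Likewise on the $2$-adic side the signature invariants are detected at $n=2^l$, not at odd $n$. The paper also explicitly rules out the possibility that non-real contributions from the $r\le 2$ and $r>2$ pieces of $\sD_2$ cancel in the product, by observing from the Gauss-sum table that an imaginary factor from $\omega_{2^k}^t$ requires $l<k$; you should include this check.
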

\begin{proof}

In one direction, the proof is obvious: the Gauss sums depend only on the isomorphism class of the form while the $\T$-form and its negation must be isomorphic.

 To prove the opposite direction, we decompose $\sD$ into a direct sum $\sD=\bigoplus_{p,k}(\IZ/p^k)^{d_{p}^{k}}$, which is orthogonal with respect to $q$, and decompose $q$ into a sum of the generators \eqref{quadformgenerators}. Correspondingly, the $n$-th Gauss sum for $q$ is a product over the orthogonal summands. It follows from equation \ref{eq:lambda-Gauss-sum} and Table \ref{GaussSums2groups} below that the Gauss sums of $q$  will be real for all $n$ iff for all primes $p$, the Gauss sums of $q_p$ are real for all $n$.  Therefore, it suffices to analyze   the reality of the Gauss sums of $q_p$ for the distinct primes $p$. We first discuss $p$ odd and then $p=2$. 
 
For a prime $p>2$, there are only two generators of $\mbox{bil}(\IZ)_p$ and there is a 1-1 correspondence between the isomorphism classes of bilinear and quadratic forms. Let $n=p^{l}s$ where $s$ is coprime to $p$, then the $n$-th Gauss sum reads \cite{Miranda2009}:
 \begin{align}\label{eq:lambda-Gauss-sum}
     \tau_n(\lambda_{p^k}^{+1})=&\begin{cases}
     p^{k/2},\quad \; &l\geqslant k\,\\
     p^{l/2}\left(\frac{s}{p}\right)^{k-l}(-i)^{\frac{p-1}{2}},\quad \; &l\leqslant k\;\mbox{and}\;l-k\;\mbox{odd}\,\\
     p^{l/2},\quad \;\; &l< k\;\mbox{and}\;l-k\;\mbox{even}\,\\
     \end{cases}\\
          \tau_n(\lambda_{p^k}^{-1})=&\begin{cases}
     p^{k/2},\quad \; &l\geqslant k\,\\
     -p^{l/2}\left(\frac{s}{p}\right)^{k-l}(-i)^{\frac{p-1}{2}},\quad \; &l\leqslant k\;\mbox{and}\;l-k\;\mbox{odd}\,\\
     p^{l/2},\quad \; &l< k\;\mbox{and}\;l-k\;\mbox{even}\,\\
     \end{cases}
 \end{align}
Note, the Gauss sums $\tau_n(\lambda_{p^k}^{\pm 1})$ can have imaginary part only when $p$ divides $n$. If $\left(\frac{-1}{p}\right)=1$, then all the Gauss sums for both $\lambda_{p^k}^{\pm 1}$ are real. If $\left(\frac{-1}{p}\right)=-1$, then $\tau_{p^{k-1}}(\lambda_{p^k}^{\pm 1})$ is proportional to $i$.  Let $r$ be the largest exponent of $\sD_p$, then the imaginary part of $\tau_{p^{r-1}}$ is absent only if $d_p^r$ is even. We conclude that the reality of all Gauss sums for $(\sD_p,q_p)$ implies  Condition 1) of Proposition \ref{Tlinking}. In other words, all Gauss sums of $(\sD_p,q_p)$ are real if and only if it is a quadratic $\T$-form. 

Next, consider quadratic forms on $2$-groups. We provide the Gauss sums for the basic quadratic forms in Table \ref{Gausssums} \cite{Miranda2009}.
\begin{table}[h!]
\label{GaussSums2groups}
\begin{center}
    \begin{tabular}{  | c | c | c | c | }
    \hline
\multirow{2}{*}{$q$}  & \multicolumn{3}{|c|}{$\tau_{2^{l}s}(q)$} \\ \cline{2-4}
&$l<k$ & $l=k$ & $l>k$\\ \hline
$\omega_{2^k}^{t}$ & $2^{l/2}\left(\frac{2}{s}\right)^{k-l}e^{2\pi i s \frac{t}{8}}$ & $0$ & $2^{r/2}$  \\
  $\phi_{2^k}$ & $2^{l}$  & $2^{k}$ & $2^{k}$ \\
 $\psi_{2^k}$ & $(-1)^{k-l}2^{l}$  & $2^{k}$ & $2^{k}$ \\
    \hline
    \end{tabular}
\end{center}
\caption{\label{Gausssums}Gauss sums for the standard quadratic forms on 2-groups. Here, $l\geqslant0$ and $s$ is an odd natural number, and $t\in \{\pm 1, \pm 5\}$.}
\end{table}
Given the decomposition of $(\sD_2,q_2)$ into a direct sum of the generators \eqref{quadformgenerators}, we further split it into two parts: $(\sD_2,q_2)=(\sD_2^{r\leqslant 2},q_2^{r\leqslant2})\oplus(\sD_2^{r> 2},q_2^{r>2})$, where $\sD_2^{r\leqslant 2}$ is a subgroup of $\sD_2$ whose cyclic subgroups have exponent at most $2$ and $\sD_2^{r>2}$ is the rest. We analyze these two cases separately. 

First, consider $(\sD_2^{r\leqslant 2},q_2^{r\leqslant2})$: the reality of $\tau_1(q_2^{r\leqslant2})$ is equivalent to the reality of $\tau_3(q_2^{r\leqslant2})$ and equivalent to \eqref{Reality}. The reality of $\tau_1(q_2^{r\leqslant2})$ and $\tau_4(q_2^{r\leqslant2})$ does not imply any extra conditions. Clearly, $\phi_2,\phi_4,\psi_2$, and $\psi_4$ are $\T$-invariant and their Gauss sums are real. 

Next, we proceed to $(\sD_2^{r> 2},q_2^{r>2})$. We notice that the Gauss sums $\tau_{2^l}(q_2^{r\leqslant2})$ with $l\geqslant 1$ correspond to  the Kawauchi and Kojima invariants as of the associated bilinear form.  Recall, there is a 1-1 correspondence between the isomorphism class of $q_2^{r\leqslant2}$ and the associated bilinear form $b_2^{r>2}$ \cite{DURFEE1977}. Therefore, the Gauss sums and the Kawauchi and Kojima invariants are related as follows: 
\begin{align}
\tau_{2^l}(q_2^{r>2})=\begin{cases}
|\tau_{2^l}(q_2^{r>2})|\;e^{2\pi i \sigma_{l}(b_2^{r>2})}\,,\quad\mbox{if}\quad \sigma_l(b_2^{r>2})\in \IZ/8\subseteq \bar \IZ_8\,,\\
0\,,\quad\quad\quad\quad\quad\quad\quad\quad\quad\;\;\,\mbox{if}\quad\;\;\, \sigma_{l}(b_2^{r>2})=\infty\,.
\end{cases}
\end{align}
As a consequence, the reality of the higher Gauss sums $\tau_{2^l}(q_2^{r>2})$ for $l\geqslant 1$ is equivalent to $\sigma_{l}(b_2^{r>2})\in\{0,4,\infty\}$. It is easy to see that there are no extra conditions coming from the other Gauss sums. 

One might wonder about the logical possibility that there can be a non-T-invariant example where the Gauss sums of $\sD_2^{r\leqslant 2}$ and $\sD_2^{r> 2}$ are separately complex, but their product is always real. This option can be eliminated by considering the condition for the existence of a nonzero  imaginary part requiring  $l<k$, as stated in Table \ref{Gausssums}.  This completes the proof.
\end{proof}

One can notice that Theorem \ref{Th:quadTforms} can be proved by means of the methods developed in \cite{Miranda2009}, without making a reference to the bilinear $\T$-forms. However, we could not use those methods for studying quadratic $\T$-refinements which are not necessarily quadratic forms. 

When discussing the relation of T-invariant bilinear and quadratic forms it is useful to note the: 

 \begin{proposition}
\label{Prop1}
 Given an isomorphism class of a finite bilinear form $(\sD,b)$ in $\T\mbox{bil}(\IZ)$ and $\sigma\in \{0,4\}$, there exists at most one isomorphism class of a quadratic form $(\sD,q)$ in $ \T\mbox{quad}(\IZ)$ refining $(\sD,b)$ such that $\frac{\tau_1(q)}{|\tau_1(q)|}=e^{2\pi i \sigma/8}$.

\begin{proof}
The ambiguity exists only for $p=2$, $r\leqslant 2$. Then one can check it by hand. 
\end{proof}

\end{proposition}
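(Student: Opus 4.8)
The plan is to isolate the only delicate case—$2$-groups of exponent at most two—and then settle it by a finite check. First I would use the orthogonal splitting $(\sD,q)=\bigoplus_{p,k}(\sD_p^k,q_p^k)$ together with the parallel splitting of $b$. Recall the correspondence between isomorphism classes of bilinear and quadratic forms stated earlier in Section \ref{QuantumData}: it is $1$-to-$1$ for $p$-groups with $p$ odd and for $2$-groups of exponent strictly greater than two, and only $1$-to-$2$ for $2$-groups of exponent at most two \cite{Wall1963QuadraticFO,DURFEE1977}. Consequently, once $(\sD,b)$ is fixed, every orthogonal summand of a refining quadratic form $q$ is pinned down up to isomorphism except the part $q_2^{\leqslant 2}$ supported on $(\IZ/2)^a\oplus(\IZ/4)^c$. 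Hence any two $\T$-refinements $q,q'$ of $b$ satisfy $q_p^k\cong q_p'^k$ on every summand away from this exponent-$\leqslant 2$ part.

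Next I would convert the Gauss-sum condition into a condition on $q_2^{\leqslant 2}$ alone. Since $\tau_1$ is multiplicative over orthogonal summands, $\tau_1(q)/|\tau_1(q)|$ is the product of the summand phases, and by the previous step the phases outside $q_2^{\leqslant 2}$ coincide for $q$ and $q'$. Because a $\T$-refinement has real $\tau_1$, the prescribed value $e^{2\pi i\sigma/8}$ with $\sigma\in\{0,4\}$ is a sign $\pm 1$, and fixing it is equivalent to fixing the phase of $\tau_1(q_2^{\leqslant 2})$. Thus proving the statement for $q_2^{\leqslant 2}$ yields the full proposition: it suffices to show that a bilinear $\T$-form on $(\IZ/2)^a\oplus(\IZ/4)^c$, together with a chosen $\tau_1$-phase in $\{+1,-1\}$, admits at most one $\T$-symmetric quadratic refinement up to isomorphism.

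For this reduced claim I would argue by direct enumeration. By Lemma \ref{lemma}, a $\T$-symmetric refinement of a fixed $b_2^{\leqslant 2}$ is obtained by refining each $A_2,A_4,B_4$ summand by one of the $\omega$'s subject only to the single congruence \eqref{Reality}, refining each $E_2$ summand by $\phi_2$ or $\psi_2$ (both automatically $\T$-symmetric), and refining each $E_4,F_4$ summand by its unique $\phi_4,\psi_4$. The signature contribution of each generator is read off from Table \ref{Gausssums}: $\omega_{2^r}^{t}\mapsto t$ for $t\in\{\pm1,\pm5\}$, $\phi_2,\phi_4,\psi_4\mapsto 0$, and $\psi_2\mapsto 4$. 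The remaining work is to show that the isomorphism relations among these rank-$\leqslant 2$ generators collapse any two such data refining the same $b$ and carrying the same $\sigma\bmod 8$ into one isomorphism class; the relevant identities are the small-rank identifications underlying Miranda's normal forms (Theorem \ref{Theorem_Miranda}), for instance $\omega_4^{+1}\oplus\omega_4^{+1}\cong\omega_4^{+5}\oplus\omega_4^{+5}$ on $(\IZ/4)^2$ and the corresponding identities on $(\IZ/2)^2$, which together with \eqref{Reality} and the fixed value of $\sigma$ leave no freedom. This bookkeeping is the genuine obstacle: the bilinear form $b$ fixes only the number of cyclic factors of each type, so one must verify on each admissible group $(\IZ/2)^a\oplus(\IZ/4)^c$ that the pair (bilinear form, $\sigma\bmod 8$) determines the normal form uniquely—which is exactly why the argument is a finite case check rather than a single structural step.
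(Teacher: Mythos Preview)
Your proposal is correct and follows essentially the same route as the paper: reduce to the only ambiguous situation $p=2$, $r\leqslant 2$ via the Wall--Durfee $1$-to-$1$ correspondence elsewhere, then settle that piece by a finite enumeration. The paper's proof is just the two-line version of what you wrote; your expansion of how the hand-check is organized (multiplicativity of $\tau_1$, Lemma~\ref{lemma}, and the small-rank isomorphisms underlying Miranda's normal forms) is exactly the content hidden behind ``check it by hand.''
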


There are examples of $\T$-symmetric bilinear forms that admit no $\T$-symmetric quadratic form. (A simple example is $A_2$.)  But, as we will see in Proposition \ref{UniqenessOfTref} below, there is always a $\T$-symmetric quadratic refinement. 

\subsection{The Witt group of quadratic forms.} 

In this section, we review the notion of the Witt group of quadratic forms, denoted by $\CW_{pt}$, and discuss the interplay between the Witt relation of quadratic forms and time-reversal symmetry. The Witt equivalence relation of quadratic forms answers the physically-motivated question of when two quantum Abelian CS TQFTs in three dimensions share a topological interface \cite{Saulina2011}.
One should not confuse the Witt group of bilinear forms $W$ defined in \ref{WittBilgroup} with the Witt group $\CW_{pt}$ we shall define here. The Witt group we define below is coarser than $W$ or its quadratic analog, while being more standard in the physics literature. In Theorem \ref{WittBilgroup} above we found a simple criterion for $\T$-invariance of a finite bilinear form expressed in terms of $W$. One might expect a similar statement for the $\T$-invariance of quadratic forms expressed in terms of $\CW_{pt}$, but the situation turns out to be more subtle. 

Let us review the standard notion of the Witt group of quadratic forms \cite{Drinfeld:2010} (compared  to the Witt group of bilinear forms $W$ defined in Section \ref{WittBilgroup}). We assume that quadratic and bilinear forms are non-degenerate unless specified otherwise. Let $(\sD,q)$ be a quadratic form, $\mathcal{H}\subseteq \sD$ be an isotropic subgroup, and $\mathcal{H}^{\perp}$ be the subgroup orthogonal to $\mathcal{H}$ with respect to the bilinear form refined by $q$. It is straightforward to check that $(\mathcal{H}^{\perp}/\mathcal{H}, q|_{\mathcal{H}^{\perp}/\mathcal{H}})$ is a non-degenerate quadratic form. 
We say that $(\sD,q)$ and $(\sD',q')$ are Witt-equivalent if there exist isotropic subgroups $\mathcal{H}\subseteq\sD$ and $\mathcal{H}'\subseteq\sD'$ such that $(\mathcal{H}^{\perp}/\mathcal{H}, q|_{\mathcal{H}^{\perp}/\mathcal{H}})\cong (\mathcal{H}^{'\perp}/\mathcal{H}', q|_{\mathcal{H}^{'\perp}/\mathcal{H}'})$. We denote the Witt equivalence class of $(\sD,q)$ by $[(\sD,q)]$. The monoid of isomorphism classes of quadratic forms modulo the Witt equivalence relation forms an Abelian group denoted by $\CW_{pt}$: the inverse to $[(\sD,q)]$ is $[\sD,-q]$. One can check that the following relations hold for all $k\geqslant 1$:
\begin{align}
\begin{gathered}
[\lambda_{p^{2k+1}}^{\pm1}]=[\lambda_{p}^{\pm1}]\,,\quad [\lambda_{p^{2k}}^{\pm1}]=[0]\,,\quad [\phi_{2^k}]=[0]\,,\quad [\psi_{2^{2k}}]=[0]\,,\quad [\psi_{2^{2k+1}}]=[\psi_{2}]\,,                          \\
[\omega_{2^{2k+1}}^{+1}]= [\omega_{2^{2k+1}}^{+5}]=[\omega_{2}^{+1}]\,,\quad [\omega_{2^{2k+1}}^{-1}]= [\omega_{2^{2k+1}}^{-5}]=[\omega_{2}^{-1}]\,,\\
 [\omega_{2^{2k+2}}^{+1}]=[\omega_{4}^{+1}]\,,\quad [\omega_{2^{2k+2}}^{+5}]=[\omega_{4}^{+5}]\,,\quad [\omega_{2^{2k+2}}^{-1}]=[\omega_{4}^{-1}]\,,\quad [\omega_{2^{2k+2}}^{-5}]=[\omega_{4}^{-5}]\,.
\end{gathered}
\end{align}
Clearly, any quadratic form is Witt equivalent to some anisotropic quadratic form, i.e., having no isotropic subgroups after reduction along the maximal isotropic subgroup (the choice of the maximal isotropic subgroup does not matter). Therefore, any Witt class can be represented by an anisotropic quadratic form. For completeness, we write down presentations for $\CW_{pt}$:
 \begin{align}
 \label{WittPtPresentations}
&( \CW_{pt})_p\cong \langle [\lambda_p^{+1}],[\lambda_p^{-1}]\,|\,2[\lambda_p^{+1}]=2[\lambda_p^{-1}]=0 \rangle \cong \IZ/2\oplus \IZ/2\,,\quad\quad\quad\quad\quad\,\;\; p\equiv 1\;\mbox{mod}\;4\,,\; \nonumber\\
&( \CW_{pt})_p\cong \langle [\lambda_p^{+1}],[\lambda_p^{-1}]\,|\, 2[\lambda_p^{+1}]=2[\lambda_p^{-1}],\; [\lambda_p^{+1}+\lambda_p^{-1}]=0] \rangle \cong \IZ/4\,,\quad\quad\;\; p\equiv 3\;\mbox{mod}\;4\,,\;  \nonumber\\
&( \CW_{pt})_2\cong \langle [\omega_2^{+1}], [\omega_2^{+1}+\omega_4^{-1}]\,|\,8[\omega_2^{+1}]=0,  2[\omega_2^{+1}+\omega_4^{-1}]=0\rangle\cong \IZ/8\oplus \IZ/2 \,.\;  \nonumber\\
\end{align}

A standard argument demonstrates that $\frac{1}{\sqrt{|\sD|}}\tau_1(q)=\frac{1}{\sqrt{|\mathcal{H}^{\perp}/\mathcal{H}|}}\tau_1(q|_{\mathcal{H}^{\perp}/\mathcal{H}})$ for any isotropic subgroup $\mathcal{H}\subseteq \sD$, see e.g., \cite{Miranda2009}. The argument works for any $\tau_n$ as long as $n$ is coprime to $|\sD|$. As a consequence,  $(\sD,q)$ represents the trivial element of $\CW_{pt}$ if and only if $\frac{\tau_n(q)}{|\tau_n(q)|}=1$ for all $n$ such that $gcd(n,\vert \sD \vert )=1$. Indeed, if $(\sD,q)$ contains a Lagrangian subgroup, then $\frac{\tau_n(q)}{|\tau_n(q)|}=\tau_n(0)=1$ for $gcd(n,|\sD|)=1$. To prove the opposite direction, we reduce any quadratic form to the anisotropic one, e.g. to the ones listed in \eqref{WittPtPresentations} and check that all non-zero anisotropic forms satisfy this property. A similar proof can be found in \cite{NG2019}.

Let us compare the Witt groups $W$ and $\CW_{pt}$. We notice that the class of $(\sD,q)$ in $\CW_{pt}$ contains much less information than the class of $(\sD,b)$ in $W$. We have proved that any bilinear form representing an order two element $W$ is $\T$ symmetric in Theorem \ref{WittTforms}. In sharp contrast, the condition that $(\sD,q)$ represents an order two element in $\CW_{pt}$ is \underline{not} sufficient for $\T$-symmetry. A good example, pointed out to us by T. Johnson-Freyd, of a $\CW_{pt}$-trivial quadratic form which is not $\T$-symmetric is $\lambda_{3}^{+1}+\lambda_{27}^{-1}$. This has a Lagrangian subgroup, so it is trivial in $\CW_{pt}$, but it is not $\T$-symmetric, as one can check by noting that the higher Gauss sums are not real. 

The $\T$-symmetry need not be preserved by the Witt equivalence. Indeed, if $(\sD,q)$ is a quadratic $\T$-form with the $\T$-symmetry realized by $\gamma \in Aut(\sD)$ and $\CH\subseteq \sD$ is an isotropic subgroup, then $\gamma (\CH)$ is isomorphic to $\CH$, but it is in general a different subgroup of $\sD$.

Any quadratic $\T$-form $(\sD,q)$ represents an order two element in $\CW_{pt}$ by definition: $[\sD\oplus \sD,q\oplus q]=[\sD\oplus \sD,q\oplus -q]=[0]$. On the other hand, any order two element of $\CW_{pt}$ can be presented by an anisotropic quadratic form which must be $\T$-symmetric as can be seen in \eqref{WittPtPresentations}. In principle, one should be able to define an analog of the Witt relation which would preserve the $\T$-symmetry, perhaps, by requiring that the isotropic subgroup is fixed by at least one time reversal. We leave this question for future work.

\subsection{Generalized isometries}
Let us conclude the discussion of quadratic $\T$-refinements by extending groups of isometries to graded groups which might negate the quadratic function. 

  For any quadratic form $q$ on a finite Abelian group $\sD$, we define its isometry group $Isom^{+}( q)$ as follows:
\begin{align}
\label{IsometriesDef}
Isom^{+}( q)=\{\gamma \in Aut(\sD)\,|\;q \circ \gamma=q\}\,.
\end{align}
The isometry groups of the standard quadratic forms are known, see, e.g. \cite{Miranda2009}:
\begin{align}
\label{IsoGr}
    &Isom^{+}(\omega_2^{\pm 1})=\{ \mbox{Id}\} \,;\quad Isom^{+}(\omega^{t}_{2^m})=\{ \pm \mbox{ Id }\} \,;\quad Isom^{+}(\lambda^{\pm1}_{p^r})= \{ \pm \mbox{ Id }\} \,;\nonumber \\
    &Isom^{+}(\phi_{2^r})=\bigg\{
\begin{pmatrix} u & 0\\
0 & u ^{-1}
\end{pmatrix},\begin{pmatrix} 0 & u\\
u^{-1} & 0 
\end{pmatrix}\,| u\in (\IZ/2^r)^{\times}\bigg\}\,;\\
    &Isom^{+}(\psi_{2^{m}})=\bigg\{
\begin{pmatrix} a & -c\\
c & a+c
\end{pmatrix},\begin{pmatrix} a & a+c\\
c & -a 
\end{pmatrix}\,| a,c\in \IZ/2^r\,, a^2+ac+c^2=1\bigg\}\,\nonumber 
\end{align}
where $t\in \{\pm 1,\pm 5\}$, $p$ is an odd prime, $r \geqslant 1$, and $m\geqslant 2$. Clearly, isometry groups for the most general quadratic forms can be obtained from the isometry groups above. 

Further, we can define the set $Isom^{-}(q)$ of automorphisms negating the given quadratic form 
\begin{align}
 Isom^{-}(q)=\{\gamma \in Aut(\sD)\,|\,q \circ \gamma=-q\}\,.
\end{align}
By definition, the set $Isom^{-}(q)$ is non-empty for a quadratic $\T$-form and empty otherwise. Unlike the set of isometries, $Isom^{-}(q)$ is not known explicitly: there is no general formula for the square roots of $-1$ even for cyclic groups. It is convenient to define the group $Isom^{\pm}(q)$ of generalized isometries as 
\begin{align}
 Isom^{\pm}(q)=\{\gamma \in Aut(\sD)\,|\,q \circ \gamma=\pm q\}\,.
\end{align}
For cyclic groups, we obtain the abstract groups of generalized isometries:
\begin{align}
    &Isom^{\pm}(\omega_2^{\pm 1})=\mbox{Id}\,;\; Isom^{\pm}(\omega^{t}_{2^{m}})\cong \IZ/4\,;\; Isom^{\pm}(\lambda^{\pm1}_{p^{r}})\cong \IZ/4
\end{align}
where notation is the same as in \eqref{IsoGr}. The $Isom^{\pm}(q)$ is $\IZ/2$-graded group and is the direct analog of the $\IZ/2$-graded group of unitary and anti-unitary symmetries allowed in the ordinary quantum mechanics by Wigner's theorem \cite{Geiko:2020mdx,Freed:2011aa,Freed:2012uu,Wigner1959}.

One of the main goals of this paper is to discuss the necessary conditions for the existence of quantum time-reversal symmetry. Assuming that the quantum Chern-Simons admits a $\T$-symmetry, any concrete realization of this symmetry requires a choice of an element $\gamma\in Isom^{-}(q)$. This choice is important for the anomaly partition function \cite{Lee:2018eqa}, or, if we wish to ``gauge'' this symmetry \cite{Barkeshli:2014cna}. 

\subsection{Quadratic \texorpdfstring{$\T$}{Lg}-refinements}
\label{Trefinements}
 We proceed to the discussion of the most general quadratic $\T$-refinement for a given bilinear $\T$-form. In this section we solve the problem of finding the set of equivalence classes of quadratic refinements $\{\hat q\}$ satisfying \eqref{QuRef}, which are $\T$-invariant in the sense of Definition \ref{TformTref}, refining the fixed finite bilinear $\T$-form $(\sD,b)$.
 
 We begin with stripping off the constant from the quadratic refinement by defining the quadratic enhancement of $b$, a function $\tilde q(x)$ satisfying
 \begin{align}
     \tilde{q}(x+y)-\tilde{q}(x)-\tilde{q}(y)=b(x,y)\,.
 \end{align}
 The quadratic enhancement is related to a quadratic refinement $\hat q(x)$ through $\tilde q(x)=\hat q(x)-\hat q(0)$. If $\tilde q(x)$ is any enhancement of $b$, then any other enhancement can be obtained by adding a group homomorphism $l:\sD\to \IQ/\IZ$, see e.g. \cite{Taylor}. Since $b$ is non-degenerate, any such a group homomorphism can be presented as $l(x)=b(x,\Delta)$ for some $\Delta\in \sD$. To this end, we bring back the constant and find that the most general quadratic refinement of $b$ has the form $\hat q(x)=q(x)+b(x,\Delta)+\hat q(0)$ where $q(x)$ is any quadratic form refining $b$, $\Delta\in \sD$, and $\hat q(0)\in \IQ/\IZ$ --all subject to the Gauss-Milgram constraint \ref{GaussMilgram}.
 
 Recall, the equivalence class of a quadratic refinement $\{\hat q\}$, consists of all refinements related by a shift of the argument. The shift of the argument acts on a refinement in the following way $\hat q(x)\mapsto \hat q(x+\delta)=\hat q(x)+\hat q(\delta)+b(x,\delta)-\hat q(0)$. Therefore, any equivalence class $\{\hat q\}$ can be presented by a refinement of the form $\hat q(x)=q(x)+c$ where $q(x)$ is any quadratic form refining $b$ and $c=\alpha/8$ for some $\alpha \in \IZ/8$ (the last condition follows from the fact that $\hat q(x)$ must satisfy the Gauss-Milgram constraint for some $\sigma$).
 
Finally, let us assume that $\{\hat q\}$ refines $(\sD,b)$ and require it to be a $\T$-refinement according to Definition $\ref{TformTref}$: there must exist a group automorphism $\gamma\in Aut(\sD)$ such that $\{\hat q\circ \gamma\}=\{-\hat q\}$ (the same automorphism will negate $b$). We present $\{\hat q\}$ by the refinement of the form $q(x)+c$ and apply the automorphism. There are two exceptional bilinear forms, $A_2$ and $E_2$, which are $\T$-symmetric identically, i.e., the negating isomorphism is the identity. $E_2$ can be refined by a quadratic $\T$-form $\phi_2$ and $\psi_2$, while there is no quadratic $\T$-form refining $A_2$. Since the class of $\{\hat q_{A_2}\}$ can be presented by $\hat q_{A_2}(x)=x^2/4+\alpha/8$ with some $\alpha\in \IZ/8$, we immediately find that $A_2$ admits two classes of quadratic $\T$-refinements $\{x^2/4-1/8\}$ and $\{x^2/4+3/8\}$. 
For all other bilinear forms, we find that for $\{\hat q \}$, presented by $q(x)+c$, the necessary and sufficient condition of $\T$-invariance is that $q(x)$ is quadratic $\T$-form and $c\in \{0,1/2\}$.

Let us summarize our findings. The bilinear $\T$-form of the type $A_2$ admits quadratic $\T$-refinements of the forms $\{\omega_2^{+1}-1/8\}$ and by $\{\omega_2^{+1}+3/8\}$. Bilinear $\T$-forms of other types  admit  $\T$-refinements presented by $q(x)$ and $q(x)+1/2$ where $q(x)$ is a quadratic $\T$-form refining $b$. 

We constructively proved that the class of a quadratic refinement $\{\hat q\}$ is $\T$-symmetric only if all the higher Gauss sums of any representative $\hat q $ of this class are real. Since the necessary condition holds automatically, we obtain that an analog of Theorem $\ref{Th:quadTforms}$ holds verbatim for quadratic refinements. The discussion above can be summarized by the following proposition.

\begin{proposition}
\label{UniqenessOfTref}
Given an isomorphism class of a bilinear $\T$-form $(\sD,b)$ and $\sigma\in \{0,4\}$, there exists exactly one equivalence class of a quadratic $\T$-refinement $\{\hat q\}$ which refines $(\sD,b)$ and such that $\frac{\tau_1(\hat q)}{|\tau_1(\hat q)|}=e^{2\pi i \sigma/8}$.
\end{proposition}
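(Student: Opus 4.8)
The plan is to treat the normalized first Gauss sum $\tau_1(\hat q)/|\tau_1(\hat q)|$ as a $\{+1,-1\}$-valued invariant recording $\sigma\bmod 8$, to prove uniqueness by directly comparing two candidate $\T$-refinements, and to extract existence from the inventory of $\T$-refinements assembled above. First I would check that this invariant is well defined on the class $\{\hat q\}$ and takes only the two listed values. Shift-invariance is immediate, since $\tau_1(\hat q)=\sum_{x\in\sD}e^{2\pi i \hat q(x)}$ is unchanged under the reindexing $x\mapsto x+\delta$. Writing an arbitrary refinement as $\hat q(x)=q(x)+b(x,\Delta)+\hat q(0)$ with $q$ a quadratic form refining $b$ and completing the square gives $\tau_1(\hat q)=e^{2\pi i(\hat q(0)-q(\Delta))}\tau_1(q)$, so $|\tau_1(\hat q)|=|\tau_1(q)|=\sqrt{|\sD|}$ by the Gauss--Milgram identity \eqref{GaussMilgram}; hence $\tau_1(\hat q)/|\tau_1(\hat q)|=e^{2\pi i\sigma/8}$ for the associated signature, and the reality recorded in Section \ref{RemarkOnSigma} forces $\sigma\in\{0,4\}$.

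For uniqueness, suppose $\hat q_1$ and $\hat q_2$ are $\T$-refinements of the same $(\sD,b)$ with the same $\sigma$. Their difference refines the zero form, hence equals a homomorphism plus a constant; by non-degeneracy this homomorphism is $b(\cdot,\Delta)$ for some $\Delta\in\sD$, and absorbing it into a shift of the argument shows $\{\hat q_1\}=\{\hat q_2+\kappa'\}$ for some constant $\kappa'\in\IQ/\IZ$. Since $\tau_1$ is shift-invariant, $\tau_1(\hat q_1)=e^{2\pi i\kappa'}\tau_1(\hat q_2)$; both sides are real and nonzero (each $\T$-refinement has real $\tau_1$ of modulus $\sqrt{|\sD|}$), so $e^{2\pi i\kappa'}=\pm1$ and $\kappa'\in\{0,\tfrac12\}$, while equality of $\sigma$ pins the sign to $+1$ and hence $\kappa'=0$. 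Thus $\{\hat q_1\}=\{\hat q_2\}$, and at most one class survives for each value of $\sigma$. I would stress that this step uses only the reality of the first Gauss sum, not the isometry rigidity of Proposition \ref{Prop1}.

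It remains to prove existence, which is where I expect the real work to be. It suffices to produce a single $\T$-refinement of $(\sD,b)$, since adding the constant $\tfrac12$ multiplies $\tau_1$ by $-1$ and so flips $\sigma$ between $0$ and $4$, realizing both targets. Decomposing $b$ orthogonally into prime and homogeneous blocks, I would refine every block that carries a quadratic $\T$-form by such a form; the reality criterion of Theorem \ref{Th:quadTforms} shows these exist for all odd-$p$ blocks and all $2$-blocks of exponent $>2$ that satisfy Proposition \ref{Tlinking}. The main obstacle is the $p=2$, exponent-$\le 2$ sector, where the bilinear-to-quadratic correspondence is two-to-one and some $\T$-forms---already $A_2$, and also mixed blocks such as $A_2+2A_{2^2}$---admit no quadratic $\T$-form at all. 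For these the constant cannot be taken in $\{0,\tfrac12\}$: instead I would choose a quadratic form $q$ refining the block together with a constant $c$ (for instance $c=-\tfrac18$ for $A_2$) and verify, using Lemma \ref{lemma} and the explicit values in Table \ref{Gausssums}, that every $\tau_n(q+c)$ becomes real. Assembling these block-wise $\T$-refinements yields one $\T$-refinement of $(\sD,b)$; combined with the $\tfrac12$-toggle and the uniqueness above, this produces exactly one equivalence class for each $\sigma\in\{0,4\}$.
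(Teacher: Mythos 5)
Your well-definedness and uniqueness steps are correct, and uniqueness is argued more cleanly than in the paper: the paper proves the proposition by constructively classifying all $\T$-refinements (writing a general class as $q(x)+c$ and doing a case analysis of the $\T$-condition on normal forms), whereas you get uniqueness directly from the shift-plus-constant structure of refinements together with reality of $\tau_1$. You also correctly noticed something the paper's own summary glosses over, namely that forms such as $A_2+2A_{2^2}$ are bilinear $\T$-forms admitting no quadratic $\T$-form, so the constant cannot always be taken in $\{0,\tfrac12\}$.

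The existence half, however, has a genuine gap, in fact two. First, the blockwise strategy fails because for $p=2$ the property of being a $\T$-form does \emph{not} localize to homogeneous blocks: the signature invariants are additive across exponents, and Proposition \ref{Tlinking} constrains only their \emph{sums} to lie in $\{0,4,\infty\}$. Concretely, $A_4\oplus D_8$ is a bilinear $\T$-form (by Table \ref{GeneratorsSignatures}, $\sigma_1=1+3=4$, $\sigma_2=\infty+7=\infty$, $\sigma_3=\infty$), yet neither $A_4$ ($\sigma_1=1$) nor $D_8$ ($\sigma_1=3$) is a $\T$-form, so neither block carries \emph{any} $\T$-refinement, with or without constants, since a $\T$-refinement always refines a bilinear $\T$-form. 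Your plan (quadratic $\T$-forms on blocks satisfying Proposition \ref{Tlinking}, special constants on the exponent-$\leqslant 2$ sector) therefore produces nothing for this form; the negating isometry necessarily mixes the $\IZ/4$ and $\IZ/8$ summands. Second, even where your plan does produce a candidate $q+c$, the step ``verify every $\tau_n(q+c)$ is real and conclude $\T$-invariance'' is circular: Theorem \ref{Th:quadTforms} applies only to quadratic forms, and its analog for refinements is obtained in the paper as a \emph{consequence} of the classification you are trying to establish. Both gaps can be repaired by the very technique you use for uniqueness. Since $(\sD,b)$ is a bilinear $\T$-form, there exists $\gamma\in Aut(\sD)$ with $b\circ\gamma=-b$; pick any quadratic form $q$ refining $b$ and choose $c$ so that $\tau_1(q+c)$ is real, which is possible by \eqref{GaussMilgram} and gives $|\tau_1(q+c)|=\sqrt{|\sD|}\neq 0$. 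Then $(q+c)\circ\gamma$ and $-(q+c)$ both refine the non-degenerate form $-b$ and have the same nonzero real first Gauss sum, so by exactly your uniqueness computation their equivalence classes coincide; hence $q+c$ is a $\T$-refinement. This yields existence with no block decomposition and no appeal to higher Gauss sums, and the $+\tfrac12$ toggle then realizes both values of $\sigma$.
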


\section{Lattices}\label{Lattices}

At the classical level, any concrete toral CS functional is specified by a lattice $(\l,K)$. In this section, we discuss the lifting of quantum data corresponding to $\mathsf{T}$-invariant quantum CS theories to lattices. Clearly, isomorphic lattices have isomorphic discriminant forms and quadratic refinements. In general, the interplay between isomorphism classes of finite forms and lattices is described by the following theorem.

\begin{theorem}(\cite{Kneser1953QuadratischeFU},\cite{DURFEE1977})
\label{StableLattices}
Two lattices $(\Lambda,K)$ and $(\Lambda',K')$ have isomorphic discriminant forms if and only if they are stably equivalent, that is $K\oplus U\cong K'\oplus U'$ where $U$ and $U'$ are unimodular lattices, even if $K$ and $K'$ are even, and odd if $K$ and $K'$ are odd.
\end{theorem}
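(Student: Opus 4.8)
The plan is to prove the two implications separately; the forward implication is elementary and essentially all the content lies in the converse. For the direction \emph{stably equivalent $\Rightarrow$ isomorphic discriminant forms}, I would use that a unimodular lattice $U$ satisfies $U^*=U$, so its discriminant group $U^*/U$ is trivial and it contributes nothing to the discriminant form. Since the discriminant form of an orthogonal direct sum is the orthogonal direct sum of the summands' discriminant forms, the discriminant form of $K\oplus U$ coincides with that of $K$ (and likewise any quadratic refinement $\{\hat q_W\}$, which is the datum actually preserved in the even case). Hence an isometry $K\oplus U\cong K'\oplus U'$ induces an isomorphism of the discriminant forms of $K$ and $K'$; that adding an even (resp.\ odd) unimodular lattice keeps the parity is clear, so the two parity branches never mix.

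For the converse, fix an isomorphism $\phi$ of the discriminant forms of $K$ and $K'$. The guiding picture is that on $K\oplus(-K')$ the discriminant form is $q_K\oplus(-q_{K'})$, and the graph $\{(x,\phi(x))\}$ is isotropic (because $\phi$ preserves $q$, so $q_K(x)-q_{K'}(\phi(x))=0$) of order $|\sD_K|=\sqrt{|\sD_{K\oplus(-K')}|}$, i.e.\ a Lagrangian; by the correspondence between isotropic subgroups and integral overlattices this realizes $K\oplus(-K')$ as a finite-index sublattice of a unimodular lattice of the correct parity. I would, however, extract the actual stable equivalence through genus theory, for which this construction is only the motivating heuristic.

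First I would stabilise into the indefinite range while preserving the discriminant form. Adding hyperbolic planes (the even unimodular $\left(\begin{smallmatrix}0&1\\1&0\end{smallmatrix}\right)$, or $\langle 1\rangle\oplus\langle -1\rangle$ in the odd case) raises the rank and makes both lattices indefinite, while copies of $\pm\mathbf{E}_8$ (even case) or $\langle\pm 1\rangle$ (odd case) adjust the signatures. All of these summands are unimodular of the correct parity, so they leave the discriminant form untouched, and one chooses their number so that the stabilised lattices $\tilde K,\tilde K'$ have equal rank, equal signature, and (automatically) isomorphic discriminant forms, with rank $\geqslant 3$ and indefinite. The only arithmetic constraint is that the signatures can be matched: in the even case $\sigma(K)\equiv\sigma(K')\bmod 8$ is forced because, by Gauss--Milgram \eqref{GaussMilgram}, $\sigma\bmod 8$ is already fixed by the common discriminant form, and the odd case has even more freedom through $\langle\pm 1\rangle$.

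Now $\tilde K$ and $\tilde K'$ lie in the same genus: the genus of an integral lattice is determined by its localisations $\tilde K\otimes\IZ_p$ at all primes together with the real place, and these local invariants are read off from the discriminant form at the finite places and from the signature at the real place. I would then invoke the Eichler--Kneser strong-approximation theorem, which asserts that an indefinite lattice of rank $\geqslant 3$ is the unique lattice in its genus; this gives $\tilde K\cong\tilde K'$, which is exactly the desired $K\oplus U\cong K'\oplus U'$. The main obstacle is the genus-matching step: one must show that the discriminant form together with the signature is a \emph{complete} invariant of the genus, i.e.\ that it pins down every local isometry class $\tilde K\otimes\IZ_p$. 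This is precisely the delicate torsion-form bookkeeping of Durfee, and it is where the parity (Wu-class) subtlety of the odd case genuinely enters, whereas the uniqueness within an indefinite genus is Kneser's contribution.
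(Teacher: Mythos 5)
You should first note that the paper does not actually prove Theorem \ref{StableLattices}: it is imported from Kneser--Puppe and Durfee, so your proposal can only be measured against the standard arguments in those references. Your route is indeed the standard one (forward direction by triviality of unimodular discriminant groups and additivity; converse by stabilizing into the indefinite range, matching genera via discriminant form plus signature, then invoking uniqueness of an indefinite lattice in its genus). However, there is one concrete gap: the statement you attribute to Eichler--Kneser, that ``an indefinite lattice of rank $\geqslant 3$ is the unique lattice in its genus,'' is false. Eichler's theorem says only that an indefinite \emph{spinor genus} of rank $\geqslant 3$ contains a single isometry class; a genus can split into several spinor genera, and there exist indefinite ternary lattices with class number greater than one. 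As written, your final step therefore does not go through.

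The gap is repairable precisely because of a feature your stabilization already provides but never exploits: after adding a hyperbolic plane $H$ (even case) or $\langle 1\rangle\oplus\langle -1\rangle$ (odd case), the lattices $\tilde K,\tilde K'$ split off such a plane over $\IZ$, hence over every $\IZ_p$. Such a plane represents every $p$-adic unit, so the spinor norm group $\theta\bigl(O(\tilde K\otimes \IZ_p)\bigr)$ contains $\IZ_p^{\times}$ modulo squares for all $p$; by Kneser's criterion the genus of $\tilde K$ then consists of a single spinor genus, and Eichler's theorem upgrades this to a single isometry class. You should cite this corrected statement rather than the false one. Two further cautions. First, the genus-matching lemma (that after stabilization the discriminant form and signature pin down every localization $\tilde K\otimes\IZ_p$) is the real content of Durfee's paper --- in particular the $2$-adic oddity bookkeeping in the odd case --- and your proposal defers it entirely; that is acceptable in a sketch, but it is where the theorem lives. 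Second, the even branch of the theorem is false if ``discriminant form'' is read as the bilinear form \eqref{bfinite}: the lattices $\langle 2\rangle$ and $\langle -2\rangle$ have isomorphic discriminant bilinear forms ($b(1,1)=\pm 1/2$ coincide in $\IQ/\IZ$) but non-isomorphic discriminant quadratic forms ($q(1)=\pm 1/4$) and signatures $\pm 1$, so they can never become isomorphic after adding even unimodular lattices, whose signatures are $\equiv 0$ mod $8$. Your own Gauss--Milgram step tacitly uses the quadratic form, which is the correct datum for the even branch; keeping $b$ and $q$ cleanly separated throughout would make your argument both correct and aligned with the fact that Durfee proves two distinct statements, one for each parity.
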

Thus, finite $\mathsf{T}$-forms lift to lattices that are classically $\mathsf{T}$-invariant up to the addition of a unimodular lattice. In principle, we can describe lattices corresponding to $\mathsf{T}$-invariant quartets in terms of their $p$-adic reductions, also known as the genus. However, there is a more constructive description in terms of lattice embeddings. 

\subsection{Lattice embeddings}

Let us review some of the results obtained by Nikulin in \cite{Nikulin1980}. An embedding of lattices $\phi:\Lambda\hookrightarrow M$ is called primitive if the quotient $M/\Lambda$ is a free $\IZ$-module, i.e., another lattice. Note that, in general, embeddings of lattices need not be primitive. For example, $\l/2\l$ is a pure torsion, while for primitive embeddings, $M/\Lambda$ must be torsion-free.
\footnote{
In practice, the primitiveness of the given embedding can be determined by a criterion found by Sert{\"o}z \cite{sertoz2005singular} (Theorem 6): a lattice embedding is primitive if and only if the greatest common divisor of the maximal minors of the embedding matrix with respect to any choice of basis is 1. This criterion was rederived in \cite{Levin2012}. }
For a primitive embedding $\Lambda\hookrightarrow M$, we can consider the orthogonal complement $\Lambda^{\perp}\equiv (\Lambda)_{M}^{\perp}$ with respect to the bilinear form of $M$. We say that two lattices $\Lambda$ and $\Lambda'$ are perpendicular $\Lambda \!\perp \! \Lambda'$ if there exists a unimodular lattice $M$ and a primitive embedding of $\Lambda$ into a $M$ such that $\Lambda^{\perp}\cong \Lambda'$.

\begin{proposition}(Nikulin)
\label{NikulinEmbedding}
 Let $(\sD,b)$ and $(\sD',b')$ be the discriminant forms of $(\Lambda,K)$ and $(\Lambda',K')$ respectively. Then: 
 
 1) A primitive embedding of $\l$ into a unimodular lattice $M$ such that $\Lambda^{\perp}\cong \Lambda'$ is determined by an isomorphism $\gamma: \sD\to \sD'$ for which $b'\circ \gamma=-b$.

2) $(\l,K)\!\perp \!(\l',K')$ if and only if $(\sD,b)\cong(\sD', -b')$.
\end{proposition}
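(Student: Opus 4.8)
\emph{Strategy.} The plan is to prove both statements through the standard ``glue group'' analysis of a primitive sublattice of a unimodular lattice, deducing part 2) from the correspondence established in part 1). The central object I would study is the finite group $H:=M/(\l\oplus\l^{\perp})$, viewed inside $\sD\oplus\sD_{\l^{\perp}}$ via the two maps $m\mapsto K^{*}(m,-)|_{\l}\bmod\l$ and $m\mapsto K^{*}(m,-)|_{\l^{\perp}}\bmod\l^{\perp}$, where $\sD_{\l^{\perp}}=(\l^{\perp})^{*}/\l^{\perp}$. Since $M$ is unimodular, $M=M^{*}$, and from $\l\oplus\l^{\perp}\subseteq M$ one gets the chain $\l\oplus\l^{\perp}\subseteq M=M^{*}\subseteq \l^{*}\oplus(\l^{\perp})^{*}$, which makes these maps well defined.

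\emph{Forward direction.} I would show that $H$ is the graph of an isomorphism $\gamma\colon\sD\to\sD_{\l^{\perp}}$. Both coordinate projections are injective: if $m\in M$ projects to $0$ in $\sD$, then $K^{*}(m,-)|_{\l}$ agrees with pairing against some $\lambda_{0}\in\l$, so $m-\lambda_{0}\in\l^{\perp}$ and $m\in\l\oplus\l^{\perp}$; injectivity of the projection to $\sD_{\l^{\perp}}$ is the same argument with the roles of $\l$ and $\l^{\perp}$ exchanged, which is legitimate precisely because $\l$ is primitive, i.e. $\l=(\l^{\perp})^{\perp}$. For surjectivity I count orders: unimodularity gives $|H|^{2}=|\det(\l\oplus\l^{\perp})|/|\det M|=|\sD|\,|\sD_{\l^{\perp}}|$, and together with the two injections $|H|\le|\sD|$ and $|H|\le|\sD_{\l^{\perp}}|$ this forces $|H|=|\sD|=|\sD_{\l^{\perp}}|$, so both projections are bijective and $\gamma$ is well defined. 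Finally, integrality of the form on $M$ says exactly that $H$ is isotropic for $b\oplus b_{\l^{\perp}}$; evaluating on $(x,\gamma(x))$ and $(y,\gamma(y))$ yields $b(x,y)+b_{\l^{\perp}}(\gamma(x),\gamma(y))=0$, i.e. $b_{\l^{\perp}}\circ\gamma=-b$. Transporting along $\l^{\perp}\cong\l'$ produces an isomorphism $\gamma\colon\sD\to\sD'$ with $b'\circ\gamma=-b$, which is the content of part 1) and the forward implication of part 2).

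\emph{Converse and construction.} Conversely, given $\gamma\colon\sD\to\sD'$ with $b'\circ\gamma=-b$, I would realize $\l\perp\l'$ by the inverse gluing
\[
M=\bigl\{(v,w)\in\l^{*}\oplus(\l')^{*}\ :\ \gamma(v+\l)=w+\l'\bigr\},
\]
the preimage of the graph $\Gamma_{\gamma}\subseteq\sD\oplus\sD'$ under $\l^{*}\oplus(\l')^{*}\to\sD\oplus\sD'$. Then $M$ is a lattice containing $\l\oplus\l'$ with index $|\Gamma_{\gamma}|=|\sD|$, so $|\det M|=|\det\l|\,|\det\l'|/|\sD|^{2}=1$ and $M$ is unimodular; the form restricts to an integral form on $M$ because $\Gamma_{\gamma}$ is isotropic, which is exactly the hypothesis $b'\circ\gamma=-b$; the embedding $\l=\l\oplus 0\hookrightarrow M$ is primitive, since $n(v,w)\in\l\oplus 0$ forces $w=0$ and then $v\in\l$ by injectivity of $\gamma$; and the orthogonal complement of $\l$ in $M$ is $0\oplus\l'\cong\l'$, because $K^{*}(v,\mu)=0$ for all $\mu\in\l$ forces $v=0$ by non-degeneracy. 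This completes part 2).

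\emph{Main obstacle.} The delicate point is the bookkeeping in the forward direction: one must see simultaneously that $H$ is a graph (which encodes primitivity of $\l$, via $\l=(\l^{\perp})^{\perp}$) and that it is maximal isotropic, i.e. Lagrangian, in $\sD\oplus\sD_{\l^{\perp}}$ (which encodes self-duality $M=M^{*}$). The order count is what couples these two facts, and the identity $|\det M|=1$ enters in an essential way; everything else is routine. By contrast, the converse requires no existence theorem for unimodular lattices, since the gluing produces $M$ explicitly, with signature $\sigma(\l)+\sigma(\l')$ and no constraint imposed on its parity.
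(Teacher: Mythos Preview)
Your proof is correct and follows the standard glue-group argument that goes back to Nikulin. The paper itself does not prove this proposition: it is stated as a result of Nikulin and cited from \cite{Nikulin1980} without any argument given, so there is nothing in the paper to compare your approach against. Your write-up is in fact a clean exposition of exactly the mechanism Nikulin uses---identifying $M/(\l\oplus\l^{\perp})$ with the graph of an anti-isometry of discriminant groups in the forward direction, and reconstructing $M$ as the overlattice glued along that graph in the converse---so no discrepancy arises.

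One small remark on wording: in your primitivity check for the converse you write ``$n(v,w)\in\l\oplus 0$ forces $w=0$.'' This is fine, but the reason is that $(\l')^{*}$ is torsion-free, not anything about $\gamma$; the injectivity of $\gamma$ is then used in the next step to conclude $v\in\l$ from $(v,0)\in M$. You have the logic right, just make sure the two justifications are attached to the correct clauses when you write it up formally.
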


It is straightforward to see that any lattice is orthogonal to its negation: $(\l,K)\!\!\perp \!\!(\l, -K )$ trivially holds. However, the lattice need not be perpendicular to itself. As follows from the proposition above, $(\l,K)\!\perp\! (\l,K)$ if and only if $(\sD, b)$ is a T-form. This gives us an alternative description of $\mathsf{T}$-forms: these are discriminant forms of self-perpendicular lattices.

\paragraph{Remark.} Physically, the condition $\Lambda \!\perp \! \Lambda'$ for lattices means that there are three toral Chern-Simons theories: $CS_{\l,K}$, $CS_{\l',K'}$, and $CS_{M,U}$. Since $(M,U)$ is a unimodular lattice, $CS_{M,U}$ is a ``trivial'' theory. Note, the primitivity of the embedding is crucial; otherwise, $M/\l$ has a torsion and does not define a toral gauge theory. The ``trivial'' theory has no anyons, yet it is still a toral gauge theory with the lattice of charges and we can pick off charges constituting the non-trivial theory $CS_{\l,K}$ such that the quotient theory is $CS_{\l',K'}$. In its turn, if $CS_{\l,K}$ is $\T$-invariant upon quantization, then it can be embedded into a trivial theory such that the quotient is the same theory. There is a close relation to the folding trick \cite{Saulina2011}, where folded theory corresponds to $(\l\oplus \l, K\oplus (-K))$. Here, the pair $CS_{\l,K}$ and $CS_{\l,-K}$ is the simplest example of theories embeddable in the ``trivial'' theory.

\subsubsection{Examples of rank 2}
As an example, let us demonstrate a primitive self-orthogonal embedding of a rank $2$ lattice in a rank $4$ unimodular lattice. Let $e_1,e_2$ be a basis for $\l$ in which the matrix of $K$ is given by
\begin{align}
K_{m,n}=\begin{pmatrix} m^2+n^2 & 0\\
0 & -1
\end{pmatrix}\,\quad  \mbox{where $m$ and $n$ are coprime non-zero integers}\,.
\end{align}
Let $\IZ^{2,2}$ be the odd unimodular lattice of rank $4$ with zero signature: we pick a basis $f_1,\dots ,f_4$ in which the bilinear form is  $\mbox{diag}(+1,+1,-1,-1)$. Consider the following embedding of $(\l,K_{m,n})$ into $\IZ^{2,2}$:
\begin{align}
&e_1 = m f_1 +n f_2\,,\\
&e_2= f_3\,.
\end{align}
This embedding is primitive as long as $gcd(m,n)=1$, the quotient is torsion-free due to B\'ezout's identity. For this embedding, the orthogonal complement $(\Lambda)^{\perp}_{\IZ^{2,2}}$ is embedded into $\mathbb{Z}^{2,2}$ in the following way:
\begin{align}
&e_1^{\perp} = n f_1 -m f_2\,,\\
&e_2^{\perp}= f_4\;.
\end{align}
We see that $(\Lambda)^{\perp}_{\IZ^{2,2}}$ is isomorphic to $\l$, yet at the same time, proposition \eqref{NikulinEmbedding} tells that $b_{\l}\cong -b_{\l}$ and, consequently, $(\l,K_{m,n})$ is stably isomorphic to $(\l,-K_{m,n})$. As was shown in \cite{Delmastro:2019vnj}, $(\l,K_{m,n})$ is classically $\T$-invariant iff $m^2+n^2$ solves the negative Pell equation, i.e., there are integers $p$ and $q$ such that $(m^2+n^2)p^2-q^2=1$. 

This example of rank $2$ lattices was analyzed in \cite{Seiberg2016,Delmastro:2019vnj}. Moreover, Delmastro and Gomis prove that these lattices capture all $\mathsf{T}$-invariant lattices of the form $\mbox{diag}(n,-1)$. They also analyzed the isometry groups of such lattices. 

Let us check that the quantization of lattices we have constructed gives us $\mathsf{T}$-invariant quartets. We have $\sD\cong \IZ/(m^2+n^2)$. Prime factorization of $m^2+n^2$ for non-zero coprime integers $m$ and $n$ may contain $2$ with the power of $0$ or $1$ and any powers of primes of the form $p\equiv 1$ mod $4$.
\footnote{Let us prove this statement. Notice that prime divisors of $d=m^2+n^2$ will divide neither $m$, nor $n$. Otherwise, $m$ and $n$ would not be coprime. Then, $m$ and $n$ are invertible modulo any prime $p$ s.t. $p|d$ and there exist $l\neq 0$ s.t. $m\equiv ln$ mod $p$. This implies that $l^2+1\equiv 0 $ mod $p$ for any such $p$. Therefore, $-1$ is a square residue modulo any prime divisor of $d$.}
As expected, the discriminant form is automatically a $\mathsf{T}$-form. Depending on the parity of $m^2+n^2$, there are two inequivalent refinements for the discriminant form: 
\begin{align}
    \hat q(x)=\begin{cases} \frac{x^2}{2(m^2+n^2)}-\frac{1}{8}\,,\quad\quad\;\;\, m^2+n^2\quad \mbox{even}\,,\\
    \frac{x^2}{m^2+n^2}+\frac{m^2+n^2-1}{8}\,,\quad m^2+n^2 \quad\mbox{odd}\,.
    \end{cases}
\end{align}

\subsubsection{Examples of rank 4}
Consider the root lattices $\mathbf{A}_4$ and $\mathbf{D}_4$: they can be primitively embedded into $\mathbf{E}_8$ in an obvious way. It turns out that their orthogonal complements within $\mathbf{E}_8$ are equivalent to $\mathbf{A}_4$ and $\mathbf{D}_4$ respectively. For reference, we provide the concrete realizations below.  
\begin{align}
A_4=\begin{pmatrix} 2 & -1 & 0 & 0\\
-1 & 2 & -1 & 0\\
0& -1 & 2 & -1 \\
0 & 0 & -1 & 2 
\end{pmatrix}\,\quad \; 
A_4^{\perp}=\begin{pmatrix} 2 & 0& -1 & 0\\
0 & 2 & 0 & -5\\
-1& 0 & 2 & -5 \\
0 & -5 & -5 & 30 
\end{pmatrix}\\
D_4=\begin{pmatrix} 2 & -1 & 0 & 0\\
-1 & 2 & -1 & -1\\
0& -1 & 2 & 0 \\
0 & -1 & 0 & 2 
\end{pmatrix}\,\quad \; 
D_4^{\perp}=\begin{pmatrix} 12 & -6 & 2 & 0\\
-6 & 4 & -2 & 0\\
2& -2 & 2 & -1 \\
0 & 0 & -1 & 2 
\end{pmatrix}
\end{align}
We have the equivalences $\mathbf{A}_4 \cong \mathbf{A}_4^{\perp}$ and $\mathbf{D}_4\cong \mathbf{D}_4^{\perp}$. However, these lattices are not equivalent to their negations simply because they are positively-defined. Thus, these are examples of quantum $\mathsf{T}$-invariant lattices. Their quantization gives $(\IZ/5, xy/5, 2x^2/5,4)$ and $(E_2, xy/2, 4)$ respectively. The case of $\mathbf{D}_4$ was also found in \cite{Nikulin1980,Wang2013}.

\subsection{Classical vs quantum time-reversal}
In our approach to Abelian CS, the quantum time-reversal symmetry is much more common than the classical one. For the former, it suffices to find a $\mathsf{T}$-form, while for the latter, there must be an equivalence between the lattice and its negation. Clearly, a positive-definite lattice is never classically $\mathsf{T}$-invariant. Let $t_{+}$ and $t_{-}$ be the numbers of positive and negative eigenvalues of $K$ over $\IQ$. Then, we might expect $(\l,K)$ to be classically $\mathsf{T}$-invariant only if $t_{+}=t_{-}$. Let $l(\sD)$ be the minimal number of generators of $\sD$.
\begin{proposition}
\label{NikulinUniqueness}(Nukulin) An even lattice $(\l,K)$ with invariants $(t_{+},t_{-},q)$ is unique up to isomorphism if $t_{+}+t_{-}\geqslant 2 +l(\sD)$, $t_+\geqslant 1$, $t_{-}\geqslant 1$.
\end{proposition}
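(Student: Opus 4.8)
The plan is to reduce the statement to a uniqueness result for a single \emph{genus} and then to count isomorphism classes within that genus by means of the spinor genus and strong approximation. First I would observe that two even lattices sharing the invariants $(t_+,t_-,q)$ lie in the same genus: their signatures agree at the archimedean place, and the quadratic discriminant form $q$ together with the rank and the (even) parity determines the isomorphism class of every completion $\l\otimes\IZ_p$, since $\IZ_p$-lattices are classified by their Jordan/discriminant data. Thus the proposition is equivalent to the assertion that, under the stated hypotheses, the genus of $(\l,K)$ contains a single isomorphism class.

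Next I would invoke Eichler's theorem. Since $t_+\geqslant 1$ and $t_-\geqslant 1$ the form is indefinite, and $t_++t_-\geqslant 2+l(\sD)\geqslant 3$ whenever $\sD$ is nontrivial (if $\sD=0$ the lattice is unimodular and indefinite, and uniqueness is the classical classification $\cong m\,U\oplus n\,\mathbf{E}_8$, with $U$ the hyperbolic plane). For an indefinite quadratic lattice of rank at least $3$, strong approximation for the spin group shows that each spinor genus contains exactly one proper isometry class; indefiniteness, which below forces the archimedean spinor norm to be all of $\IR^\times$, collapses the distinction between proper and improper equivalence. Hence the number of classes in the genus equals the number of spinor genera, and it remains to show that the genus consists of a single spinor genus.

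The number of spinor genera is the order of the idelic quotient
\begin{equation}
\mathbb{A}^{\times}\big/\,\IQ^{\times}\,\theta\bigl(O(\l\otimes\mathbb{A})\bigr),\qquad \theta\bigl(O(\l\otimes\mathbb{A})\bigr)=\theta\bigl(O(\l\otimes\IR)\bigr)\times\prod_{p}\theta\bigl(O(\l\otimes\IZ_p)\bigr),
\end{equation}
where $\theta$ is the spinor norm. The crux is a \emph{local} computation at each place. At the archimedean place, indefiniteness ($t_+\geqslant1$, $t_-\geqslant1$) forces $\theta\bigl(O(\l\otimes\IR)\bigr)=\IR^{\times}$, so both signs occur. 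At a finite prime $p$, I would use the global rank bound to control the Jordan decomposition $\l\otimes\IZ_p=\bigoplus_{k\geqslant0}p^{k}L_k$: the non-unimodular components carry the whole $p$-part of the discriminant, so $\sum_{k\geqslant1}\mathrm{rank}(L_k)=l(\sD_p)\leqslant l(\sD)$, whence the unimodular component satisfies $\mathrm{rank}(L_0)\geqslant (t_++t_-)-l(\sD)\geqslant 2$. A unimodular $\IZ_p$-lattice of rank at least $2$ has full spinor norm, $\theta\bigl(O(\l\otimes\IZ_p)\bigr)\supseteq \IZ_p^{\times}$. Combining these, $\theta\bigl(O(\l\otimes\mathbb{A})\bigr)$ contains $\IR^{\times}\times\prod_p\IZ_p^{\times}$, and since every idele is a rational multiple of a unit idele, the quotient is trivial. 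Therefore the genus is a single spinor genus, hence a single class.

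The main obstacle is the local spinor-norm computation, and in particular the prime $p=2$, where the classification of unimodular $\IZ_2$-lattices and the structure of their orthogonal groups is more delicate than for odd $p$; the ``$2+$'' in the bound $t_++t_-\geqslant 2+l(\sD)$ is precisely what guarantees a unimodular binary summand realizing the full unit spinor norm even at $2$. The remaining ingredients---recovering the genus from $(t_+,t_-,q)$, and quoting Eichler's theorem together with the spinor-genus counting formula---are standard once these local inputs are secured.
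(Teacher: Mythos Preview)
The paper does not give a proof of this proposition; it is stated as a result of Nikulin and cited to \cite{Nikulin1980}, then used immediately as a black box to deduce classical $\mathsf{T}$-invariance. So there is no ``paper's own proof'' to compare against.

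Your sketch is the standard route to Nikulin's theorem and is essentially correct. The three ingredients---(i) that $(t_+,t_-,q)$ together with evenness determines the genus, (ii) Eichler's theorem reducing class to spinor genus for indefinite rank $\geqslant 3$, and (iii) triviality of the spinor-genus count via the local spinor-norm computation---are exactly the argument in Nikulin's paper (see his \S1.13). Your rank bound $r_0\geqslant 2$ on the unimodular Jordan block is the right mechanism: note that $l(\sD)=\max_p l(\sD_p)$ by the Chinese remainder theorem, so the inequality $\sum_{k\geqslant 1}\mathrm{rank}(L_k)=l(\sD_p)\leqslant l(\sD)$ indeed holds at every prime.

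Two small points worth tightening. First, at $p=2$ the even unimodular rank-two $\IZ_2$-lattices are the hyperbolic plane $U$ and the form $V$ with Gram matrix $\begin{psmallmatrix}2&1\\1&2\end{psmallmatrix}$; in both cases one checks that reflections in vectors $v$ with $Q(v)\in\IZ_2^{\times}$ are automatically integral (since $Q(v)$ is a unit) and that $Q$ represents every class in $\IZ_2^{\times}/(\IZ_2^{\times})^2$, so $\theta(O(L_0))\supseteq\IZ_2^{\times}$ as claimed. Second, in your aside on the unimodular case you should note that when $l(\sD)=0$ the hypotheses force rank $\geqslant 2$ and indefinite, and the even indefinite unimodular classification gives uniqueness directly (rank $2$ is $U$; higher rank follows from the same spinor-genus argument or from the classical result), so Eichler's rank-$3$ hypothesis is not violated in any case covered by the proposition.
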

This proposition gives sufficient conditions for an even lattice to be $\mathsf{T}$-invariant. A classical Abelian CS theory defined by an even indefinite lattice $(\l,K)$ is classically $\mathsf{T}$-invariant if $t_{+}=t_{-}$, its discriminant form $(\sD,b)$ is a $\mathsf{T}$-form, $ q$ is a $\mathsf{T}$-refinement that is also a quadratic form, and $2t_{+}\geqslant 2 +l(\sD)$. This condition eliminates $A_2$, as for this $\mathsf{T}$-form, there does not exist a quadratic $\mathsf{T}$-form. We expect that a similar condition can be formulated for odd lattices, and leave this question for the future.

\section{Modular Tensor Categories}
\label{MTC:section}
We discussed the quantum data defining quantum Abelian Chern-Simons-Witten theory as a result of canonical quantization. Every (spin) CSW theory defines a 3d TQFT, known as a Witten-Reshetikhin-Turaev theory \cite{Witten1989} and, by the correspondence with 2d RCFT, it also defines a modular tensor category \cite{Moore:1988qv,Moore:1989vd}. In the spin case it should be a super-MTC. 
We shall view the (super-) MTC as the quantum data associated to a quantum CSW theory. 
In this section we briefly review how the quantum data of torus CSW theory fits into the categorical picture. 
We then define time-reversal invariance in terms of (super-) MTCs and consider some implications of this definition. 

We begin by setting the most necessary notation\footnote{See   Appendix \ref{WittAppendix} for quick definitions of other terms from category theory used here.}. Recall that a modular tensor category $\CC$ is, in particular, a non-degenerate braided fusion category \cite{Drinfeld:2010} equipped with a ribbon structure, see \cite{Rowell2018,Kong:2022cpy} for recent reviews.  The braided structure is a set of isomorphisms permuting objects in the monoidal product 
\begin{align}
\label{braiding}
    c_{x,y}: x\otimes y  \xrightarrow{\sim} y \otimes x \,,
\end{align}
while the ribbon structure is a set of isomorphisms $\theta_x:x\xrightarrow{\sim} x$, called twists, compatible with the braiding, i.e., satisfying the balancing condition:
\begin{align}
\label{balancing}
    \theta_{x\otimes y}=c_{y,x}\circ c_{x,y}(\theta_x\otimes \theta_y)
\end{align}
and such that $\theta_{\mathbf{1}}=id_{\mathbf{1}}$ where $\mathbf{1}$ is the monoidal unit.
\footnote{The physical origin of this condition is that, if $\phi_1, \phi_2, \phi_3$ are (holomorphic parts of) primary fields in 2d CFT, of conformal dimensions $h_1, h_2, h_3$, then the coefficient of $\phi_3(w)$ in the operator product of $\phi_1(z)$ and $\phi_2(w)$ depends on $z,w$ as a constant times $(z-w)^{h_3 - h_2 - h_1}$.}

Since we are interested in super-MTCs arising from (spin) Chern-Simons theories, all of the categories below are assumed to be unitary. 
The spherical structure allows us to define the trace, such that the modular S-matrix is defined through the braiding as follows 
\begin{align}
    \CS_{x,y}=\Tr[c_{y,x}\circ c_{x,y}]\,
\end{align}
where $x$ and $y$ are simple objects of $\CC$.

A unitary premodular category $\CC$ is an MTC if and only if its $S$-matrix is non-degenerate. Working with spin TQFTs requires introduction of super-MTCs, see \cite{Bruillard2017,Bruillard2020} for reviews. We say that a premodular category $\CC$ is a super-MTC if its M\"{u}ger center is equivalent the category of super-vector spaces $sVec$. The $S$- and $T$-matrices of $sVec$ are given by $\CS_{sVec}=\begin{psmallmatrix}
1& 1\\
1 & 1
\end{psmallmatrix}$ and $T_{sVec}=\begin{psmallmatrix}
1& 0\\
0 & -1
\end{psmallmatrix}$. In general, the super-MTC does not factorize into a product of an MTC and $sVec$. However, the $S$-matrix of any super-MTC has the form $\CS=\CS_{sVec}\otimes \CS_{nd}$ where $\CS_{nd}$ is some non-degenerate $S$-matrix \cite{Bruillard2017}.

Further, we introduce the notion of time-reversed (super-) MTC analogous to the time-reversed quantum data of Abelian Chern-Simons. The time-reversed version of a unitary (super-) MTC $\CC$, denoted by $\CC^{rev}$, also known as the reversed category, is a category with the same objects as $\CC$ but with the reversed braiding $ c_{x,y}^{rev}=c_{y,x}^{-1}$. This definition of time-reversed category was also used in \cite{Lee:2018eqa,Kong:2022cpy}.

\begin{definition}
 The unitary (super-) MTC $\CC$ is $\T$-symmetric if there exists a braided monoidal equivalence $\CF:\CC\xrightarrow{\sim}\CC^{rev}$.
\end{definition}

If such an $\CF$ exists, then $\bar \theta_x=\theta_{\CF (x)}$ and $\bar \CS_{x,y}=\CS_{\CF(x),\CF(y)}$  automatically \cite{Kong:2022cpy} for all simple objects $x$ and $y$. Here, the overline means complex conjugation and we are identifying $Hom(x,x) \cong \mathbb{C}$ for simple objects $x$. 

\paragraph{Remark.} The condition for the time-reversal invariance is somewhat similar to the conditions used for gluing left and right movers into modular invariant correlation functions \cite{Frohlich:2009gb,Moore:1988ss}.
  In RCFT, the physical Hilbert space is a finite sum of irreducible representations of chiral and anti-chiral algebras $\CH=\oplus_{i=1}\CH_{i}\otimes \CH_{\bar i}$. The partition function of the full theory is $Z=\sum_{i,j} \chi_i\, h_{i,j}\, \overline{\chi_{j}}$ where $\chi_i$ is the character of the $i$-th representation and $h_{i,j}$ are non-negative integers characterizing the field content of the theory. The set of numbers $h_{i,j}$ is subject to algebraic constraints ensuring the modular invariance \cite{Capelli1987,Dijkgraaf:1988tf,Frohlich:2009gb,Moore:1988ss}. If the theory is time-reversal invariant, then we can form a bilinear modular invariant combination of chiral characters $X=\sum_{i,j} \chi_i\, g_{i,j}\, \chi_{j}$. Note, for the general theory, there always exists the diagonal partition function $Z=\sum_{i}\chi_{i}\bar \chi_{i}$, while $X$ need not exist at all.
\subsection{(Super)-MTC for Abelian Chern-Simons} 
\label{MTCs} A prototypical example of an MTC corresponding to bosonic toral CS is the twisted group category $\CC(\sD,q)$ introduced in \cite{JOYAL199320}: objects are vector spaces labeled by the elements of the finite Abelian group $\sD$ with the corresponding fusion rules, while the braiding and associators are encoded by the quadratic form $q$, see \cite{stirling2008abelian} for a pedagogical review. Alternatively, $\CC(\sD,q)$ as a fusion category, can be seen as the category of vector bundles with convolution product. The twists and the S-matrix are determined by the quadratic and bilinear forms: 
\begin{align}
    \theta_{x}=e^{2\pi i q(x)}\,,\quad  \CS_{x,y}=e^{2\pi i b(x,y)}\,,\quad x,y\in\sD\,.
\end{align}

Any finite quadratic form defines a pointed MTC $\CC(\sD,q)$ and all pointed MTCs are of this form \cite{davydov2010Witt}.
\footnote{In physical language a ``pointed'' MTC has a fusion rule algebra that is group-like, i.e., there is exactly one simple object in the fusion product of two simple objects. The simple objects can be chiral fields in RCFT or 
anyons in 3d topological phases. Such MTCs were analyzed fairly thoroughly in Appendix E of \cite{Moore:1988qv}
and  in \cite{Lee:2018eqa}. } 
According to the connection between finite  quadratic forms and braiding on $\CC(\sD,q)$ \cite{EM1950} (see also \cite{stirling2008abelian} for a review), all $\T$-invariant MTCs correspond to $\T$-invariant quadratic forms. Therefore, Theorem \ref{Th:quadTforms} gives the complete description of $\T$-invariant pointed MTCs.

In the case of spin CS, the category assigned to a bounding circle (NS sector) is a super-MTC, while the category assigned to a non-bounding circle (R sector) is a module category for the super-MTC associated with the NS sector. According to Corollary A.19 of \cite{davydov2010Witt}, any pointed super-MTC $\CC$ is a Deligne tensor product of some pointed MTC $\CD$ and $sVec$: $\CC\xrightarrow{\sim} \CD \boxtimes sVec$.
Therefore, any pointed super-MTC is defined by a finite quadratic form since $\CD\xrightarrow{\sim} \CC(\sD,q)$ for some $(\sD,q)$. In other words, the NS sector is still defined by a quadratic form, while the R sector is defined by a quadratic refinement. 

In the spin case, the quadratic refinement $\hat q$ defines the topological spin of anyons in the R sector. The fact that the R sector is a module for the NS sector is reflected in the equivalence relation imposed on the set of quadratic refinements defined in \ref{def:QuRefsEq}. Physically, the set of values of $\hat q$ defines the set of Hall conductivities $\sigma_{H}=8\hat q$ mod $8$ \cite{Belov:2005ze}. 

Let us discuss $\T$-symmetric pointed super-MTCs. If $\CC(\sD,q)$ is $\T$-symmetric, then $\CC(\sD,q)\boxtimes sVec$ is $\T$-symmetric straightforwardly. Next, assume that $\CC(\sD,q)$ is not $\T$-invariant, but $\CC(\sD,q)\boxtimes sVec$ is. Let us observe that $sVec$ is equivalent to ``$\CC(\IZ/2,x^2/2)$''. Then, $\CC(\sD\oplus \IZ/2,q+x^2/2)$ is $\T$-invariant only if $\sD$ has a non-zero $2$-subgroup $\sD_2$: only in this case the automorphism group gets non-trivially enhanced. Further we notice that $\tau_n(q+ x^2/2)=0$ for all odd $n$, while  $\tau_{n}(q+x^2/2)=\tau_n(q)$ for all even $n$. In other words, adding $x^2/2$ can only fix a possible non-reality of $\tau_n(q)$ with $n$ odd. Therefore, $q$ must be a sum of $\omega_2^1$'s and $\omega_2^{-1}$'s. Theorem \ref{Th:quadTforms} is not applicable to degenerate quadratic forms, yet the necessary condition still holds. We simply check that  $((\IZ/2)^{r+1}, \sum_{i=1}^{r}x_{i}^2/4+y^2/2)$ is $\T$-invariant for all $r\geqslant 1$. Remarkably, this conclusion is consistent with our analysis of $\T$-invariant quadratic refinements. The R-sector is non-trivially $\T$-invariant, i.e., the $\T$-transformation involves a shift of the argument, only when the fusion group is $(\IZ/2)^r$.

Let us summarize the classification of pointed $\T$-invariant MTCs and super-MTCs. All pointed $\T$-invariant MTCs are given by $\CC(\sD,q)$ where $(\sD,q)$ is some quadratic $\T$-form. All $\T$-invariant super-MTCs are given by $\CC(\sD,q)\boxtimes \CC(\omega_2^{+1})^{\boxtimes r}\boxtimes sVec$ where $(\sD,q)$ is a quadratic $\T$-form and $r\geqslant 0$. 

\subsection{(Super)-MTC for non-Abelian Chern-Simons}
\label{Super-MTC in Abelian CS}
Consider a compact non-Abelian Lie group $G$. Classically, the Chern-Simons functional with the gauge group $G$ is defined through an element $k\in H^{4}(BG,\IZ)$, see \cite{Dijkgraaf1990}. We can associate the quantum data to such theory through the CS/WZW correspondence \cite{Axelrod1991, Elitzur:1989nr, Witten1989, Moore:1989vd}. Namely, for the bosonic CS theory with the group $G$ and positive level $k$, we associate the category $\CC(\mathfrak{g}_k)$ of the integrable highest weight representations of the corresponding Affine Lie algebra at level $k$. The category $\CC(\mathfrak{g}_k)$ is an MTC if $\mathfrak{g}$ is semi-simple. However, this is not always the case of physical interest.  Along with the categories $\CC(\mathfrak{g}_k)$, we shall use the modifications thereof obtained by gauging certain automorphisms.

\paragraph{Example.} Our basic example is $\CC(\mathfrak{su}(N)_k)$: the integrable irreps of $\mathfrak{su}(N)_k$ are labeled by Young diagrams with not more than $N\!-\!1$ rows and not more than $k$ columns. There is an action of the center $\IZ/N$ of $SU(N)$ such that the irreps labeled by the Young diagrams with $r$ boxes have the $\IZ/N$-charge $r$. By modding out the action of the center, we are left with the irreps whose Young diagrams contain $r\equiv 0$ mod $N$ boxes. The category obtained in this way is a full subcategory of $\CC(\mathfrak{su}(N)_k)$, denoted by $\CC(PSU(N)_k)$. The latter category is, in fact, a super-MTC for $N$ even: there is a ``transparent object'' $f$ with $\theta_f=-1$ labeled by the Young diagram with the maximal allowed number of boxes.
\vspace{3mm}

\paragraph{A Summary Of Known $\T$-invariant Chern-Simons theories.} Let us collect examples of $\T$-invariant non-Abelian CS theories. As gauge theories, (spin) CS theories with $G=PSU(N)$ at level $N$ were shown to be $\T$-invariant for all $N$ \cite{Hsin:2016blu} (originally this was discovered as a special case of the level-rank duality in \cite{Naculich1990,Konho1996}).  Additional examples of $\T$-invariant (spin) CS theories include the ones with the gauge groups $U(N)_{N,2N}$, $USp(2N)_N$, and $SO(N)_N$, see \cite{Hsin:2016blu,Aharony:2016jvv}. In \cite{Edie-Michell:2020}, it was demonstrated that $CS_{(SU(N)_2)}$ where $N$ is an odd number such that $\left(\frac{-1}{N}\right)=1$ is $\T$-invariant. Finally, the $\T$-invariance of $CS_{(PSU(2)_6)}$ was discussed in \cite{Fidkowski2013}. 

\subsection{Higher Gauss sums} 
Analogously to the Abelian case, we can define the higher Gauss sums for any unitary (super-) MTC $\CC$. Let $d_x=\Tr[id_x]$ be the quantum dimension of $x\in \mbox{Irr}(\CC)$ where $id_x:x\to x$ is the identity morphism of $x$ and $\mbox{Irr}(\CC)$ is the set of irreducible objects of $\CC$. For any $n\in \IZ$, we define the $n$-th Gauss sum of $\CC$ as
\begin{align}
\label{tau}
 \tau_n(\CC)=\sum_{x\in \scalebox{0.7}{Irr}(\CC)}d_x^2\; \theta_x^n\,.
\end{align}
Assuming $\tau_n(\CC)\neq 0$, the $n$-th multiplicative central charge of $\CC$ \cite{NG2019} is 
\begin{align}
\xi_n(\CC)=\frac{\tau_n(\CC)}{|\tau_n(\CC)|}\,.
\end{align}

Let us review the key properties of the higher Gauss sums and central charges. The first central charge $\xi_1(\CC)$ of any MTC is a root of unity \cite{Anderson:1988}, while the first Gauss sum of a super-MTC is always zero \cite{Bruillard2017}. In general, if $\CC$ is a only a premodular category, i.e., a category with all the precursors of the MTC except for the non-degeneracy condition, then the following relations hold \cite{NG2019}:

1) If $\tau_n(\CC)\neq 0$, then $\xi_n(\CC^{rev})=\xi_{-n}(\CC)=\overline{ \xi_n(\CC)}$,

2) For all $n\in \IZ$, $\tau_n(\CC\boxtimes \CD)=\tau_n(\CC)\tau_n(\CD)$,

3) If $\tau_n(\CC)\tau_n(\CD)\neq 0$, then $\xi_n(\CC\boxtimes \CD)=\xi_n(\CC)\xi_n(\CD)$,

4) If $\CC$ is an MTC and $\CD(\CC)$ is its Drinfeld center (see Appendix \ref{WittAppendix} for definitions), then $\xi_n(\CZ(\CC))=1$ for all $n\in \IZ$ such that $\tau_n(\CC)\neq 0$.

The higher central charges, whenever defined, are multiplicative maps which take the unit value on the Drinfeld centers. We shall see that the higher central charges can be seen as homomorphisms from the Witt group $\CW_{un}$ of pseudo-unitary non-degenerate braided fusion categories to the group of roots of unity. For future convenience, let us introduce a notation for the subset of natural numbers coprime to the Frobenius-Schur exponent \cite{Ng2007} $FSExp(\CC)$ of $\CC$: 
 \begin{align}
      D(\CC)\equiv\{l\in \IN\,|\; gcd(n,FSExp(\CC))=1 \}\,.
\end{align}
 Note, if $\CC$ is an MTC, then the Frobenius-Schur exponent of $\CC$ coincides with the order of $\theta$, the minimal natural number $m$ such that $\theta_x^m=1$ for all $x\in\mbox{Irr}(\CC)$.
 
According to Theorem 4.1 of \cite{NG2019}, if $n\in D(\CC)$, then $\tau_n(\CC)\neq 0$ and $\xi_n(\CC)$ is a root of unity. Therefore, for any unitary MTC $\CC$, we define a formal sequence of roots of unity $\{\xi_n(\CC)\}_{n\in D(\CC)}$. We recommend the original papers \cite{NG2019,ng2020higher} for a more elegant treatment of those sequences in terms of the Galois group of $\CC$. If the sequence contains only $1$'s, then such an MTC represents the trivial element of the Witt group of non-degenerate pseudo-unitary braided fusion categories $\CW_{un}$ (Theorem 4.4 of \cite{NG2019}). Thus, the sequence $\{\xi_n(\CC)\}_{n\in D(\CC)}$ is the same for all categories Witt-equivalent to $\CC$. Therefore, the $\{\xi_n(\CC)\}_{n\in D(\CC)}$ is a homomorphism from the Witt-class of $\CC$, denoted by $[\CC]$, to the group of roots of unity.

Finally, let us apply the higher Gauss sums to the study of time-reversal in unitary (super-) MTCs. An argument similar to the Abelian case, shows that each higher Gauss sum of the $\T$-invariant (super-) MTC is a real number. Since each $d_x$ is a real number for any simple object in a spherical category \cite{Etingof:2005}, the braiding reversal complex conjugates the Gauss sum $\tau_n(\CC)$.  As an example, consider the $\T$-invariant super-MTC $\CC(PSU(N)_N)$: some higher central charges for these categories can be found in Table \ref{tab:Charges}.
 \begin{table}[h!!]
\begin{center}
    \begin{tabular}{ c | c | c | c | c | c | c | c }
\backslashbox{$N$}{$m$}  & $1$ & $2$ & $3$ & $4$ & $5$ & $6$ & $7$  \\
\hline
$3$  & $-1$ & $+1$ & $-1$ & $+1$ & $-1$ & $+1$ & $-1$ \\
\hline
$5$  & $-1$ & $+1$ & $-1$ & $+1$ & $+1$ & $+1$ & $-1$ \\
\hline
$7$  & $+1$ & $+1$ & $+1$ & $+1$ & $+1$ & $+1$ & $-1$ \\
\end{tabular}
\end{center}
\caption{\label{tab:Charges}Values of $\xi_m(\CC(PSU(N)_N))$.}
\end{table}

From the preceding discussion, we obtain that the reality of all higher Gauss sums is a necessary condition for the unitary (super-) MTC $\CC$ to be $\T$-invariant. Moreover, a necessary condition for the $\T$-invariance of the MTC $\CC$ is that $[\CC]^2=1$ in $\CW_{un}$. As we discussed in the pointed case, the Witt-triviality alone is not sufficient for an MTC to be $\T$-invariant. Indeed, the Witt-class of $\CC$ is constrained by $\tau_n (\CC)$ with $n\in D(\CC)$, while $\tau_n(\CC)$ with $n\notin D(\CC)$ can be arbitrary. In the case of pointed (super-) MTCs, the reality of all the higher Gauss sums is necessary and sufficient for the time-reversal invariance. One might expect that in the non-pointed case, the reality of all higher Gauss sums would suffice as well, so we formulate 

\begin{conjecture}
\label{conjecture}
 $\CC(G_k)$ is $\T$-invariant if and only if $\tau_n(\CC(G_k))$ is a real number for all $n\in \IZ$.
\end{conjecture}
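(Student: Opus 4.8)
The statement has two directions, and the easy one is already contained in the discussion above. If a braided monoidal equivalence $\CF:\CC(G_k)\xrightarrow{\sim}\CC(G_k)^{rev}$ exists, then each $\tau_n$ is invariant under $\CF$, while the identity $\tau_n(\CC^{rev})=\overline{\tau_n(\CC)}$ (using that the $d_x$ are real and $\theta^{rev}=\theta^{-1}$) forces $\tau_n(\CC)=\overline{\tau_n(\CC)}$ for every $n$. All the content therefore lies in the converse, and that is the part I would attack.

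The first step is to reformulate reality of the full sequence $\{\tau_n\}_{n\in\IZ}$ as a symmetry of the spin spectrum. Writing $\mu=\sum_{x}d_x^2\,\delta_{\theta_x}$ for the measure on the unit circle that weights each twist by its squared quantum dimension, one has $\tau_n(\CC)=\int z^n\,d\mu$, so reality of all $\tau_n$ is precisely the statement that $\mu$ is invariant under complex conjugation $z\mapsto\bar z$, i.e. the multiset $\{(d_x,\theta_x)\}$ agrees with $\{(d_x,\theta_x^{-1})\}$. This already exhibits the central difficulty: the datum $\mu$ remembers only the quantum dimensions and twists, the ``diagonal'' part of the modular data, whereas an equivalence $\CC\simeq\CC^{rev}$ demands matching the full $\CS$-matrix together with the $F$- and $R$-symbols. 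The plan is to bootstrap this coarse numerical symmetry up to an honest braided equivalence by exploiting the rigidity of the modular data of $\CC(G_k)$.

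Concretely, since $\CC^{rev}$ carries the complex-conjugate modular data $(\bar\CS,\bar T)$, it is a Galois conjugate of $\CC$ under complex conjugation $\sigma_c$, and by the Galois symmetry of modular data (de Boer--Goeree, Coste--Gannon) $\sigma_c$ is realized by a signed permutation of the simple objects. The coprime Gauss sums already give $[\CC]^2=1$ in $\CW_{un}$ (Theorem 4.4 of \cite{NG2019}), so $\CC$ and $\CC^{rev}$ are Witt equivalent; the extra input at the non-coprime $n$ must then be shown to upgrade this Witt equivalence to a braided one, constraining $\sigma_c$ to coincide with charge conjugation, which is always an autoequivalence. The base cases are the known $\T$-invariant families $PSU(N)_N$, $U(N)_{N,2N}$, $USp(2N)_N$, $SO(N)_N$ and $SU(N)_2$, the group-like part is handled by Theorem \ref{Th:quadTforms}, and in the spin case one tracks the transparent fermion exactly as in the pointed super-MTC analysis of Section \ref{MTCs}; for the remainder I would work family by family, using level-rank and conformal-embedding dualities to produce $\CC(G_k)\simeq\CC(G_k)^{rev}$ whenever $\mu$ is conjugation-symmetric.

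The hard part will be precisely this last upgrade. There is no soft categorical reason why invariance of the dimension-weighted spin measure should produce a braided functor, and a priori one cannot exclude a hypothetical $\CC(G_k)$ whose spectrum is conjugation-symmetric yet which is inequivalent to its reverse. Ruling this out appears to require either a classification-based case analysis of WZW modular data or a genuinely new complete invariant, strictly finer than the higher central charges but still expressible through $(d_x,\theta_x)$-type data. It is the absence of such a bridge in general that forces the statement to be posed as a conjecture rather than proved as a theorem.
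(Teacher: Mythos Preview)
The statement is a \emph{conjecture} in the paper, not a theorem; the paper does not supply a proof and explicitly says only that a proof, if it exists, ``might follow the lines of \cite{Ostrik2014} and of \cite{Edie-Michell:2020}.'' Your treatment of the necessary direction is exactly what the paper does in the paragraph preceding the conjecture (reality of $d_x$ plus $\theta^{rev}=\theta^{-1}$ gives $\tau_n(\CC^{rev})=\overline{\tau_n(\CC)}$), and your final paragraph correctly identifies why the converse is left open: the higher Gauss sums see only the dimension-weighted spin multiset, not the full braided structure, so there is no general mechanism to promote the numerical symmetry to a braided equivalence.

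Your middle two paragraphs go well beyond anything the paper attempts. The reformulation via the measure $\mu$ and the Galois-action heuristic are reasonable lines of attack, but note a subtlety in your phrasing: you suggest the non-coprime Gauss sums should force the Galois permutation $\sigma_c$ to coincide with charge conjugation, yet charge conjugation is a braided equivalence $\CC\to\CC$, not $\CC\to\CC^{rev}$, so that particular endpoint would not give what you want. What one actually needs is that $\sigma_c$ is realized by some genuine braided autoequivalence composed with the tautological identification $\CC\leftrightarrow\CC^{rev}$ at the level of objects; this is precisely the kind of statement Edie-Michell proves case by case for type $A,B,C,G$, and is why the paper points there. Your honest assessment that a general argument would require either exhaustive casework or a new invariant is the correct takeaway.
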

In principle, it should be possible to analyze simple MTCs directly. The proof of the conjecture, if it exists, might follow the lines of \cite{Ostrik2014} and of \cite{Edie-Michell:2020}. 

Being order two in $\CW_{un}$ is necessary for a theory to be $\T$-invariant.  A subset of $\T$-invariant theories can be obtained from Corollary 5.20 of \cite{davydov2010Witt} stating that if $\CC$ is simple non-pointed and completely anisotropic, then $[\CC]^2=1$ in $\CW$ implies that $\CC\xrightarrow{\sim}\CC^{rev}$. In other words, the smallest representative of a Witt class of order two is $\T$-invariant. 

\section{Summary and Discussion}
\label{Discussion}
 Let us briefly summarize our main results (not all of them are independent). 
 \begin{itemize}
  \item We describe isomorphism classes $\T \mbox{bil}(\IZ)$ of torsion bilinear forms on finite Abelian groups that satisfy $(\sD,b)\cong (\sD,-b)$ in Proposition \ref{Tlinking}.
  
  \item Any element of order $2$ in the Witt group $W$ of finite bilinear forms lifts to an element satisfying $(\sD,b)\cong (\sD,-b)$, according to Theorem \ref{WittTforms}. 
  
\item We characterize isomorphism classes $\T \mbox{quad}(\IZ)$ of torsion bilinear forms on finite Abelian groups that satisfy $(\sD,q)\cong (\sD,-q)$ in Theorem \ref{Th:quadTforms}.

\item We describe equivalence classes of quadratic $\T$-refinements satisfying in Section \ref{Trefinements}. Note the equivalence relations for quadratic refinements is a more relaxed notion than the isomorphism of quadratic forms, see Section \ref{Conventions}.

\item We formulate a condition for the classical Abelian spin CS theory $CS_{(\l,K)}$ to define a quantum $\T$-invariant theory in Proposition \ref{NikulinEmbedding}.

\item We classify all pointed $\T$-invariant MTCs and super-MTCs in Section \ref{MTCs}.

\item We conjecture that non-Abelian (spin) Chern-Simons with compact gauge group $G$ defined through $k\in H^4(BG,\IZ)$ is $\T$-invariant if and only if the corresponding (super-) MTC $\CC(G_k)$ has real higher Gauss sums.
\end{itemize}

\paragraph{Anti-unitary symmetries in TQFT.} In the present paper, we approach the question of when Chern-Simons admits an anti-unitary symmetry. Our approach is rather algebraic -- we look for the condition on the quantum data to admit a time-reversal symmetry. A different approach to the definition of time-reversal invariant $3\!-\!2$ TQFT should reflect the fact that such TQFT is not sensitive to changing the orientation of bordisms and $2$-manifolds. Therefore, to ensure $\T$-invariance, there must exist a lift of the corresponding tangential structure to a  non-oriented one. For simplicity, consider a non-anomalous oriented bosonic TQFT $\mathcal{F}^{SO}$, i.e., a monoidal functor from the category of oriented bordisms $\mbox{Bord}_{3,2}^{SO}$ to the category of vector spaces
\begin{align}
    \mathcal{F}^{SO}: \mbox{Bord}_{3,2}^{SO}\longrightarrow \mbox{Vec}\,.
\end{align}
Time-reversal invariance can be seen as the existence of a lift of $\mathcal{F}$ to $\tilde{\mathcal{F}}:\mbox{Bord}_{3,2}^{O}\to \mbox{Vec}$ such that the following diagram with the forgetful functor $\kappa: \mbox{Bord}_{3,2}^{SO}\to \mbox{Bord}_{3,2}^{O} $ commutes (in the 2-categorical sense)
 \begin{center}
\begin{tikzcd}
  \mbox{Bord}_{3,2}^{SO} \arrow{dr}{\mathcal{F}}  \arrow{rr}{\kappa} &                         & \mbox{Bord}_{3,2}^{O} \arrow[dl, " \tilde{\mathcal{F}}"']\\
  &  \mbox{Vec}  &
\end{tikzcd}
\end{center}
 Similarly, a non-anomalous $3\!-\!2$ spin TQFT is a functor to the category of super vector spaces
 \begin{align}
    \mathscr{S}: \mbox{Bord}_{3,2}^{Spin}\longrightarrow \mbox{sVec}\,.
\end{align}
 and $\T$-invariance condition is the existence of a lift of this functor.  Depending on the realization of the time-reversal transformation, the $Spin$ structure lifts to $Pin^{+}$ or $Pin^{-}$ structure, see \cite{Debray2022} for analysis of this question for the case of Chern-Simons with a finite Abelian gauge group.
 
 In terms of topological invariants, it was pointed out in \cite{NG2019} that the Witten-Reshetikhin–Turaev invariants of the lens space $L(n,1)$ and that of its orientation reversal are related as follows: 
 \begin{align}
     \frac{WRT_{\CC}(L(n,1))}{WRT_{\CC}(-L(n,1))}=\frac{\tau_n(\CC)}{\tau_{-n}(\CC)}\frac{\tau_{-1}(\CC)}{\tau_1(\CC)}
 \end{align}
 assuming that the relevant Gauss sums do not vanish. This relation is consistent with our conjecture, as we expect that time-reversal invariant MTCs provide us with invariants non-sensible to the orientation.
 
 In the present paper, we analyse the algebraic data coming from the value $\mathcal{F}(S^1)$. It would be interesting to investigate how the time-reversal symmetry is reflected on the value $\mathcal{F}(pt)$, in the sense of \cite{Freed:2009qp,Henriques}.
 
 \paragraph{Time-reversal symmetry vs. time-reversal duality}
\label{TinvCS}
In \cite{Freed2021}, it was shown that Witten-Reshetikhin-Turaev theory admits a non-trivial boundary theory iff it is a Turaev-Viro theory. Recall, a 3d TQFT always admits a gapless boundary e.g., through the CS/WZW correspondence, while the existence of gapped boundary theory is a non-trivial condition. For Turaev-Viro theory, the value of the TQFT functor on a 3-framed circle is a Drinfeld center. Therefore, with TQFTs admitting gapped boundaries, we associate Witt-trivial theories. This statement was rederived for Abelian Chern-Simons in \cite{kaidi2021higher}-- triviality of a certain subset of higher central charges suffices for bosonic Abelian Chern-Simons to admit a gapped boundary.

Being Witt-trivial is necessary and sufficient for the theory  to admit a gapped boundary. Let us discuss what it means for the theory to represent an order two element in the Witt group. If $[\CC(G_k)]^2=1$, then there exists a braided equivalence 
\begin{align}
\label{double}
    \CC(G_k)\boxtimes \CC(G_k)\xrightarrow{\sim} \CZ(\CA)
\end{align}
for some fusion category $\CA$ (Corollary 5.9 of \cite{davydov2010Witt}). Physically, this means that the theory $\CC(G_k)$ stacked with itself is equivalent to a Witt-trivial theory, a theory admitting a gapped boundary. By unfolding this construction, we get that $\CC(G_k)$ satisfying \eqref{double} admits a topological interface with its time-reversal: one can say that such a theory admits a time-reversal \underline{duality}. Topological interfaces in Abelian Chern-Simons were discussed in \cite{Kapustin:2010if} and studied in more detail in the context of general 3d TQFTs in \cite{Fuchs:2013}. There is a nice mathematical interpretation of such an interface pointed out to us by Christoph Schweigert: One can define a  higher version of the Morita 2-category of (algebras, bimodules, intertwiners) called  BrTens. It is a 4-category which is suitable for the Morita theory of braided fusion categories, and is spelled out in detail in   \cite{Brochier2018}. A topological interface between $\CC$ and $\CC^{rev}$ can be seen as a 1-morphism between the corresponding objects in the higher Morita category BrTens of braided tensor categories. 

Let us comment more on the analogy between pairing left- and right-movers in RCFT and the existence of time-reversal duality in CS (or self-duality in RCFT according to \cite{Moore:1989vd}). The equivalence $\CC(G_k)\boxtimes \CC(G_k)^{rev}\xrightarrow{\sim} \CZ(\CC(G_k))$, holding for any MTC $\CC(G_k)$, is related to the existence of at least one sesqui-linear modular invariant pairing on $\mbox{Irr}(\CC(G_k))$, namely the diagonal modular invariant. In its turn, condition \eqref{double} is equivalent to the existence of a bilinear pairing on objects such that the total theory is ``condensable to the vacuum'' in the sense of \cite{Kong2014AnyonCA}. 

\paragraph{$\T$-symmetry and Witt relations.}
In Section \ref{WittBilgroup}, we discussed the Witt group of bilinear forms and its relation to the Kawauchi and Kojima invariants. The latter invariants are given by the higher Gauss sums of a specific quadratic form associated to the given bilinear form. We noticed that the class of a bilinear form in $W$ depends on all the higher Gauss sums of the associated quadratic form (though, not all of them are independent). This must be contrasted with the fact that the class of an abstract quadratic form in $\CW_{pt}$ depends only on a subset of its higher Gauss sums (related to the Frobenius-Schur exponent). One might ask if there is an analog of $W$ for pseudo-unitary braided fusion categories that  depends on \underline{all} of the higher Gauss sums of a class representative. We proved that being order two in $W$ is necessary and sufficient for a bilinear form to be $\T$-symmetric in Theorem \ref{WittBilgroup}. If the conjecture made in Section \ref{conjecture} is correct, then one might ask if being order two in the hypothetical analog of $W$ is sufficient for $\T$-invariance. 

We noted that a quadratic $\T$-form can be Witt equivalent to a non-$\T$-invariant quadratic form. Similarly, the $\T$-symmetry of a unitary MTC is not preserved by the Witt relation given in Definition \ref{WittCat}. It is an intriguing question if one can define an analog of $\CW$ which would consist of $\T$-symmetric uMTCs modulo a Witt-like relation preserving the $\T$-symmetry.

\paragraph{``Real'' fusion categories.}
According to the standard definition, a fusion category is, in particular, a tensor category over $\IC$. A complex fusion category $\CC_{\IC}$ is ``real'' if it is equivalent to a product of a tensor category $\CC_{\IR}$ defined over $\IR$  and $Vec_{\IC}$: $\CC_{\IC}\xrightarrow{\sim} \CC_{\IR}\boxtimes_{\IR} Vec_{\IC}$. This is the standard framework for the theory of Galois descent, which is discussed in the categorical context in \cite{Etingof2012} and \cite{Sanford2022}. When working over $\IR$, we allow simple objects to have non-trivial endomorphism algebras. Almost trivial examples are $\CC(\phi_2)$ and $\CC(\psi_2)$: their real versions are fusion categories with two objects whose endomorphism algebras are simply $\IR$. In general, $\CC_{\IR}$ and $\CC_{\IC}$ need not have the same sets of simple objects. It would be interesting to understand if the anomaly partition function \cite{Barkeshli:2016mew,Lee:2018eqa} can detect the splitting of $\CC_{\IC}$.

\section*{Acknowledgements}
This project began with a discussion at the IAS with N. Seiberg and E. Witten in 2016. We thank A. Apte, T. D. Brennan, D. Freed, A. Schopieray, C. Schweigert, N. Seiberg, Shu-Heng Shao, C. Teleman, and Y. Tachikawa for discussions and correspondence. R.G. is thankful to A. Debray, D. Reutter, S. Sanford, L.	Stehouwer, and Y. Hu for many enlightening discussions. We are grateful to T. Johnson-Freyd for pointing a mistake in the first version of this paper. The authors are supported by the DOE under grant DOE-SC0010008 to Rutgers. R.G. is grateful to Perimeter Institute for hospitality during the school-conference Global Categorical Symmetries. Research at Perimeter Institute is supported in part by the Government of Canada through the Department of Innovation, Science and Economic Development Canada and by the Province of Ontario through the Ministry of Colleges and Universities. This work was in part performed at the Aspen Center for Physics, which is supported by National Science Foundation grant PHY-1607611.
\appendix
\section{The Maps \texorpdfstring{$\rho$}{phi}}
\label{RhoMaps}
In this appendix, we review the maps $\rho_k$ producing certain homogeneous groups out of the general Abelian group, closely following the book \cite{Miranda2009}. For a finite Abelian group $G$, we introduce a subgroup $G_{p,k}=\{x\in G\,|\,p^{k}x=0\}$ consisting of elements of order $p^k$. Then, we define
\begin{align}
    \rho_{p,k}(G)=G_{p,k}/(G_{p,k-1}+pG_{p,k})\,.
\end{align}
This map is additive and $\rho_{p,k}(G)$ is either trivial group or homogeneous group with exponent $p$: 
\begin{align}
\rho_{p,k}((\IZ/p^{r})^{d})\cong \begin{cases}
\{0\}\quad\quad\;\, \mbox{if}\;k\neq r\,,\\
(\IZ/p)^{d}\quad \mbox{if}\; k=r\,.
\end{cases}
\end{align}
 The finite form $b$ on on the $p$-group $G_p$ induces a form on $\rho_{p,k}(G_p)$, which we denote by $\rho_{p,k}(b)$, by the following rule:
\begin{align}
    \rho_{p,k}(b)(\bar{x},\bar{y})=2^{k-1}b(x,y)\,,\;\; \mbox{where}\;\; x,y\in G_p\;\;\mbox{and}\;\;\bar{x}=\rho_{p,k}(x),\,\bar{y}=\rho_{p,k}(y)\,.
\end{align}
 For example, $\rho_{2,r}(A_{2^r})=\rho_{2,r}(B_{2^r})=\rho_{2,r}(C_{2^r})=\rho_{2,r}(D_{2^r})=A_2$ and $\rho_{2,k}(E_{2^r})=\rho_{2,k}(F_{2^k})=E_2$, while for odd $p$ we have $\rho_{p,r}(X_{p^r})=X_p$ and $\rho_{p,r}(Y_{p^r})=Y_p$. 

 \section{The Seifert Invariants}
 \label{SeifertInv}
The invariants introduced by Seifiert in \cite{Seifert1933} distinguish Types I and II of normal forms on $p$-groups for odd $p$. Since $\rho_{p,k}(G)$ is a $p$-group of exponent $1$, we can treat it as a vector space over $\IZ/p$ and $(\rho_{p,k}(G),\rho_{p,k}(b))$ is an inner product space over $\IZ/p$. We embed $\IZ/p$ into $\IQ/\IZ$ by sending $1$ to $1/p$ and denote the induced form by $\tilde \rho_{p,k}(G)$. To this end, the Seifert invariant is the Legendre symbol of the determinant
 \begin{align}
     \chi_p^{k}(b)=\left(\frac{\det \tilde \rho_{p,k}(b)}{p}\right)\,.
 \end{align}
For odd $p$, the isomorphism class of a linking $((\IZ/p^r)^d),b)$ is fully determined by the invariants $r,d$, and $\chi_{p}^r(b)$. The isomorphism classes with $\chi_p^k(b)=1$ and $\chi_p^{k}(b)=-1$ can be represented by a finite form of Type II \ref{EvenType} and  Type I \ref{OddType} respectively.
\section{The Signature Invariants}
\label{SignatureInv}

Consider a $2$-group $G_2$. Due to
the relation on  the generators $A_2+E_2=3A_2$ of forms on $(\IZ/2)^d$, any form on the $2$-group of exponent $1$ can be brought to one of the two types: $dA_2$, or $(d/2)E_2$ (this is allowed only if $d$ is even). These two types are called type $A$ and type $E$, respectively.

Assume $\rho_{2,k}(b)$ is of type $E$, then we introduce a quadratic form on the group $G_2^{(k)}=G_2/\rho_{2,k}(G_2)$. Such quadratic form exists \cite{Kawauchi1980}, and defined by
\begin{align}
    q_k: G_2^{(k)}\to \IQ/\IZ\,\\
    q_k(\bar{x})=2^{k-1}b(x,x)
\end{align}
where $x$ is a pre-image of $\bar{x}$. Then, we define the Gaussian sum associated to the quadratic form:
\begin{equation}
S_k(b)=\frac{1}{\sqrt{|G_2^{(k)}|}}\sum_{\bar{x}\in G_2^{(k)}}\exp(2\pi i q_k(\bar{x}))\,.
\end{equation}
It was shown by Kawauchi and Kojima that $S_k(b)$ is the eighth root of unity $\exp(2\pi i \tilde \sigma_k(b)/8)$ whenever $\rho_{2,k}(b)$ is of type $E$.

Then, we define the signature invariant valued in the semigroup $\overline{\IZ}_8$:
\begin{align}
  \sigma_k(b)=\begin{cases}
  \infty \;\;\;\;\;\;\,\mbox{if}\;\;\mbox{$\rho_{2,k}(b)$ is of type $A$}\,,\\
  \tilde \sigma_k(b)\;\;\mbox{if}\;\;\mbox{$\rho_{2,k}(b)$ is of type $E$ or the zero-form}\,.
  \end{cases} 
\end{align}
Finally, Kawauchi and Kojima calculate the signatures for the generators of $\mbox{bil}(\IZ)_2$:
\begin{table}[h!]
\begin{center}
    \begin{tabular}{  | c | c | c | c | }
    \hline
\multirow{2}{*}{Generator}  & \multicolumn{3}{|c|}{$\sigma_k$} \\ \cline{2-4}
&$k=r$ & $r -k$ even,\;$k<r$ & $r-k$ odd,\;$k<r$\\ \hline
$A_{2^r}$ & $\infty$ & 1 & 1  \\
$B_{2^r}$ & $\infty$ & 7 &   7   \\ 
 $C_{2^r}$ & $\infty$  & 5 & 1 \\
 $D_{2^r}$ & $\infty$  & 3 & 7 \\
  $E_{2^r}$ & $0$  & 0 & 0 \\
 $F_{2^r}$ & $0$  & 0 & 4 \\
    \hline
    \end{tabular}
\end{center}
\caption{\label{GeneratorsSignatures}The signature invariants of generators of $\mbox{bil}(\IZ)_2$. We assume that $r-k\geqslant 1$, otherwise the signature is not defined.}
\end{table}

The Gauss sums are given by the standard formulas, see e.g., \cite{berndt1998gauss}:
\begin{align}
    \frac{1}{2^{r/2}}\sum_{n=0}^{2^r-1}e^{2\pi i an^2/2^{r+1}}=\exp \left(\frac{2\pi i}{8}\left(\frac{(a^2-1)(r+2)}{2}+a\right)\right)\,,\quad r\geqslant 2\,
\end{align}
for any odd $a$.

\section{The Witt Group of Non-Degenerate Braided Fusion Categories}
\label{WittAppendix}

In this appendix, we collect some relevant definitions and give a very brief review of the Witt group of braided fusion categories.  We assume the reader is familiar with the notion of a fusion category, see \cite{Etingof:2005} and \cite{EGNO} for the general reference and the original paper by A. Davydov et al. \cite{davydov2010Witt} on the Witt groups.

We say that a fusion category is pointed if all its simple objects are invertible, i.e., the fusion of simple objects is group-like. A fusion category $\CC$ is simple if the only proper fusion subcategory of $\CC$ is Vec. For a braided fusion category $\CC$, we introduce the M{\"u}ger center $\CZ_2(\CC)$ as the full subcategory whose objects have trivial double braiding with all objects in $\CC$: 
\begin{align}
    \CZ_2(\CC)=\{x\in \CC\;|\;c_{y,x}\circ c_{x,y}=id_{x\otimes y},\;\mbox{for all}\;y\in \CC\}\,.
\end{align}
We say that braided fusion category $\CC$ is non-degenerate if $\CZ_{2}(\CC)\cong \mbox{Vec}$, and slightly-degenerate if $\CZ_2(\CC)\cong \mbox{sVec}$. We define a premodular category to be a braided fusion category equipped with a ribbon structure. The ribbon and braided structures are sets of isomorphisms \eqref{braiding} and $\theta_x:x\xrightarrow{\sim} x$ subject to \eqref{balancing}.

There are some useful types of categories with less structure than premodular categories.  
A \emph{pivotal category} is a monoidal category with duals equipped with a morphism $\epsilon(x):\mathbf{1}\mapsto x\otimes x^{\vee}$ for each object $x$ with $x^{\vee}$ being the dual of $x$. The set of morphisms is subject to the conditions listed in \cite{Barrett:1993zf}. Finally, right and left traces are maps defined for any object $x$ and morphism $f:x\mapsto x$ as $\Tr_{R}[f]\equiv \epsilon(x^{\vee})\circ(\mathbf{1}\otimes f)\circ (\epsilon(x^{\vee}))^{\vee}$ and $\Tr_{L}[f]\equiv \epsilon(x)\circ(f\otimes \mathbf{1})\circ (\epsilon(x))^{\vee}$ respectively. Note these are in ${\rm End}(\textbf{1}) \cong \IC$ and hence can be identified with complex numbers. A \emph{spherical category} is a pivotal category where left and right traces are the same    for all objects \cite{Barrett:1993zf}.
Returning to premodular categories,
an alternative definition to the one given above, one could define a 
premodular category as a spherical braided fusion category. 

In this paper, we are interested in unitary categories coming from CS theories. Any unitary category is, in particular, pseudo-unitary. We refer the reader to, e.g., \cite{davydov2010Witt} for the definition of pseudo-unitary categories. It is expected that the notions of unitarity and pseudo-unitarity are equivalent\cite{Kong:2022cpy}. We use the term pseudo-unitary to be consistent with the math literature keeping in mind the conjectured equivalence. 

One of the results of \cite{Etingof:2005} states that any pseudo-unitary fusion category admits a unique spherical structure. Next, for any braided fusion category, there is a one-to-one correspondence between the spherical and ribbon structures \cite{Yetter:1992}. Therefore, unitary braided fusion categories, the categories of our interest, are premodular in a canonical way.

For any premodular category, we define the modular S-matrix:
\begin{align}
    \CS_{x,y}=\Tr[c_{y,x}\circ c_{x,y}]\,
\end{align}
where $x$ and $y$ are simple objects of $\CC$ and $\Tr$ is the trace corresponding to the ribbon structure. The non-degeneracy of the S-matrix is equivalent to the non-degeneracy of the braiding.

Let $\CC$ be a fusion category, then the Drinfeld center $\CZ(\CC)$ of $\CC$ is a  category whose objects are pairs $(x,\Phi_x)$ where $x\in Obj(\CC)$ and $\Phi_x$ is a family of natural isomorphisms 
\begin{align}
    \Phi_x: x\otimes y \xrightarrow{\sim} y\otimes x\quad \mbox{for all}\quad y\in \CC\,
\end{align}
satisfying $\Phi_{x\otimes y}=(id \otimes \Phi_y)\circ(\Phi_x\otimes id)$, see \cite{Kong:2022cpy} for the full set of axioms. The Drinfeld center of any fusion category is automatically a braided category. Moreover, the Drinfeld center of a spherical fusion category is an MTC \cite{Mueger2003}.  For any non-degenerate braided fusion category $\CC$, there is a braided monoidal equivalence $\CC\boxtimes \CC^{rev}\xrightarrow{\sim} \CZ(\CC)$ \cite{Mueger2003}. 

Let $\CC$ and $\CC'$ be fusion categories, then their Deligne tensor product $\CC\boxtimes \CC'$ is a fusion category satisfying the universal property: for any $\IC$-bilinear functor $F: \CC\times \CC'\to \CB$ for some fusion category $\CB$, there exists a unique $\IC$-linear functor $\tilde F:\CC\boxtimes \CC'\to \CB$ such that the following diagram commutes.
\begin{center}
\begin{tikzcd}
  \CC\times \CC' \arrow{dr}{F}  \arrow{rr}{\boxtimes} &                         & \CC\boxtimes \CC' \arrow[dl, "\tilde F"']\\
  &  \CB  &
\end{tikzcd}
\end{center}
Notice the universal property of the Deligne tensor product is a close analog of the universal property defining, e.g., the tensor product of vector spaces. If the categories are braided, so is their Deligne product. If $c_{x,y}$ and $c'_{x',y'}$ are the braiding isomorphisms of $\CC$ and $\CD$ respectively, then $c_{x,y}\boxtimes c'_{x,y}$ is the braiding of the tensor product. Correspondingly, if $\CS_{x,y}$ and $\CS'_{x',y'}$ are the  S-matrices of $\CC$ and $\CC'$, then the $S$-matrix of the Deligne product is simply $\tilde \CS_{x\boxtimes x',y\boxtimes y'}=\CS_{x,y}\otimes\CS'_{x',y'}$. If $\theta_x$ and $\theta'_{x'}$ are the ribbon structures on the categories $\CC$ and $\CC'$, then $\tilde\theta_{x\boxtimes x'}=\theta_x\boxtimes \theta'_{x'}$ is the ribbon structure on the product.

\begin{definition}
\label{WittCat}
 Non-degenerate braided fusion categories $\CC$ and $\CC'$ are Witt-equivalent if there exists a braided monoidal equivalence $\CC\boxtimes \CZ(\CA_1)\xrightarrow{\sim} \CC'\boxtimes \CZ(\CA_2)$ for some fusion categories $\CA_1$ and $\CA_2$. The equivalence class of $\CC$ is denoted by $[\CC]$.
\end{definition} 

Lemma $5.3$ of \cite{davydov2010Witt} states that the monoid of the Witt-equivalence classes of non-degenerate braided fusion categories equipped with the inverses $[\CC]^{-1}=[\CC^{rev}]$ is a group.
\begin{definition}
\label{WittGroups}
$\CW$ is the group of Witt-equivalence classes of non-degenerate braided fusion categories.
 $\CW_{un}$ is the subgroup of $\CW$ consisting of the Witt-equivalence classes of pseudo-unitary non-degenerate braided fusion categories. 
\end{definition} 
 Definition \ref{WittGroups} uses pseudo-unitary categories instead of unitary ones, yet it is believed \cite{Kong:2022cpy} that they are equivalent. It is also expected \cite{davydov2010Witt} that the classes of $\CW_{un}$ are well represented by the categories $\CC(\mathfrak{g}_k)$ of the integrable highest weight representations of some Affine Lie algebra $\mathfrak{g}$ at some positive level $k$. These are precisely our categories of interest. 

In Section \ref{TinvCS}, we mentioned that for special categories, $[\CC]^2=1$ in $\CW_{un}$ implies that $\CC$ is $\T$-invariant \cite{davydov2010Witt}. For the reader's reference, we collect the corresponding definitions here. A fusion category $\CC$ is completely anisotropic if the only connected \`etale algebra $A$ in $\CC$ is trivial. Recall, a unital algebra object in a fusion category is an object equipped with a pair of morphisms: the unit map and  multiplication map. An algebra object is called separable if the multiplication, as a map of $A-A$ bimodules, splits. An algebra $A\in \CC$ is called \`etale if it is commutative and separable. Finally, the algebra $A\in \CC$ is connected if $\mbox{dim}_{\IC}\mbox{Hom}_{\CC}(1,A)=1$. 
\bibliographystyle{amsalpha}

\begin{thebibliography}{99}

\bibitem{Aharony:2016jvv}
O.~Aharony, F.~Benini, P.~S.~Hsin and N.~Seiberg,
``Chern-Simons-matter dualities with $SO$ and $USp$ gauge groups,''
JHEP \textbf{02}, 072 (2017),
\href{https://arxiv.org/abs/1611.07874}{math.QA/1611.07874}.

\bibitem{Anderson:1988}
G.~Anderson, G.~Moore,
``Rationality in conformal field theory,''
Commun.Math. Phys. \textbf{117}, 441–450 (1988).

\bibitem{Axelrod1991}
S.~Axelrod, S.~Della Pietra, E.~Witten
``Geometric quantization of Chern-Simons gauge theory,''
J. Differential Geom. \textbf{3}, no. 3, 787-902 (1991).               

\bibitem{Barkeshli:2014cna}
M.~Barkeshli, P.~Bonderson, M.~Cheng and Z.~Wang,
``Symmetry Fractionalization, Defects, and Gauging of Topological Phases,''
Phys. Rev. B \textbf{100}, no.11, 115-147 (2019),
\href{https://arxiv.org/abs/1410.4540}{cond-mat.str-el/1410.4540}.

\bibitem{Barkeshli:2016mew}
M.~Barkeshli, P.~Bonderson, M.~Cheng, C.~M.~Jian and K.~Walker,
``Reflection and Time Reversal Symmetry Enriched Topological Phases of Matter: Path Integrals, Non-orientable Manifolds, and Anomalies,''
Commun. Math. Phys. \textbf{374}, no.2, 1021-1124 (2019),
\href{https://arxiv.org/abs/1612.07792}{cond-mat.str-el/1612.07792}.

\bibitem{Barrett:1993zf}
J.~W.~Barrett and B.~W.~Westbury,
``Spherical categories,''
Adv. Math. \textbf{143}, 357-375, (1999),
\href{https://arxiv.org/abs/hep-th/9310164v2}{hep-th/9310164v2}.

\bibitem{Belov:2005ze}
D.~Belov and G.~W.~Moore, ``Classification of Abelian spin Chern-Simons theories'', \href{https://arxiv.org/abs/hep-th/0505235}{hep-th/0505235}.

\bibitem{berndt1998gauss}
B.~C.~Berndt, R.~J.~Evans, K.~S.~Williams
``Gauss and Jacobi Sums,'' Wiley (1998).

\bibitem{Brochier2018}
A.~ Brochier, D.~ Jordan, and Noah Snyder
``On dualizability of braided tensor categories,''
Compositio Mathematica \textbf{157}, no.3, 435--483 (2021),
\href{https://arxiv.org/abs/1804.07538}{math.QA/1804.07538}.

\bibitem{Bruillard2017}
P.~Bruillard, C.~Galindo, T.~Hagge, S.~H.~Ng, J.~Plavnik, E.~Rowell, and Z.~Wang
``Fermionic modular categories and the 16-fold way,''
Journal of Mathematical Physics \textbf{58}, no.4, 133-164 (2017),
\href{https://arxiv.org/abs/1603.09294}{math.QA/1603.09294}.

\bibitem{Bruillard2020}
P.~Bruillard, C.~Galindo, S.~H.~Ng, J.~Y.~Plavnik, E.~C.~Rowell, and Z.~Wang,  ``Classification of super-modular categories by rank,'' Algebras and Representation Theory, \textbf{23}, 795-809 (2020),
\href{https://arxiv.org/abs/1705.05293}{math.QA/1705.05293}.

\bibitem{Capelli1987}
A.~Capelli, C.~Itzykson, J.~B.~Zuber, ``The A-D-E Classification of Minimal and $A_1^{(1)}$  Conformal Invariant Theories,'' Commun.Math. Phys. \textbf{113}, 1-26 (1987).

\bibitem{davydov2010Witt}
A.~Davydov, M.~Mueger, D.~Nikshych, and V.~Ostrik
``The Witt group of non-degenerate braided fusion categories,''
\href{https://arxiv.org/abs/1009.2117}{math.QA/1009.2117}.

\bibitem{davydov2011structure}
A.~Davydov, D.~Nikshych, and V.~Ostrik
``On the structure of the Witt group of braided fusion categories,''
\href{https://arxiv.org/abs/1109.5558}{math.QA/1109.5558}.

\bibitem{Debray2022}
A.~ Debray, to appear.

\bibitem{Delmastro:2019vnj}
D.~Delmastro and J.~Gomis,
``Symmetries of Abelian Chern-Simons Theories and Arithmetic,''
JHEP \textbf{03}, 006 (2021),
\href{https://arxiv.org/abs/1904.12884}{hep-th/1904.12884}.

\bibitem{Dijkgraaf:1988tf}
R.~Dijkgraaf and E.~P.~Verlinde,
``Modular Invariance and the Fusion Algebra,''
Nucl. Phys. B Proc. Suppl. \textbf{5}, 87-97 (1988).

\bibitem{Dijkgraaf1990}
R.~Dijkgraaf and E.~Witten,
``Topological gauge theories and group cohomology,''
Communications in Mathematical Physics \textbf{129}, no.2, 93-429 (1990).

\bibitem{Drinfeld:2010}
V.~Drinfeld, S.~Gelaki, D.~Nikshych, and V.~Ostrik,
``On braided fusion categories I,''
Selecta Mathematica \textbf{16}, 1-119 (2021),
\href{https://arxiv.org/abs/0906.0620}{math.QA/0906.0620}.

\bibitem{DURFEE1977}
A.~H.~Durfee,
``Bilinear and quadratic forms on torsion modules,''
Advances in Mathematics \textbf{25}, no.2, 133-164 (1977).

\bibitem{Dyson1962}
F.~F.~Dyson,
``The Threefold Way. Algebraic Structure of Symmetry Groups and Ensembles in Quantum Mechanics,''
J. Math. Phys. \textbf{3}, 1199-1215 (1962).

\bibitem{Edie-Michell:2020}
C.~Edie-Michell,
``Auto-equivalences of the modular tensor categories of type $A$, $B$, $C$ and $G$,''
Advances in Mathematics \textbf{402}, (2022), 
\href{https://arxiv.org/abs/2002.03220}{math.RT/2002.03220}.

\bibitem{EM1950}
S.~Eilenberg, S.~MacLane
“Cohomology Theory of Abelian Groups and Homotopy Theory I, II, II, IV” Proceedings of the National Academy of Sciences of the United States of America \textbf{36}, \textbf{36}, \textbf{37}, \textbf{38} (1950).

\bibitem{Elitzur:1989nr}
S.~Elitzur, G.~W.~Moore, A.~Schwimmer, N.~Seiberg
``Remarks on the Canonical Quantization of the Chern-Simons-Witten Theory,'' Nucl. Phys. B, \textbf{326}, 108-134 (1989).

\bibitem{EGNO}
P.~Etingof, S.~Gelaki, D.~Nikshych, and V.~Ostrik,
``Tensor Categories,''
 American Mathematical Society, U.S.A. (2016).

\bibitem{Etingof2012}
P.~Etingof and S.~Gelaki
``Descent and forms of tensor categories,''
International Mathematics Research Notices \textbf{13}, 3040-3063 (2012),
\href{https://arxiv.org/abs/1102.0657}{math.QA/1102.0657}.

\bibitem{Etingof:2005}
P.~Etingof, D.~Nikshych and V.~Ostrik,
``On fusion categories,''
Annals of Mathematics \textbf{646}, 581-642 (2005),
\href{https://arxiv.org/abs/math/0203060}{hep-th/0204148}.


\bibitem{Fidkowski2013}
L.~Fidkowski, X.~Chen and A.~Vishwanath,
``Non-abelian topological order on the surface of a 3d topological superconductor from an exactly solved model,''
Physical Review X no. 3, 041016 (2013),
\href{https://arxiv.org/abs/1305.5851}{cond-mat.str-el/1305.5851}.

\bibitem{Fuchs:2002cm}
J.~Fuchs, I.~Runkel and C.~Schweigert,
``TFT construction of RCFT correlators 1. Partition functions,''
Nucl. Phys. B \textbf{646}, 353-497 (2002),
\href{https://arxiv.org/abs/hep-th/0204148}{hep-th/0204148}.

\bibitem{Fuchs:2013}
J.~Fuchs, C.~Schweigert, A.~ Valentino
``Bicategories for Boundary Conditions and for Surface Defects in 3-d {TFT},''
Communications in Mathematical Physics \textbf{321}, 543--575 (2013),
\href{https://arxiv.org/abs/1203.4568}{hep-th/1203.4568}.

\bibitem{Freed:2011aa}
D.~S.~Freed,
``On Wigner's theorem,''
\href{https://arxiv.org/abs/1112.2133}{math-ph/1112.2133}.

\bibitem{Freed:2009qp}
D.~S.~Freed, M.~J.~Hopkins, J.~Lurie and C.~Teleman,
'`Topological Quantum Field Theories from Compact Lie Groups,''
\href{https://arxiv.org/abs/0905.0731}{math.AT/0905.0731}.

\bibitem{Freed:2012uu}
D.~S.~Freed and G.~W.~Moore,
``Twisted equivariant matter,''
Annales Henri Poincare \textbf{14}, 1927-2023 (2013),
\href{https://arxiv.org/abs/1208.5055}{hep-th/1208.5055}.

\bibitem{Freed2021}
D.~S.~Freed and C.~Teleman,
``Gapped Boundary Theories in Three Dimensions,''
Commun. Math. Phys. \textbf{388}, no.2, 845-892 (2021),
\href{https://arxiv.org/abs/2006.10200}{math.QA/2006.10200}.

\bibitem{Frohlich:2009gb}
J.~Frohlich, J.~Fuchs, I.~Runkel and C.~Schweigert,
``Defect lines, dualities, and generalised orbifolds,''
\href{https://arxiv.org/abs/0909.5013}{math-ph/0909.5013}.

\bibitem{Geiko:2020mdx}
R.~Geiko and G.~W.~Moore,
``Dyson's classification and real division superalgebras,''
JHEP \textbf{04}, (2021),
\href{https://arxiv.org/abs/2010.01675}{hep-th/2010.01675}.

\bibitem{Johnson-Freyd:2020twl}
T.~Johnson-Freyd,
``(3+1)D topological orders with only a $\mathbb{Z}_2$-charged particle,''
\href{https://arxiv.org/abs/2011.11165v1}{hep-th/2010.01675}.

\bibitem{Henriques}
A.~G.~Henriques ,
``What Chern–Simons theory assigns to a point,''
Proceedings of the National Academy of Sciences \textbf{114}, no.51, 13418-13423 (2017),
\href{https://arxiv.org/abs/1503.06254}{math-ph/1503.06254}.

\bibitem{Hopkins2005}
M.~J.~Hopkins and I.~M.~Singer,
``Quadratic functions in geometry, topology, and M theory,''
J. Diff. Geom. \textbf{70}, no.3, 329-452 (2005),
\href{https://arxiv.org/abs/math/0211216}{math.AT/0211216}.

\bibitem{Huan2016}
P.-H.~S.~Huan, J.-H.~ Chen, P. R. S.~ Gomes, T.~Neupert, C.~Chamon, C.~Mudry
``Non-Abelian topological spin liquids from arrays of quantum wires or spin chains,''
Physical Review B \textbf{93}, 20 (2016),
\href{https://arxiv.org/abs/1601.01094}{cond-mat.str-el/1601.01094}.

\bibitem{Hsin:2016blu}
P.~S.~Hsin and N.~Seiberg,
``Level/rank Duality and Chern-Simons-Matter Theories,''
JHEP \textbf{09}, 095 (2016),
\href{https://arxiv.org/abs/1607.07457}{hep-th/1607.07457}.

\bibitem{JOYAL199320}
A.~Joyal and R.~Street
``Braided Tensor Categories,''
Advances in Mathematics \textbf{102}, no.1, 20-78 (1993).

\bibitem{kaidi2021higher}
J.~Kaidi, Z.~Komargodski, K.~Ohmori, S.~Seifnashri and S.~H.~Shao,
``Higher central charges and topological boundaries in 2+1-dimensional TQFTs,''
\href{https://arxiv.org/abs/2107.13091}{hep-th/2107.13091}.

\bibitem{Saulina2011}
A.~Kapustin and N.~Saulina,
``Topological boundary conditions in abelian Chern-Simons theory,''
Nucl. Phys. B \textbf{845}, 393-435 (2011),
\href{https://arxiv.org/abs/1008.0654}{hep-th/1008.0654}.

\bibitem{Kapustin:2010if}
A.~Kapustin and N.~Saulina,
``Surface operators in 3d Topological Field Theory and 2d Rational Conformal Field Theory,''
\href{https://arxiv.org/abs/1012.0911}{hep-th/1012.0911}.

\bibitem{Kawauchi1980}
A.~Kawauchi, S.~Kojima,
``Algebraic classification of linking pairings on 3-manifolds,''
Math. Ann. \textbf{253}, 29–42 (1980).

\bibitem{Kneser1953QuadratischeFU}
M.~V.~Kneser, P.~Dieter,
``Quadratische Formen und Verschlingungsinvarianten von Knoten,''
Mathematische Zeitschrift \textbf{58}, no.1, 376-384 (1953).

\bibitem{Kong2014AnyonCA}
L.~Kong,
``Anyon condensation and tensor categories,''
Nucl. Phys. B \textbf{886}, 436-482 (2014),
\href{https://arxiv.org/abs/1307.8244?context=math}{cond-mat.str-el/1307.8244}.

\bibitem{Kong:2022cpy}
L.~Kong and Z.~H.~Zhang,
``An invitation to topological orders and category theory,''
\href{https://arxiv.org/abs/2205.05565}{cond-mat.str-el/2205.05565}.

\bibitem{Konho1996}
T.~Kohno, T.~Takata,
``Level-rank duality of Witten’s 3-manifold invariants,''
Progress in algebraic combinatorics. \textbf{24}, 243-26 (1996).

\bibitem{Lee:2018eqa}
Y.~Lee and Y.~Tachikawa,
``A study of time reversal symmetry of abelian anyons,''
JHEP \textbf{07}, 090 (2018),
\href{https://arxiv.org/abs/1805.02738}{hep-th/1805.02738}.

\bibitem{Levin2012}
M.~Levin and A.~Stern, 
``Classification and analysis of two-dimensional abelian fractional topological insulators.,''
hysical Review B \textbf{86}, 115-131 (2012),
\href{https://arxiv.org/abs/1205.1244}{cond-mat.str-el/1205.1244}.

\bibitem{Mignard2017}
M.~Mignard and P.~Schauenburg, 
``Modular categories are not determined by their modular data,''
Letters in Mathematical Physics \textbf{111}, no.3, 1-9 (2017),
\href{https://arxiv.org/abs/1708.02796}{math.QA/1708.02796}.

\bibitem{Miranda1984}
R.~Miranda,
``Nondegenerate symmetric bilinear forms on finite abelian 2-groups,''
Transactions of the American Mathematical Society \textbf{284}, 535-542 (1984).

\bibitem{Miranda2009}
R.~Miranda,
``Embeddings of Integral Quadratic Forms.''

\bibitem{Moore:1988qv}
G.~W.~Moore and N.~Seiberg,
``Classical and Quantum Conformal Field Theory,''
Commun. Math. Phys. \textbf{123}, 177 (1989).

\bibitem{Moore:1989vd}
G.~W.~Moore and N.~Seiberg,
``LECTURES ON RCFT,''
In  \emph{Strings '89},Proceedings
of the Trieste Spring School on Superstrings,
3-14 April 1989, M. Green, et. al. Eds. World
Scientific, 1990; Also in
Lee, H.C. (eds) Physics, Geometry and Topology. NATO ASI Series, vol 238. Springer, Boston, MA.
 
 

\bibitem{Moore:1988ss}
G.~W.~Moore and N.~Seiberg,
``Naturality in Conformal Field Theory,''
Nucl. Phys. B \textbf{313}, 16-40 (1989).

\bibitem{Mueger2003}
M.~Mueger,
``From Subfactors to Categories and Topology II. The quantum double of tensor categories and subfactors,''
J. Pure Appl. Alg. \textbf{180}, 159-219 (2003),
\href{https://arxiv.org/abs/math/0111205}{math.QA/0601012}.

\bibitem{Naculich1990}
S.~G.~Naculich, H.~A.~Riggs, and H.~J.~Schnitzer
``Group-level duality in WZW models and Chern-Simons theory,''
Physics Letters B \textbf{246}, no. 3, 417-422 (1990).

\bibitem{Neupert2011}
T.~Neupert, L.~Santoz, S.~Ryu, C.~Chamon, and C.~Mudry
``Fractional topological liquids with time-reversal symmetry and their lattice realization,''
Physical Review B \textbf{84}, no.16, (2011),
\href{https://arxiv.org/abs/1106.3989}{cond-mat.str-el/1106.3989}.

\bibitem{Neupert2011S}
T.~Neupert, L.~Santoz, S.~Ryu, C.~Chamon, and C.~Mudry
``Time-reversal symmetric hierarchy of fractional incompressible liquids,''
Physical Review B \textbf{84}, no.16, (2011),
\href{https://arxiv.org/abs/1108.2440}{cond-mat.str-el/1108.2440}.

\bibitem{Neupert2014}
T.~Neupert, L.~Santoz, S.~Ryu, C.~Chamon, and C.~Mudry
``Wire deconstructionism of two-dimensional topological phases,''
Physical Review B \textbf{90}, no.20, (2014),
\href{https://arxiv.org/abs/1403.0953}{cond-mat.str-el/1403.0953}.

\bibitem{Ng2007}
S.~H.~Ng, P.~Schauenburg,
``Frobenius–Schur indicators and exponents of spherical categories,''
Advances in Mathematics \textbf{70}, no.1, 34–71 (2007),
\href{https://arxiv.org/abs/math/0601012}{math.QA/0601012}.

\bibitem{NG2019}
S.~H.~Ng, A.~Schopieray, Y.~Wang,
``Higher Gauss sums of modular categories,''
Selecta Mathematica \textbf{25}, no.4, 329-452 (2019),
\href{https://arxiv.org/abs/1812.11234}{math.QA/1812.11234}.

\bibitem{ng2020higher}
S.~H.~Ng, E.~ Rowell, Y.~Wang, Q.~Zhang,
``Higher central charges and Witt groups,''
Advances in Mathematics \textbf{404}, (2022),
\href{https://arxiv.org/abs/2002.03570}{math.QA/2002.03570}.

\bibitem{Nikulin1980}
V.~V.~Nikulin,
``Integral symmetric bilinear forms and some of their applications,''
Mathematics of the USSR-Izvestiya \textbf{14}, (1980).

\bibitem{Ostrik2014}
V.~Ostrik, M.~Sun,
``Level-rank duality via tensor categories,''
Commun. Math. Phys. \textbf{326}, 49–61 (2014),
\href{https://arxiv.org/abs/1208.5131}{math-ph/1208.5131}.

 \bibitem{Rowell2021}
E.~Rowell, Y.~Ruan, and Y.~Wang
``The Witt classes of $SO(2r)_{2r}$,''
\href{https://arxiv.org/abs/2107.06746}{math.QA/2107.06746}.

  
\bibitem{Rowell2018}
E.~Rowell, Z.~Wang,
``Mathematics of Topological Quantum Computing,''
Bulletin of the American Mathematical Society
\textbf{55}, no. 2, 183--238 (2017),
\href{https://arxiv.org/abs/1705.06206}{math.QA/1705.06206}.

\bibitem{Sanford2022}
S.~Sanford, ``Fusion Categories over Non-Algebraically Closed Fields'',
PhD thesis.

\bibitem{Seiberg2016}
N.~Seiberg and E.~Witten,
``Gapped Boundary Phases of Topological Insulators via Weak Coupling,''
PTEP \textbf{2016}, no.12, (2016),
\href{https://arxiv.org/abs/1602.04251}{cond-mat.str-el/1602.04251}.

\bibitem{Seifert1933}
H.~Seifert,
``Verschlingungsinvarianten,''
Sitzungsber. Preuß. Akad. Wiss. \textbf{70}, no.26-29, 811-828 (1933).

\bibitem{sertoz2005singular}
A.~Sert{\"o}z,
``Which singular K3 surfaces cover an Enriques surface,''
Proceedings of the American Mathematical Society. \textbf{133}, no.1, 43-50 (2005),
\href{https://arxiv.org/abs/math/0205282}{math.AG/0205282}.

\bibitem{stirling2008abelian}
S.~D.~Stirling,
``Abelian Chern-Simons theory with toral gauge group, modular tensor categories, and group categories,''
\href{https://arxiv.org/abs/0807.2857}{hep-th/0807.2857}.

\bibitem{Taylor}
L.~R.~Taylor,
``Gauss Sums in Algebra and Topology,''
\href{https://arxiv.org/abs/2208.06319}{math.AT/2208.06319}.
 

\bibitem{Wall1963QuadraticFO}
C.~T.~C.~Wall,
``Quadratic forms on finite groups, and related topics,''
Topology \textbf{2}, 281-298 (1963).

\bibitem{Wang2013}
C.~Wang and T.~Senthil
``Boson topological insulators: A window into highly entangled quantum phases,''
Physical Review B, \textbf{87}, no.23, (2013),
\href{https://arxiv.org/abs/1302.6234}{cond-mat.str-el/1302.6234}.
 
\bibitem{Wang2020}
L.~Wang and Z.~Wang,
``In and around Abelian anyon models,''
J. Phys. A \textbf{53}  no.50, (2020),
\href{https://arxiv.org/abs/2004.12048}{math.QA/2004.12048}.

\bibitem{Wigner1959}
E.~P.~Wigner,
``Group Theory and its Application to the Quantum Mechanics of Atomic Spectra,''
Academic Press Inc., New York, (1959).

\bibitem{Witten1989}
E.~Witten,
``Quantum field theory and the Jones polynomial,''
Communications in Mathematical Physics \textbf{121}, no.3, 351-399 (1989).

\bibitem{Witten2003}
E.~Witten,
``SL(2,Z) action on three-dimensional conformal field theories with Abelian symmetry,''
\href{https://arxiv.org/abs/hep-th/0307041}{hep-th/0307041}.

\bibitem{Yetter:1992}
D.~Yetter,
``Framed tangles and a theorem of Deligne on braided deformations of Tannakian categories,''  Deformation Theory and Quantum Groups with Applications to Mathematical Physics, 325–349 (1992)

\end{thebibliography}

\end{document}